\newcommand{\hf}{\widehat{f}}
\newcommand{\tp}{\widetilde{p}}
\newcommand{\COMMENT}[1]{}
\newcommand{\F}{\mathbb{F}}
\newcommand{\Fq}{{\F_q}}
\newcommand{\Fn}[2]{\F_{#1}^{#2}}
\newcommand{\Fqn}{\Fn{q}{n}}
\newcommand{\Fqni}{\Fn{q}{n_i}}
\newcommand{\Fqk}{\Fn{q}{k}}
\newcommand{\N}{\mathbb{N}}
\newcommand{\Real}{\mathbb{R}}
\newcommand{\Comp}{\mathbb{C} }
\newcommand{\D}{\mathcal{D}}
\DeclareMathOperator{\tr}{tr}
\newcommand{\trsp}[1]{{#1}^{\intercal}}
\newcommand{\eqdef}{\mathop{=}\limits^{\triangle}}
\newcommand{\Unif}{\leftarrow}
\newcommand{\ie}{\textit{i.e.}}
\newcommand{\eps}{\epsilon}
\newcommand{\rk}{\text{rk}}
\newcommand{\PPGM}{\textrm{P}_{\textrm{PGM}}}
\newcommand{\esp}{\mathbb{E}}
\newcommand{\psucc}{{p_\textrm{succ}}}
\DeclareMathOperator{\Var}{Var}
\DeclareMathOperator{\negl}{negl}
\newcommand{\ket}[1]{|#1\rangle}
\newcommand{\bra}[1]{\langle#1|}
\newcommand{\ketbra}[2]{|#1\rangle\langle#2|}
\newcommand{\braket}[2]{\langle #1 | #2 \rangle}
\newcommand{\altketbra}[1]{\ketbra{#1}{#1}}
\newcommand{\kb}[1]{\altketbra{#1}}
\newcommand{\norm}[1]{\ensuremath{\lvert  \kern-1pt\lvert #1 \rvert \kern-1pt \rvert}}
\newcommand{\pnorm}[2]{\ensuremath{\lvert  \kern-1pt\lvert #1 \rvert \kern-1pt \rvert_{#2}}}
\newcommand{\abs}[1]{|#1|}
\newcommand{\QDP}{\ensuremath{\mathrm{QDP}}}
\newcommand{\DP}{\mathrm{DP}}
\newcommand{\LWE}{\ensuremath{\mathrm{LWE}}}
\newcommand{\SLWE}{\ensuremath{\mathrm{S-}\ket{\mathrm{LWE}}}}
\newcommand{\SCP}{\mathrm{SCP}}
\DeclareMathOperator{\QFTt}{QFT}
\newcommand{\QFT}[1]{\widehat{#1}}
\newcommand{\ff}{\QFT{f}}
\newcommand{\fg}{\QFT{g}}
\newcommand{\carf}[1]{\chi_{#1}}
\newcommand{\car}[2]{\chi_{#1}\left(#2 \right)}
\newcommand{\Rbra}[1]{\left( #1 \right)}
\newcommand{\mat}[1]{\ensuremath{\boldsymbol{#1}}}
\newcommand{\cv}{{\mat{c}}}
\newcommand{\dv}{\mat{d}}
\newcommand{\ev}{\mat{e}}
\newcommand{\sv}{{\mat{s}}}
\newcommand{\uv}{\mat{u}}
\newcommand{\vv}{\mat{v}}
\newcommand{\xv}{{\mat{x}}}
\newcommand{\yv}{{\mat{y}}}
\newcommand{\zv}{{\mat{z}}}
\newcommand{\zero}{{\mat{0}}}
\newcommand{\Gm}{\ensuremath{\mathbf{G}}}
\newcommand{\Hm}{\ensuremath{\mathbf{H}}}
\renewcommand{\Im}{\ensuremath{\mathbf{I}}}
\newcommand{\Mm}{\ensuremath{\mathbf{M}}}
\newcommand{\Xm}{\ensuremath{\mathbf{X}}}
\DeclareMathOperator{\rank}{rank}
\DeclareMathOperator{\matv}{Mat}
\newcommand{\tW}{\widetilde{W}}
\newcommand{\wpsi}{\widetilde{\psi}}
\newcommand{\wphi}{\widetilde{\phi}}
\newcommand{\OO}[1]{O\Rbra{#1}}
\newcommand{\Th}[1]{\Theta\Rbra{#1}}
\newcommand{\Ac}{{\mathcal{A}}}
\newcommand{\Bc}{{\mathcal{B}}}
\newcommand{\C}{{\mathscr{C}}}
\newcommand{\Ec}{{\mathcal{E}}}
\newcommand{\Hc}{{\mathcal{H}}}
\newcommand{\Tc}{{\mathcal{T}}}
\newcommand{\CC}{\ensuremath{\mathscr{C}}}
\newcommand{\CCp}{{\ensuremath{\CC^\perp}}}
\newcommand{\CCps}{{\ensuremath{(\CCp)^*}}}
\newcommand{\Typ}{{\ensuremath{\Tc_\epsilon^{(n)}}}}
\newcommand{\Typi}{{\ensuremath{\Tc_\epsilon^{(n_i)}}}}
\newcommand{\Typii}{{\ensuremath{\Tc_{\epsilon(i)}^{(n_i)}}}}
\newcommand{\Typc}[1]{{\ensuremath{\Tc_\epsilon^{(#1)}}}}
\newcommand{\Typcc}[2]{{\ensuremath{\Tc_{#2}^{(#1)}}}}
\newcommand{\bTyp}{{\ensuremath{\Bc_\epsilon^{(n)}}}}
\newcommand{\sE}{{\mathscr{E}}}
\newcommand{\sS}{{\mathscr{S}}}
\newcommand{\sT}{{\mathscr{T}}}
\newtheorem{theorem}{Theorem}
\newtheorem{fact}{Fact}
\newtheorem{definition}{Definition}
\newtheorem{lemma}{Lemma}
\newtheorem{proposition}{Proposition}
\newtheorem{corollary}{Corollary}
\newtheorem{claim}{Claim}
\newtheorem{remk}{Remark}
\newtheorem{problem}{Problem}
\renewcommand{\and}{\mbox{ and }}
\newcommand\lp{\left (}
\newcommand\rp{\right )}
\newcommand\lc{\left [}
\newcommand\rc{\right ]}
\title{The Quantum Decoding Problem : Tight Achievability Bounds and Application to Regev's Reduction}
\author{Agathe Blanvillain$^{1}$ and Andr\'e Chailloux$^{1}$ and Jean-Pierre Tillich$^{1}$\\
Inria de Paris, {\sf $\{$agathe.blanvillain,andre.chailloux,jean-pierre.tillich$\}$@inria.fr}}
\date{}
\begin{document}
	\maketitle
	\begin{abstract}

	We consider the quantum decoding problem. It consists in recovering a codeword given a superposition of noisy versions of this codeword. By measuring the superposition, we get back to the classical decoding problem. It appears for the first time in Chen, Liu and Zhandry's work showing a quantum advantage for the Short Integer Solution (SIS) problem for the $l_\infty$ norm. In a recent paper, Chailloux and Tillich proved that when we have a noise following a Bernoulli distribution, the quantum decoding problem can be solved in polynomial time and is therefore easier than classical decoding for which the best known algorithms have an exponential complexity. They also give an information theoretic limit for the code rate at which this problem can be solved which turns out to be above the Shannon limit.
	
	In this paper, we generalize the last result to all memoryless noise models. We also show similar results in the rank metric case which corresponds to a noise model which is not memoryless. We analyze the Pretty Good Measurement, from which we derive an information theoretic limit for this problem. By using the algorithm for the quantum decoding problem together with Regev's reduction, we derive a quantum algorithm sampling codewords from the dual code according to a probability distribution which is the dual of the original noise. It turns out that at the information theoretic limit, we get the most likely nonzero codeword of the dual code. When the distribution is a decreasing function of the weight, we find minimal nonzero codewords. Note that Regev’s reduction used together with classical decoding is much less satisfying since it is not able to output those minimum weight codewords.

	\end{abstract}
	\newpage
	\tableofcontents
	\newpage
	\section{Introduction}
	\subsection{State of the art}
\paragraph{Post quantum cryptography.}
Post-quantum cryptography's aim is to deploy cryptographic schemes that are secure against quantum computers. Some of the main proposals for post-quantum cryptography are based on variants of decoding problems, be they lattice based or code based. This is clear in code-based cryptography where the overwhelming majority of proposals is  based on decoding either in the Hamming or in the rank metric. In lattice based cryptography, the Learning With Errors ($\LWE$) problem \cite{R05} is ubiquitous and is nothing but decoding a linear code over a large alphabet subject to a
discrete Gaussian noise.
Most of quantum algorithms \cite{B10,KT17a,L16,CL21} for attacking these post-quantum cryptographic schemes  are quantum extensions of the best classical algorithms, using simple quantum algorithmic tools such as Grover's algorithm~\cite{Gro96}, or more advanced tools such as quantum walks~\cite{AAKZ01,MNRS11}. While these algorithms are important for assessing the security of post-quantum cryptographic schemes, they intrinsically fail to provide exponential speedups or indications of strong quantum advantage.

\paragraph{Regev's reduction.} Recently, a new family of quantum algorithms was developed inspired by Regev's reduction~\cite{R05}. Originally, Regev's reduction is a quantum complexity reduction from the Learning with Errors problem to the Short Integer Solution (SIS) problem. More precisely, a quantum algorithm for the SIS problem is devised from a (classical or quantum) algorithm for the LWE problem. This can be rephrased in coding theoretic terms. With the help of a decoding algorithm operating in the regime where there is generally at most a unique solution, an algorithm outputting a rather short dual codeword is obtained, where short is measured in terms of the Euclidean distance. Here the code $\C$ which is decoded is a generic linear code and so is the dual code $\C^\bot$, where we recall that
$$ \C = \{\sv \Gm : \sv \in \F_q^k\} \subseteq \F_q^n \quad ; \quad \C^\bot = \{\yv \in \F_q^n : \forall \cv \in \C, \ \yv \cdot \cv = 0\}.$$

More precisely Regev's algorithm reduces the short codeword problem 

\begin{problem}[Short Codeword Problem $\SCP(q, n, k, w)$]\mbox{ }\\
	Take $q, n, k, w$ positive integers. $\Hm$ is chosen uniformly at random in $\F_q^{(n-k) \times n}$. \\
	{\bf Given: }$\Hm$, \\
	{\bf Goal:} find $\cv \in \Fqn\backslash\lbrace 0 \rbrace$ which satisfies $\Hm\trsp{\cv} = 0$ and $\abs{\cv} \leq w$.
\end{problem}
to the decoding problem described by
\begin{problem}[Decoding Problem $\DP(q, n, k, p)$]
	Take $q, n, k$ positive integers and $p$ a probability distribution over $\F_q^n$.
	Take a generator matrix $\Gm$ in  $\F_q^{k\times n}$, a codeword $\cv =\uv\Gm$ of the code $\C$ generated by $\Gm$ where 
	$\uv$ is chosen uniformly at random in $\Fqk$. \\
	{\bf Given:}   $(\Gm, \cv+\ev)$, where $\ev$ is sampled according to $p$, \\
	{\bf Goal:} recover $\cv$.
\end{problem}

Here $\abs{\cv}$ is the weight of the codeword $\cv = c_1,\dots,c_n$. In Regev's reduction it is the Euclidean weight $|\cv| = \sqrt{\sum_{i=1}^n c_i^2}$ where the sum is performed over the integers and $c_i$ is taken in the integer interval $\left\{- \frac{q-1}{2},\cdots,\frac{q-1}{2} \right\}$ (in the case where $q$ is odd and $\left\{- \frac{q}{2},\cdots,\frac{q-2}{2} \right\}$ otherwise). Note that in both cases we have a random linear code that we either want to decode or for which we  look for a short codeword. The originality of this reduction is that it uses a decoding algorithm which works with a decoding radius where the solution is unique (this can be viewed as the injective regime) to get a short codeword for a weight $w$ where there are exponentially many codewords of this weight (this can be viewed as the surjective regime). It took a while to get an analogue of this reduction for the Hamming metric \cite{DRT24} where to get a non trivial reduction, it is mandatory
in certain parameter regimes to consider a decoding algorithm which works for a larger decoding radius where the solution of the decoding algorithm is typically unique but not always unique.

Regev's reduction makes crucially use of the quantum Fourier transform (QFT). It maps the quantum state $\sum_{\xv \in \F_q^n} f(\xv) \ket{\xv}$ to $\sum_{\xv \in \F_q^n} \hf(\xv) \ket{\xv}$ where $\hf$ is the classical Fourier transform of $f$. The latter is defined by using
the characters $\chi_\yv$ of $\F_q^n$ (see \S \ref{ss:notation}), the Fourier transform over $\Fqn$ being defined by $\hat{f}(\yv) \eqdef \frac{1}{\sqrt{q^n}} \sum_{\xv \in \Fqn}
\chi_{\yv}(\xv) f(\xv)$. For this we first fix a function $f : \F_q^n \rightarrow \mathbb{C}$ with $\norm{f}_2 = 1$ so that $|f|^2$ will serve as a probability distribution of the errors we feed in the decoder. Regev's reduction consists in performing the following operations:
\begin{align*}
	&\text{Initial state preparation:} &  & \quad \frac{1}{\sqrt{\abs{\C}}}\sum_{\cv \in \C}\sum_{\ev \in \F_q^n}f(\ev)\ket{\cv}\ket{\ev} \\
  &\text{adding $\cv$ to $\ev$:} & \mapsto &  \quad \frac{1}{\sqrt{|\C|}} \sum_{\cv \in \C} \sum_{\ev \in \F_q^n}  f(\ev) \ket{\cv}\ket{\cv+\ev}  \\
   &\!\!\!\begin{array}{l} \text{recovering $\cv$ from decoding $\cv+\ev$ and }\\
    \text{substracting it to the 1st register to erase it} \end{array} & \mapsto  & \quad 
  \ket{\zero} \frac{1}{\sqrt{|\C|}} \sum_{\cv \in \C, \ev \in \F_q^n} f(\ev) \ket{\cv + \ev} \\
&\text{apply the QFT:} & \mapsto & \quad \frac{1}{\sqrt{Z}} \sum_{\dv \in \C^\perp} \hat{f}(\dv) \ket{\dv}  \\
&\text{measuring the whole state:} & \mapsto & \quad \ket{\dv} \;\;\text{with    $\dv \in \C^\perp$.}
\end{align*}
The idea of Regev's algorithm is to make use of a classical decoder recovering $\cv$ (and therefore $\ev$) from $\cv + \ev$.
To be able to perform this operation, we are necessarily in the aforementioned injective regime of decoding. Such an algorithm, if it exists, allows us to ``erase" the $\ket{\cv}$ register when applied coherently.
The resulting state $\frac{1}{\sqrt{|\C|}} \sum_{\cv \in \C, \ev \in \F_q^n} f(\ev) \ket{\cv + \ev}=\sum_{\yv \in \F_q^n} \alpha_\yv \ket{\yv}$ is periodic in the sense that its amplitudes $\alpha_\yv$ clearly satisfy $\alpha_{\yv} = \alpha_{\yv+\cv}$ for any $\cv$ in $\C$. This explains why when we apply the
quantum Fourier transform to it, we only get non zero amplitudes on the orthogonal space $\C^\bot$. Measuring the final state will give a dual codeword according to a probability distribution proportional to $|\hf|^2$ restricted on the dual code. If $|f(\ev)|^2$ concentrates its weight on small $\ev$'s, then it turns out that in many examples of interest $|\hat{f}(\dv)|^2$ also concentrates on rather small $\dv$'s. This gives a way to sample
codewords in $\C^\bot$ of rather small weight/norm.

It is worthwhile to note that by using the same decoding algorithm with a slight twist in Regev's reduction, we can  construct $\sum_{\yv  \in \vv+\C^\perp} \hf(\yv) \ket{\yv}$ for any $\vv \in \F_q^n$. Indeed, Regev's reduction is slightly modified by starting instead of the state $\frac{1}{\sqrt{|\C|}} \sum_{\cv \in \C, \ev \in \F_q^n} f(\ev) \ket{\cv}\ket{\cv + \ev}$ by a slight twist on the state obtained by tweaking the amplitudes with characters as follows $\frac{1}{\sqrt{|\C|}} \sum_{\cv \in \C, \ev \in \F_q^n} \chi_\cv(-\vv)f(\ev) \ket{\cv}\ket{\cv + \ev}$. From this state, we perform now the same set of operations as in Regev's reduction:
\begin{align}
  &\frac{1}{\sqrt{|\C|}} \sum_{\cv \in \C, \ev \in \F_q^n} \chi_\cv(-\vv)f(\ev) \ket{\cv}\ket{\cv + \ev}  \\ \xrightarrow{\textrm{Decoder}} &\frac{1}{\sqrt{|\C|}} \sum_{\cv \in \C, \ev \in \F_q^n} \chi_\cv(-\vv)f(\ev) \ket{\cv + \ev}\\  \xrightarrow{\QFTt} &\frac{q^n}{\sqrt{|\C|}} \sum_{\yv \in \vv+\C^\perp} \hf(\yv)\ket{\yv}. \label{eq:VarRegev}
\end{align}
By measuring the resulting state $\sum_{\yv  \in \vv+\C^\perp} \hf(\yv) \ket{\yv}$ we obtain an element in $\vv + \C^\perp$ such that
$\yv$ is relatively small. In other words, we solve the variant of the decoding problem for the dual code $\C^\perp$ for which we just have to find for any word $\vv$ of the ambient space
a sufficiently close codeword. This is given here by
\begin{problem}[Close Codeword Finding in the surjective regime (or finding a close enough codeword) $\DP(q,n,k,w)$]
  $q, n, k, w$ are positive integers such that $k< n$. $\Hm$ is chosen uniformly at random in $\F_q^{(n-k) \times n}$ and $\yv$ uniformly at random in
  $\F_q^n$\\
	{\bf Given: }$\Hm$, $\yv$,\\
	{\bf Goal:} find $\cv \in \Fqn\backslash\lbrace 0 \rbrace$ which satisfies $\Hm\trsp{\cv} = 0$ and $\abs{\cv-\yv} \leq w$.
\end{problem}
It generally appears like that in rate distortion theory, but in this case the code is generally not chosen uniformly at random but in a family where the problem of finding a close enough codeword can be easily solved usch as for instance convolutional codes, polar codes or LDGM codes. It might be also called ``decoding in the surjective regime'', where we are not asked to find the closest codeword, but a close enough codeword. In other words, Regev's reduction can be tweaked to transform a classical decoding algorithm working in the injective regime into a quantum decoding algorithm working in the surjective regime. Recently this approach has been applied on structured code families such as Reed-Solomon codes
\cite{JSW+25,CT25a} for which there exist efficient
decoding algorithms in the injective regime but not in the surjective regime. 
This approach yields a polynomial time algorithm for decoding the dual Reed Solomon code (which is also a Reed-Solomon code in that case since it is chosen as having full support) in the surjective regime. For suitable choices of the function $f$ this seems
to yield a quantum polynomial time algorithm for decoding in cases where the best known classical algorithms take exponential time, thus yielding a problem which is a good candidate for quantum advantage.

\paragraph{Improving Regev's reduction by quantum decoding.}
In principle, this reduction gives a quantum algorithm for producing in a code low weight codewords. As a rule of thumb, the greater the decoding radius is for the decoding algorithm used for
erasing the first register, the lower the weight of the codewords is after measuring in the last step. However, this reduction is somewhat unsatisfactory in the sense that even if we take the
best decoding algorithm for the Hamming distance, we are far for obtaining the minimum (Hamming) weight codewords in the dual code \cite{DRT24}. There has been a recent trend to improve this reduction
by a slight variation of the step consisting in erasing the first register \cite{CLZ22}. Indeed, instead of trying to recover $c$ from $c+e$, the new approach is to observe that the state at the second step can be written as
$$
\frac{1}{\sqrt{|\C|}} \sum_{\cv \in \C} \sum_{\ev \in \F_q^n}  f(\ev) \ket{\cv}\ket{\cv+\ev} = \frac{1}{\sqrt{|\C|}} \sum_{\cv \in \C} \ket{\cv}\ket{\psi_\cv},\;\text{ where }
\ket{\psi_{\cv}} \eqdef \sum_{\ev \in \F_q^n}  f(\ev)\ket{\cv+\ev}.
$$
$\ket{\psi_{\cv}}$ can be viewed as a superposition of all noisy versions of the codeword $\cv$ and to recover $\cv$ and then subtract it to erase the first register, we might as well recover
$\cv$ from this noisy superposition $\ket{\psi_{\cv}} = \sum_{\ev \in \F_q^n}  f(\ev)\ket{\cv+\ev}$. In other words we have to solve the following {\em quantum decoding problem}
(which is called $\SLWE$ in \cite{CLZ22})
\begin{problem}[quantum decoding problem $\text{QDP}(q, n, k, f)$]
	Take $q, n, k$ positive integers and $f : \F_q^n \rightarrow \mathbb{C}$ so that $\norm{f}_2 = 1$.
	
	Take $\Gm \Unif \F_q^{k\times n}$, $\uv \Unif \Fqk$, and $\cv = \uv\Gm$, $\ket{\psi_\cv} = \sum_{\ev \in \Fqn}f(\ev)\ket{\cv+\ev}$.
	
	Given $(\Gm, \ket{\psi_\cv})$, the goal is to recover $\cv$.
\end{problem}
Note that this problem reduces to the classical decoding problem $\DP(q,n,k,p)$ where $p(\xv) = |f(\xv)|^2$ by measuring the quantum input state. However, this quantum decoding problem might turn out to be intrinsically easier than
classical decoding. Indeed, for an $f$ which has a product structure $f(\ev)= \prod_{i=1}^n g(e_i)$, Chen, Liu and Zhandry showed in \cite{CLZ22} that one can use a well chosen quantum measurement on each qudit of $\ket{\psi_\cv}$ (note that in this case $\ket{\psi_\cv}$ has the form
$\bigotimes_{i=1}^n \sum_{e_i \in \Fq} g(e_i) \ket{c_i + e_i}$) yielding partial information on the symbols $c_i$ of the codeword $\cv$. This in turn can be used together with the Arora-Ge algorithm
\cite{AG11} to recover $\cv$. With a well chosen function $g$ which is such that its Fourier transform $\hat{g}$ has support included in $\{-\frac{q-1}{2}+c,\cdots,\frac{q-1}{2}-c\}$ for some constant $c$, one obtains
after performing the rest of Regev's reduction (namely applying the QFT after erasing the first register and measuring at the end) codewords $\yv$ in $\C^\perp$ which are of infinite norm
$\norm{\yv}_\infty \leq \frac{q-1}{2}-c$. It is then shown in~\cite{CLZ22} that this results for constant $c$ in a quantum polynomial time algorithm performing this task of producing rather low weight codewords
for the infinite norm in a regime of parameters for which all classical algorithms known at that time had subexponential complexity. In other words, this gave an excellent candidate for providing a quantum advantage. This research thread has been further explored in \cite{CHLLT23,DFS24}. Note that very recently a new classical algorithm was devised \cite{KOW25} for finding low weight codewords for the infinite norm. It outputs in polynomial a solution of this problem for a slightly larger set of parameters for which the polynomial quantum algorithm of \cite{CLZ22} works, thus showing that this quantum algorithm does not yield a quantum speedup anymore. In any case, this whole thread of research has definitely proved to be fruitful for improving significantly the state of the art for finding short codewords for the infinite norm.

Understanding in depth the quantum decoding problem is therefore a very important question. 
Probably the most fundamental  issue in this respect is to ask is for a given family $(f_n)_{n \geq 1}$ of noise functions what is the largest achievable
 rate $R$ for which as soon as the rate $\frac{k}{n}$  is below $R$ there exists a  quantum algorithm (of potentially unbounded complexity) that solves this problem with high probability. The authors of~\cite{CT24} give the 
largest achievable rate in the case $f$ corresponds to a $q$-ary symmetric channel, namely when
$f(\ev) = \sqrt{(1-p)^{n - |\ev|}\left(\frac{p}{q-1}\right)^{|\ev|}}$ for some $p$ which is the crossover probability of the $q$-ary symmetric channel.
In this work, we generalize the above result to any product error function $f_n = g^{\otimes n}$ giving generic achievability and matching non-achievability bounds. Contrarily to the non-achievability results of \cite{CT24} which hold only for  random codes, we prove the relevant non-achievability result for {\em any code}.  While our work share some techniques with~\cite{CT24}, we use more information theoretic tools which allow us to cleanly phrase our results in terms of the $q$-ary entropy of the function $|\hf|^2$. We now present our contributions more concretely.

\subsection{Contributions}
\paragraph{The capacity of the classical-quantum channel.}
The mapping 
\begin{eqnarray*}
\alpha \in \Fq &\rightarrow& \ket{\psi_\alpha} \eqdef \sum_{u \in \F_q} g(u) \ket{\alpha + u}\\
\cv = (c_1,\cdots,c_n) & \rightarrow & \ket{\psi_{\cv}} = \ket{\psi_{c_1}}\otimes \cdots \otimes \ket{\psi_{c_n}}
\end{eqnarray*}
 for $g$ of norm $1$ can be considered as a classical-quantum (c-q) pure state channel \cite[\S 5.1]{H12}. The maximum rate at which we can transmit information on such a channel is given by the Holevo capacity $C_\chi$ of the channel (see \cite[Theorem 5.4]{H12}) which in the case of our pure state channel takes the form
\begin{equation}\label{eq:Holevo} C_\chi = \sup_{p_\alpha} S\left(\sum_{\alpha \in \F_q} p_\alpha \rho_\alpha \right)
\end{equation}
 where $p_\alpha$ is a probability distribution over $\F_q$, $\rho_\alpha \eqdef \ketbra{\psi_\alpha}{\psi_\alpha}$ and 
$S(\rho)$ is the von Neumann entropy of the quantum state $\rho$ (see \S \ref{ss:quantum_information}). It is readily seen (see \S \ref{ss:preliminaries}) that the Holevo capacity of this c-q channel is given by 
\begin{equation}
\label{eq:capacity}
C_\chi = H_q(|\fg|^2),
\end{equation}
where $H_q$ is the $q$-ary entropy (see Definition~\ref{def:q-entropy} for the definition of the $q$-ary entropy $H_q$ if needed).
This follows on the spot by noticing that the capacity of this c-q channel is the same as the one that maps $\alpha$ to the quantum Fourier transform $\QFT{\ket{\psi_\alpha}}$ of $\ket{\psi_\alpha}$. It is easy to see that
the Holevo capacity of this channel is nothing but $S\left(\frac{1}{q}\sum_{\alpha \in \F_q} \rho'_\alpha\right)$ where $\rho'_\alpha \eqdef \ketbra{\QFT{\psi_\alpha}}{\QFT{\psi_\alpha}}$. The matrix $\frac{1}{q}\sum_{\alpha \in \F_q} \rho'_\alpha$ is nothing but the diagonal matrix with entries the $|\fg(\alpha)|^2$. This explains \eqref{eq:capacity}.
Therefore, $H_q(|\fg|^2)$ is an upper-bound on the achievability rate for our decoding problem, since the issue in our case is whether or not random linear codes achieve the Holevo capacity. 

\paragraph{Achievability result for random linear codes.}
This turns out to be the case and the fact that random linear codes achieve the Holevo capacity of this c-q channel basically comes from the fact that the supremum in \eqref{eq:Holevo} is 
attained for a uniform distribution $p_\alpha$, \ie\  the Holevo capacity $C_\chi$ of this c-q channel is nothing but the symmetric Holevo capacity of the channel that is
\begin{eqnarray*}
C_\chi & = & S(\rho),\;\text{where}\\
\rho & \eqdef & \frac{1}{q} \sum_{\alpha \in \F_q} \rho_\alpha.
\end{eqnarray*}
It turns out in our case that the optimal decoding strategy for linear codes (meaning here the one that maximizes the average probability of success) is given by the Pretty Good Measurement (PGM) which is a generic measurement for quantum state discrimination. This measurement was already studied in~\cite{CT24} in the case $f$ corresponded to a $q$-ary symmetric channel. We generalize this result to treat the case of a general function $f$ which shows that the Holevo capacity of this channel is indeed attained by linear codes
\begin{restatable}{theorem}{tractability}\label{Theorem:Tractability}\label{thm:achievability}
	Let $q \ge 2$ be a prime power, $R \in (0,1)$ and $g : \F_q \rightarrow \mathbb{C}$ such that $\norm{g}_2 = 1$. If $R < H_q(|\fg|^2)$ then there exists a quantum algorithm that solves $\QDP(q,n,\lfloor Rn \rfloor, g^{\otimes n})$ with probability $1 - \negl(n)$.
\end{restatable}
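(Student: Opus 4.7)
The plan is to use the Pretty Good Measurement (PGM) associated to the random linear code, restricted to the $\epsilon$-typical subspace of the single-letter averaged state, as the quantum decoder, and to bound its average error probability via Hayashi--Nagaoka together with a Fourier-basis computation that exploits the linearity of the code.

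First I rewrite everything in the Fourier basis. Applying the QFT coordinatewise maps $\ket{\psi_\cv} = \sum_\ev g^{\otimes n}(\ev)\ket{\cv+\ev}$ to $\ket{\hat\psi_\cv} = \sum_\yv \hat g^{\otimes n}(\yv)\,\chi_\yv(\cv)\,\ket{\yv}$, so the single-letter average $\bar\rho := \tfrac{1}{q}\sum_{\alpha\in\F_q}\ketbra{\psi_\alpha}{\psi_\alpha}$ is, in this basis, diagonal with entries $|\hat g(y)|^2$; hence $S(\bar\rho) = H_q(|\hat g|^2)$. Let $T_n$ be the projector onto the classical $\epsilon$-typical set $\mathcal{T}_\epsilon$ associated to the product distribution $\prod_i |\hat g(y_i)|^2$. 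The quantum AEP yields $\text{Tr}(T_n\bar\rho^{\otimes n}) \geq 1-\epsilon_n$ with $\epsilon_n\to 0$, and each diagonal entry of $T_n\bar\rho^{\otimes n}$ is at most $q^{-n(H_q(|\hat g|^2)-\epsilon)}$.

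Next, set $\rho_\cv := \ketbra{\psi_\cv}{\psi_\cv}$ and $S_\cv := T_n \rho_\cv T_n$, and let $\{M_\cv\}_{\cv\in\C}$ be the PGM built from these $S_\cv$. The Hayashi--Nagaoka inequality gives, for the true codeword $\cv$,
\begin{align*}
1 - \text{Tr}(M_\cv \rho_\cv) \;\leq\; 2\,\text{Tr}\!\left((\Im - T_n)\rho_\cv\right) \;+\; 4\sum_{\cv' \in \C,\,\cv' \neq \cv} \text{Tr}(S_{\cv'}\rho_\cv).
\end{align*}
Taking expectation over $\mat{G}\Unif\F_q^{k\times n}$ and $\uv\Unif\F_q^k$, the first term is $2(1-\text{Tr}(T_n\bar\rho^{\otimes n}))\leq 2\epsilon_n$. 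For the cross terms, direct expansion in the Fourier basis yields
\begin{align*}
\text{Tr}(T_n\rho_{\cv'}T_n\rho_\cv) \;=\; \sum_{\yv,\yv'\in\mathcal{T}_\epsilon}|\hat g^{\otimes n}(\yv)|^2 |\hat g^{\otimes n}(\yv')|^2\,\chi_{\yv-\yv'}(\cv-\cv').
\end{align*}
Since $\cv-\cv' = (\uv-\uv')\mat{G}$ with $\uv-\uv'\neq 0$, character orthogonality over uniform $\mat{G}$ produces $\esp[\chi_{\yv-\yv'}(\cv-\cv')] = \mathbf{1}[\yv=\yv']$, hence
\begin{align*}
\esp[\text{Tr}(T_n\rho_{\cv'}T_n\rho_\cv)] \;\leq\; \sum_{\yv\in\mathcal{T}_\epsilon}|\hat g^{\otimes n}(\yv)|^4 \;\leq\; q^{-n(H_q(|\hat g|^2)-\epsilon)}.
\end{align*}
Summing over the $|\C|-1 \leq q^{Rn}$ other candidates yields a total bound of $4\,q^{-n(H_q(|\hat g|^2)-R-\epsilon)}$, which is $\negl(n)$ whenever $R < H_q(|\hat g|^2)$ and $\epsilon$ is chosen small enough. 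A standard averaging argument then supplies a specific $\mat{G}$ for which the PGM succeeds with probability $1-\negl(n)$, giving the claimed algorithm.

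The hardest step is the cross-term Fourier computation: one needs simultaneously the typicality spectral bound on $T_n\bar\rho^{\otimes n}$ and the exact character-sum identity that makes $(\rho_\cv,\rho_{\cv'})$ behave under the $\mat{G}$-average as if the two codewords were independent, despite the linear dependence structure of $\C$. Everything else reduces to quantum AEP for $\bar\rho^{\otimes n}$ and the standard Hayashi--Nagaoka bookkeeping, with careful tracking of $q$-ary logarithms so that the threshold matches precisely $H_q(|\hat g|^2)$.
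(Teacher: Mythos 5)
Your argument is correct in essence, but it follows a genuinely different route from the paper. The paper's proof starts from an exact closed form for the PGM success probability, $\PPGM = \frac{1}{q^k}\bigl(\sum_{\sv \in \Fqn/\C^\perp}Z_\sv\bigr)^2$ with $Z_\sv = \bigl(\sum_{\ev \in \sv+\C^\perp}|\hf(\ev)|^2\bigr)^{1/2}$ (Proposition~\ref{prop:PPGM}), then truncates each $Z_\sv$ to the typical set, applies a second-moment (Bienaym\'e--Tchebychev) concentration argument for every coset $\sv$, and finishes with Jensen. You instead invoke the Hayashi--Nagaoka operator inequality with the typical projector $T_n$ in the role of the packing-lemma cut-off, and replace the usual pairwise-independence estimate by an exact character-sum identity over the random generator matrix $\Gm$: for $\uv \neq \uv'$, averaging $\chi_{(\uv-\uv')\Gm}(\yv-\yv')$ over $\Gm$ kills every $\yv\neq\yv'$, so the cross term collapses to $\sum_{\yv \in \Typ}|\fg^{\otimes n}(\yv)|^4 \le q^{-n(H_q(|\fg|^2)-\epsilon)}$. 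This is sound; the linearity and uniformity of $\Gm$ give exactly the pairwise independence (in distribution) that the packing lemma needs, and the Fourier basis makes $\bar\rho^{\otimes n}$ diagonal so the classical AEP does the rest. What you gain is modularity: your proof is the textbook HSW achievability argument instantiated for a pure-state c-q channel, and nothing is specific to the PGM beyond Hayashi--Nagaoka. What the paper's approach gains is an exact handle on the untruncated PGM and, crucially, the concentration lemmas (Lemmas~\ref{lem:concentrationZseps}, \ref{lem:concentrationZsepsprime}) that are reused verbatim in the Regev-reduction analysis of Section~4; your typical-projected measurement would not plug in there directly.

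Two small points to tidy up. First, with $S_\cv := T_n\rho_\cv T_n$, Hayashi--Nagaoka gives $1-\tr(M_\cv\rho_\cv) \le 2\tr\bigl((\Im - S_\cv)\rho_\cv\bigr) + 4\sum_{\cv'\neq\cv}\tr(S_{\cv'}\rho_\cv)$; since $\tr(S_\cv\rho_\cv)=(\tr(T_n\rho_\cv))^2$ and $1-x^2\le 2(1-x)$, the first term is $\le 4\tr((\Im-T_n)\rho_\cv)$, not $2\tr((\Im-T_n)\rho_\cv)$ as written --- a harmless constant, but the intermediate step should appear. Second, the closing "standard averaging argument supplies a specific $\Gm$" is unnecessary and slightly misstates the target: $\QDP(q,n,\lfloor Rn\rfloor, g^{\otimes n})$ already averages over $\Gm$ and $\uv$, and the algorithm receives $\Gm$ as input, so your bound on $\esp_{\Gm,\uv}[\text{error}]$ is already exactly the quantity to be shown negligible --- no derandomization is wanted or needed.
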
 
This theorem applies to a memoryless channel, but it turns out that a more general result is true which applies to more general noise models, namely Theorem \ref{thm:general} and Theorem \ref{thm:achievability}
is only a corollary of this more general result.

\paragraph{Non-achievability result in the general case.}
The fact that the Holevo capacity of the c-q channel is equal to $H_q(|\fg|^2)$ already shows  that this achievability result is really tight and that it holds in a more general setting: it does not need the generator matrix of the linear code to be chosen uniformly at random, it does not even need the code to be linear. To state this impossibility result, we are going to define a more general QDP problem and will overload the notation as follows
\begin{problem}[quantum decoding problem $\text{QDP}(\CC, f)$]
Take a subset $\CC$ of $\F_q^n$ and $f : \F_q^n \rightarrow \mathbb{C}$ so that $\norm{f}_2 = 1$.
We choose a codeword $\cv$ uniformly at random in $\CC$  and let  $\ket{\psi_\cv} = \sum_{\ev \in \Fqn}f(\ev)\ket{\cv+\ev}$.
	Given $(\CC, \ket{\psi_\cv})$, the goal is to recover $\cv$.
\end{problem}

This problem is more general than the previous one and even in this case if we choose the code in the best possible way we can not avoid 
the following non-achievability result.
\begin{restatable}{theorem}{intractability}\label{thm:nonachievability}
	Let $g : \F_q \rightarrow \mathbb{C}$ with $\norm{g}_2 = 1$.  Let $R$ be a fixed constant in $(0,1)$ such that $R > H_q(|\fg|^2)$. For any quantum algorithm and any code $\CC$ with $q^{Rn}$ codewords the probability $\psucc$ to solve $\QDP(\CC,g^{\otimes n})$ satisfies
	$$ \psucc \le q^{n(H_q(|\fg|^2) - R + \frac{1}{n^{1/3}})} + \negl(n) = \negl(n).$$
\end{restatable}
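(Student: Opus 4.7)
The plan is to reduce the problem to a dimension-counting argument after a quantum Fourier transform. Since QFT is unitary, it preserves the optimal distinguishing probability of $\Cbra{\ket{\psi_\cv}}_{\cv\in\CC}$, and, as in the derivation preceding~\eqref{eq:capacity}, one has $\QFT{\ket{\psi_\cv}} = \sum_{\mat{w}\in\Fqn}\chi_{\mat{w}}(\cv)\,\fg^{\otimes n}(\mat{w})\,\ket{\mat{w}}$. In the Fourier basis the amplitudes have moduli $\sqrt{p^{(n)}(\mat{w})}$ with $p\eqdef |\fg|^2$, and only the phases depend on $\cv$. The aim is to show that essentially all the weight of these states lives in the typical subspace $\mathrm{span}\Cbra{\ket{\mat{w}} : \mat{w}\in T}$, where
$$ T \eqdef \Cbra{\mat{w}\in\Fqn \ :\ \abs{-\tfrac{1}{n}\log_q p^{(n)}(\mat{w}) - H_q(p)}\le n^{-1/3}}.$$
Standard AEP bookkeeping gives $|T|\le q^{n(H_q(p)+n^{-1/3})}$, while Hoeffding applied to the i.i.d.\ bounded variables $-\log_q p(w_i)$ (restricted to $\mathrm{supp}(p)$, on which they are bounded by a constant depending only on $g$) yields $p^{(n)}(T^c)\le 2\exp(-c\,n^{1/3})=\negl(n)$. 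Once the states are confined to a $|T|$-dimensional subspace and $|\CC|=q^{Rn}$, a POVM dimension bound will force $\psucc\lesssim |T|/|\CC|$.

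Concretely, let $\Pi_T$ be the projector onto the typical subspace and write $\QFT{\ket{\psi_\cv}} = \ket{A_\cv} + \ket{B_\cv}$ with $\ket{A_\cv}\eqdef\Pi_T\QFT{\ket{\psi_\cv}}$. A crucial observation is that since $|\chi_{\mat{w}}(\cv)|=1$, the atypical weight $\norm{B_\cv}^2 = p^{(n)}(T^c)$ is the same for every $\cv$ and is negligible. For any POVM $\Cbra{E_\cv}_{\cv\in\CC}$, expand
$$\bra{\QFT{\psi_\cv}}E_\cv\ket{\QFT{\psi_\cv}} \le \bra{A_\cv}E_\cv\ket{A_\cv} + 2\abs{\bra{A_\cv}E_\cv\ket{B_\cv}} + \bra{B_\cv}E_\cv\ket{B_\cv}.$$
Averaging the diagonal typical piece uses $\ket{A_\cv}\bra{A_\cv}=\Pi_T\ketbra{\QFT{\psi_\cv}}{\QFT{\psi_\cv}}\Pi_T\preceq\Pi_T$ together with $\sum_\cv E_\cv\preceq I$ to get
$$ \frac{1}{|\CC|}\sum_\cv \bra{A_\cv}E_\cv\ket{A_\cv} \;\le\; \frac{1}{|\CC|}\mathrm{Tr}\Rbra{\Pi_T\sum_\cv E_\cv} \;\le\; \frac{|T|}{|\CC|} \;\le\; q^{n(H_q(p)-R+n^{-1/3})}.$$
The diagonal atypical piece is bounded by $\norm{B_\cv}^2=\negl(n)$, and a double Cauchy--Schwarz on the positive-semidefinite form $(X,Y)\mapsto\bra{X}E_\cv\ket{Y}$ (once within each $\cv$, then over $\cv$) bounds the cross piece by $2\sqrt{|T|/|\CC|}\cdot\sqrt{p^{(n)}(T^c)}$, which is $\negl(n)$ in the non-trivial regime $R>H_q(p)+n^{-1/3}$ (for $R$ closer to $H_q(p)$ the claimed upper bound is $\ge 1$ and the statement is vacuous). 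Assembling the three pieces yields the stated inequality.

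The main technical obstacle is keeping the atypical contribution negligible \emph{uniformly over the adversary's POVM}: a naive triangle inequality on the cross term leaves a $\sqrt{\negl(n)}$ residue that would swamp the typical bound, and it is the double Cauchy--Schwarz that brings it down to $\negl(n)$. Crucially, the entire argument is code-oblivious: no averaging over $\CC$ is performed, so the bound holds for \emph{every} subset $\CC\subseteq\Fqn$ of size $q^{Rn}$, matching the strength claimed in the theorem.
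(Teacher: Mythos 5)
Your proposal is correct, and at the structural level it follows the same route as the paper: apply the QFT, note the amplitudes have moduli $\sqrt{p^{\otimes n}(\mat{w})}$ and only phases depend on $\cv$, confine to the typical subspace $T$ of size $\le q^{n(H_q(p)+n^{-1/3})}$, and then use $\sum_\cv E_\cv \preceq I$ together with $\kb{A_\cv}\preceq\Pi_T$ to get the dimension bound $|T|/|\CC|$. The only place you diverge technically is in controlling the atypical residual. The paper first passes to \emph{normalized} truncated states $\ket{\wpsi_\cv}$, computes $\abs{\braket{\QFT{\psi}_\cv}{\wpsi_\cv}}^2 = 1-\delta$, and invokes a one-line ``approximate state discrimination'' lemma based on the operational characterization of trace distance (Claim~\ref{Claim:TD} / Lemma~\ref{lem:approximate}), yielding $\psucc \le \widetilde{p}+\sqrt{\delta}$ cleanly. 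You instead keep $\ket{A_\cv}$ unnormalized, expand $\bra{\QFT{\psi}_\cv}E_\cv\ket{\QFT{\psi}_\cv}$ into diagonal-typical, diagonal-atypical and cross pieces, and close the cross piece with a double Cauchy--Schwarz (first inside each $\cv$ against the PSD form $E_\cv$, then over $\cv$), giving $\psucc \le |T|/|\CC| + \delta + 2\sqrt{(|T|/|\CC|)\cdot\delta}$. Both yield the same $\negl(n)$ conclusion; the paper's version is marginally tighter and more modular (it packages the residual control as a reusable lemma and also obtains the clean $+\sqrt{\delta}$ additive form), while yours is more self-contained and elementary, avoiding any appeal to trace-distance machinery. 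Your remark that one must take care not to lose the gain to a $\sqrt{\negl(n)}$ from a naive triangle inequality on the cross term, and that the whole bound is code-oblivious (no averaging over $\CC$), correctly identifies the two points the paper is also implicitly careful about.
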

The strong converse theorem \cite{ON00} already shows that $\psucc = \negl(n)$, however the proof technique of Theorem \ref{thm:nonachievability} is of independent interest. The proof for the case at hand is much shorter. We will make use of the particular form of the quantum channel states $\rho_\alpha$ at hand. In terms of proof techniques, we look again at the quantum Fourier transforms of our states $\QFT{\ket{\psi_\cv}}$  and, by using the notion of typical sets, we can show that all of the states $\QFT{\ket{\psi_{\cv}}}$ approximately lie in the same subspace of dimension $q^{n H_q(|\fg|^2)}$. The condition $H_q(|\fg|^2) < R $ implies $\frac{q^{nH_q(|\fg|^2)}}{q^{Rn}} = \negl(n)$. Since there are $q^{Rn}$ different codewords, we will be able to recover any $\cv$ from $\QFT{\ket{\psi_{\cv}}}$ only with negligible probability. Formally, we perform this argument using quantum information theory techniques which are inspired by conditional quantum min-entropy notions (see~\cite{KRS09} for example).
Again, Theorem \ref{thm:nonachievability} follows from a more general result, namely Theorem  \ref{thm:nonachievability_general}, which applies to more general noise models which are non necessarily memoryless and which follows from considerations on the typical set of errors output by the channel.

\paragraph{A quantum information theoretic proof of the discrete Hirschman uncertainty principle.}
With our two theorems, we have a good understanding of the information theoretic difficulty of the quantum decoding problem. Clearly there is a huge gain in terms of the noise that can be corrected with the quantum version of the problem instead of measuring and solving the classical decoding problem: the quantum decoding problem can be solved up to a rate $R$ satisfying $R <H_q(|\fg|^2)$ whereas solving the classical problem can be achieved only up to the classical capacity of the additive channel with probability distribution $|g|^2$. The classical capacity is equal to $1-H_q(|g|^2)$ and there is in general a significant gap between the c-q capacity $H_q(|\fg|^2)$ of the channel and the classical capacity $1-H_q(|g|^2)$. In any case, the fact that the c-q capacity can not be smaller than the classical channel capacity we get after measuring the pure state  $\ket{\psi_{\cv}}$ output by the c-q channel shows the discrete Hirschman uncertainty principle \cite[Th. 23]{DCT91}
\begin{proposition}[Discrete Hirschman uncertainty principle]
  For any function $g : \F_q \mapsto \Comp$ satisfying $\pnorm{g}{2}=1$ we have
  $$
H_q(|g|^2) + H_q(|\fg|^2) \geq 1
$$
\end{proposition}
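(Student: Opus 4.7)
The plan is the one already sketched just before the statement: identify both sides of the inequality with channel capacities of the classical-quantum channel $\alpha \mapsto \ket{\psi_\alpha} = \sum_{u \in \F_q} g(u)\ket{\alpha + u}$ and then compare them, exploiting the fact that a computational-basis measurement is one particular quantum decoding strategy, so the classical capacity reachable after such a measurement cannot exceed the Holevo capacity.

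First, I would invoke \eqref{eq:capacity} to identify the Holevo (c-q) capacity as $C_\chi = H_q(|\fg|^2)$. The argument is short: by translation covariance of $\{\ket{\psi_\alpha}\}$ and concavity of the von Neumann entropy, the supremum in \eqref{eq:Holevo} is attained at the uniform input, so $C_\chi$ equals $S(\rho)$ for $\rho = \frac{1}{q}\sum_\alpha \ketbra{\psi_\alpha}{\psi_\alpha}$; conjugating $\rho$ by the quantum Fourier transform diagonalises it with eigenvalues $|\fg(y)|^2$, giving $S(\rho) = H_q(|\fg|^2)$. Second, I would identify $1 - H_q(|g|^2)$ as the base-$q$ Shannon capacity of the classical additive channel produced by measuring $\ket{\psi_\alpha}$ in the computational basis: the output is $\alpha + u$ with $u$ drawn according to $|g|^2$, and the capacity $\log_q q - H_q(|g|^2)$ is attained at the uniform input.

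The genuinely quantum step is the comparison $C_{\mathrm{cl}} \leq C_\chi$, i.e.\ $1 - H_q(|g|^2) \leq H_q(|\fg|^2)$, which holds because measure-then-decode is one admissible quantum decoding strategy (equivalently, this is data processing for the Holevo information under the measurement channel, which maps the quantum ensemble $\{\frac{1}{q}, \rho_\alpha\}$ to the classical ensemble $\{\frac{1}{q}, p_\alpha\}$ with $p_\alpha(y) = |g(y - \alpha)|^2$). Rearranging this inequality immediately yields the claimed bound on $H_q(|g|^2) + H_q(|\fg|^2)$, completing the proof.

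The only mildly delicate point is justifying that the Holevo supremum is actually attained at the uniform input; this uses the shift-covariance of $\{\ket{\psi_\alpha}\}$ together with the concavity of $S$, and is essentially the same symmetrisation argument already used to establish \eqref{eq:capacity}. Everything else is either a direct diagonalisation in the Fourier basis or a standard invocation of the Holevo-Schumacher-Westmoreland achievability theorem, so no new ingredients beyond those already developed for Theorem~\ref{thm:achievability} are required.
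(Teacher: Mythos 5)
Your argument is exactly the one the paper itself intends (and sketches in the paragraph preceding the proposition): identify $H_q(|\fg|^2)$ with the Holevo capacity of the pure-state c-q channel via the symmetrisation and Fourier-diagonalisation argument of Proposition~\ref{pro:cq-capacity}, identify $1-H_q(|g|^2)$ with the Shannon capacity of the classical additive channel obtained by measuring $\ket{\psi_\alpha}$ in the computational basis, and then use Holevo's bound (equivalently, measure-then-decode is one admissible strategy) to get $1-H_q(|g|^2)\le H_q(|\fg|^2)$. All of that is sound. What you must not gloss over is the very last step: rearranging $1-H_q(|g|^2)\le H_q(|\fg|^2)$ gives
$$
H_q(|g|^2)+H_q(|\fg|^2)\;\ge\;1,
$$
which is the \emph{reverse} of the inequality as printed in the statement. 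The printed ``$\le$'' is a typo (present in the paper as well): the Hirschman/Maassen--Uffink principle is a \emph{lower} bound on the entropy sum, expressing that $|g|^2$ and $|\fg|^2$ cannot both be concentrated, and the ``$\le 1$'' version is simply false --- for instance $g=\tfrac{1}{\sqrt 2}(\mathds{1}_0+\mathds{1}_1)$ over $\F_3$ gives $H_3(|g|^2)+H_3(|\fg|^2)=\log_3 2+H_3\bigl(\tfrac23,\tfrac16,\tfrac16\bigr)\approx 1.42$. So your proof is the intended one and establishes the correct statement, but you should state explicitly that it proves the lower bound $\ge 1$, rather than asserting that the rearrangement ``immediately yields the claimed bound'' as literally written.
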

Note that this gives for the first time a (quantum) information theoretic proof of this result \cite[Th. 23]{DCT91} which was first stated and proved in a slightly more general form in \cite{DCT91} by completely other means.

\paragraph{The power of Regev's reduction with Pretty Good Measurement.}
Our fourth contribution --and this is the main contribution of this article-- is that this significant gap translates into giving much more power to Regev's reduction. As observed in \cite[\S 6 ]{CT23}, there is no hope of having a generic Regev's reduction. However in the case of Regev's reduction based on using the PGM and a Bernoulli noise, the situation is a little bit better, since with a slight tweak of the PGM, \cite{CT24} obtained an algorithm which outputs dual codewords whose weight can be as low as the minimum Hamming distance of the dual code at the information theoretic limit where the PGM works. We generalize this result here.
To state this result, we introduce two probability distributions
\begin{definition}
  We denote by $p$ the probability distribution on $\F_q^n$ equal to $(|\fg|^2)^{\otimes n}$ and by $r$ the conditional probability distribution of $p$ given that we are in $\CCps$, i.e.
  	$$r(\xv) = \left\{
\begin{array}{ll}
		0 & \mbox{if } \xv \notin  \CCps \\
\dfrac{p(\xv)}{\sum\limits_{\yv\in \CCps} p(\yv)}& \mbox{else }
\end{array}
	\right. , \text{ well defined when } \sum\limits_{\yv\in \CCps} p(\yv) \neq 0.
	$$
\end{definition}
In our general setting, a tweaked version of the PGM can be used in Regev's reduction in order to sample non zero dual codewords
according to the probability distribution $r$. This allows to sample codewords of rather large probability $p$, since we will show:\\
(i) the dual codewords $\cv$ we sample are typically such that $p(\cv) \approx q^{-n H_q\left(|\fg|^2\right)}$,\\
  (ii) we cannot do much better than this since it turns out that for most dual codes of rate $1-R$ there is no dual codeword $\cv$ of probability $p(\cv)$ greater than a quantity which is about
$q^{-nR}$.

Since the limiting rate at which the PGM works is precisely $H_q(|\fg^2|)$, we see that at this limit we obtain an algorithm outputting dual codewords $\cv$ whose probability $p(\cv)$ is close to be maximal, and this maximum is about
$q^{-nR}$. More precisely, we have
\begin{restatable}{theorem}{reductionpgm}\label{thm : reductionpgm}
  Let $g : \F_q \rightarrow \mathbb{C}$ with $\pnorm{g}{2} = 1$. Let $R$ be a real number in $(0,H_q(|\fg|^2))$. Let $f : \F_q^n \rightarrow \mathbb{C}$ be so that $f = g^{\otimes n}$.
  Let $\C$ be a linear code on $\Fqn$ with generator matrix $\Gm$ drawn uniformly at random in $\F_q^{\lfloor Rn \rfloor \times n}$. 
	For any $\epsilon > 0$:
	\begin{itemize}
	\item Algorithm \ref{algo: reductionPGM} outputs codewords  $\cv \in \CCps$  such that
          $p(\cv) \geq q^{-n(H_q(|\fg^2|)-\epsilon)}$
with probability $1-\negl(n)$.
	\item The probability over $\Gm$ that there exists a codeword $\cv$ in $\CCps$ such that $p(\cv)$ is larger than $q^{-n(R-\epsilon)}$ is a negligible function of $n$.
	\end{itemize}

\end{restatable}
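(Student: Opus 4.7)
The proof splits along the two claims. For the \textbf{second claim}, use a first-moment union bound. Set $S = \{\xv \in \F_q^n\setminus\{0\} : p(\xv) > q^{-n(R-\epsilon)}\}$. Since $p$ sums to $1$, $|S|<q^{n(R-\epsilon)}$, and for uniform $\Gm\in\F_q^{\lfloor Rn\rfloor\times n}$ and any fixed non-zero $\xv$, $\Pr_\Gm[\xv\in\CCp] = q^{-\lfloor Rn\rfloor}$. Hence
\[
\Pr_\Gm\!\left[\exists\,\cv\in\CCps : p(\cv) > q^{-n(R-\epsilon)}\right] \leq |S|\cdot q^{-\lfloor Rn\rfloor} = O(q^{-n\epsilon}) = \negl(n).
\]

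For the \textbf{first claim}, my plan is to argue (a) that Algorithm~\ref{algo: reductionPGM} outputs a sample distributed according to $r$ on $\CCps$ up to negligible total variation, and (b) that this sample has $p$-mass close to $q^{-nH_q(|\fg|^2)}$ by typicality. For (a), Algorithm~\ref{algo: reductionPGM} implements the tweaked PGM-based Regev reduction sketched just before the theorem: starting from $\tfrac{1}{\sqrt{|\C|}}\sum_{\cv\in\C}\ket{\cv}\ket{\psi_\cv}$, the PGM coherently erases the $\ket{\cv}$ register, the QFT is applied, and the final register is measured with the $\dv = 0$ outcome discarded. Since $R < H_q(|\fg|^2)$, Theorem~\ref{thm:achievability} guarantees the PGM succeeds with probability $1-\negl(n)$; by the gentle-measurement lemma, the coherent erasure is then $\negl(n)$-close in trace distance to the ideal, so the final output distribution is $\negl(n)$-close to $r$. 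For (b), let $H := H_q(|\fg|^2)$ and $T_\epsilon = \{\xv : -\tfrac{1}{n}\log_q p(\xv)\in[H-\epsilon,H+\epsilon]\}$; the classical AEP gives $p(T_\epsilon)\geq 1-\negl(n)$. I then transfer this to $r$ by decomposing $r(T_\epsilon) = p(T_\epsilon\cap\CCps)/p(\CCps)$ and analysing numerator and denominator over the random $\Gm$. Linearity gives $\esp_\Gm[p(T_\epsilon\cap\CCps)] \asymp q^{-\lfloor Rn\rfloor}$ (roughly $q^{n(H-R)}$ typical elements expected in $\ker\Gm$, each of $p$-mass $\approx q^{-nH}$), whereas $\esp_\Gm[p(T_\epsilon^c\cap\CCps)] = p(T_\epsilon^c)\cdot q^{-\lfloor Rn\rfloor}$ is smaller by a factor $\negl(n)$. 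A second-moment/Chebyshev argument on pairs $(\xv,\yv)\in T_\epsilon^2$ simultaneously lying in $\ker\Gm$ yields concentration of the denominator $p(\CCps)$ around its mean, and Markov's inequality controls the numerator. Combining these, the sampled $\cv$ lies in $T_\epsilon$ with probability $1-\negl(n)$ over the algorithm's randomness and $\Gm$, giving the stated bound on $p(\cv)$.

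The \textbf{main obstacle} is the second-moment concentration of $p(\CCps)$ from below: its expectation is exponentially small ($\sim q^{-nR}$), so one has to carefully control the pairwise correlations of typical elements belonging to $\ker\Gm$ to rule out the sum being dominated by a few outliers. Once this concentration is secured, everything else---the PGM achievability from Theorem~\ref{thm:achievability}, the classical AEP for memoryless sources, the gentle-measurement lemma, and the first-moment union bound of the second claim---is either directly invoked or entirely standard.
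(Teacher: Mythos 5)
Your treatment of the second claim is exactly the paper's Proposition~\ref{lem:maximal}: bound $|S|$ by $q^{n(R-\epsilon)}$ using $p(S)\le 1$, multiply by the per-vector membership probability $q^{-\lfloor Rn\rfloor}$, and conclude by the first-moment method.

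Your high-level strategy for the first claim is also the paper's (Theorem~\ref{thm:generalPM}, specialised to the memoryless case via Proposition~\ref{prop:measuringtypicalset}): run the coherent PGM, then show the final measurement lands in the typical set by analysing $r(T_\epsilon)=p(T_\epsilon\cap\CCps)/p(\CCps)$. But two details are off in a way that would sink the argument if carried out as written. First, the Chebyshev and Markov roles are swapped. To prove $r(T_\epsilon)\to 1$ you need a \emph{lower} bound on the numerator $X := p(T_\epsilon\cap\CCps)$ and an \emph{upper} bound on the excess $Z := p(\CCps\setminus T_\epsilon)$; the paper lower-bounds $X$ by Chebyshev (Lemmas~\ref{lem:varZs} and \ref{lem:concentrationX}), which works precisely because every contributing term has $p$-mass $\asymp q^{-nH}$ so the variance is tame, and upper-bounds $Z$ by Markov (Lemma~\ref{lem:concentrationY}), using only $\esp[Z]=\delta(\epsilon,n)/q^{k}$. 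A direct Chebyshev on $Y:=p(\CCps)$, as you describe, would not go through: the variance of $Y$ can be dominated by a few atypical $\yv$ of large $p$-mass, which is exactly what the restriction to $T_\epsilon$ in the second-moment computation is there to exclude (and Markov on $X$ gives an upper bound, the wrong direction). Second, ``measure and discard the $\dv=\zero$ outcome'' is not by itself enough: when $|\hat{f}(\zero)|^2$ dominates $\sum_{\yv\in\CCp}|\hat{f}(\yv)|^2$, the post-QFT state concentrates on $\ket{\zero}$ and the naive algorithm would fail to output anything with non-negligible probability, breaking the claimed $1-\negl(n)$ guarantee. The paper handles this in Section~4.2 by replacing the PGM basis $\{\ket{Y_\cv}\}$ with the $\zero$-free basis $\{\ket{Z_\cv}\}$, which yields the state $\ket{U_0}$ (supported on $\CCps$) directly with probability at least $\bigl(\sqrt{\PPGM}-1/\sqrt{q^{k}}\bigr)^2 = 1-\negl(n)$. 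Your gentle-measurement argument controls the error of the coherent erasure, but it does not substitute for this tweak; you would either need to adopt it or add a separate argument that the $\zero$ component is not dominant, which your outline does not supply.
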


There are cases where the mapping associating to $\xv \in \F_q^n$ the quantity $\log\frac{p(\zero)}{p(\xv)} =n \log{|\fg(0)|^2} - \sum_{i=1}^n \log |\fg(x_i)|^2$ is a weight $|\cdot|$ that defines a metric as
$d(\xv,\yv)= |\yv-\xv|$. The ``q-ary symmetric channel'' considered in \cite{CT24} is basically an example of this kind. There, we had  $g(0)=\sqrt{1-p}$ and $g(\alpha) = \sqrt{\frac{p}{q-1}}$ for $\alpha \neq 0$. In this case, the most likely element (according to the probability distribution $p$) is the minimum  weight codeword in $\CCps$ since $p$ is in this case obviously a decreasing function of the weight $|\cdot|$. In other words, we obtain in this case through Regev's reduction and the PGM a way to obtain a minimum weight codeword in $\CCps$ at the limiting rate where the PGM works (\ie\ at the Holevo capacity of the c-q channel). This is in strong contrast with using a classical decoder instead in Regev's reduction as shown in \cite{DRT24}. In the last case, we are really far away from getting a minimum weight codeword at the limit where classical decoding works when applied to the noise corresponding to a $q$-ary symmetric channel. In some sense, this means that in order to get a tight Regev reduction, it is really a quantum decoder (\ie\ an algorithm solving the quantum decoding problem) which is desirable rather than a classical decoder.

\paragraph{The rank metric case.} In all our proofs we make heavily use of the notion of typical set associated to the probability distribution $p \eqdef \abs{\QFT{f}}^2$. This notion applies
to more general probability distributions than those corresponding to memoryless c-q channels. We have given a more general form, namely Theorems \ref{thm:general} and \ref{thm:nonachievability_general} giving a tight achievability in the case where for the probability distribution $p$ the elements of probability $\approx q^{H_q(p)}$ carry almost all the probability of the space (meaning that they form a typical set in the usual sense). In this case again, the critical rate marking the distinction between achievability of decoding and non achievability corresponds to the entropy per symbol $\frac{H_q(p)}{n}$. We illustrate this behavior in a case which is of interest in code based cryptography which corresponds to the rank metric. This alternative in code-based cryptography to the usual Hamming metric has lead to submissions to the NIST competition for post-quantum cryptography \cite{ABDGHRTZABBBO19,mirath} which compare favorably to the corresponding submissions in Hamming metric. It can be used to obtain for instance efficient public key encryption or key exchange schemes \cite{ABDGHRTZABBBO19,ADGLRW24}, signatu re schemes \cite{mirath,BCFGJRV25},
or somewhat homomorphic encryption schemes \cite{ADG25}. In this case, the distribution of $p$ does not correspond to a memoryless channel. Nevertheless, we give a tight achievability result in this case (see Theorems \ref{thm:achievability_rm} and \ref{thm : inachievability_rm}) as a corollary of Theorems \ref{thm:general} and \ref{thm:nonachievability_general}. The probability distribution $p$ used in this case is a decreasing function of the rank weight. We also give a more general form of Theorem \ref{thm : reductionpgm} showing that Regev's reduction when applied to a variant of the PGM yields elements of the dual code of maximal probability
at the limit where the PGM works. This more general result applies also to the rank metric case and shows that again we get at the limit where the PGM works elements in the dual code of
minimum rank weight.

\paragraph{The decoding problem in the surjective regime and an operational version of the duality result of J. Renes \cite{R18a}}
If we replace in Regev's reduction procedure the initial state $ \frac{1}{\sqrt{|\C|}} \sum_{\cv \in \C} \sum_{\ev \in \F_q^n}  f(\ev) \ket{\cv}\ket{\cv+\ev}$ by $\frac{1}{\sqrt{|\C|}} \sum_{\cv \in \C, \ev \in \F_q^n} \chi_\cv(-\vv)f(\ev) \ket{\cv}\ket{\cv + \ev}$, solve the quantum decoding problem with the PGM as in Algorithm 4 (for more details see \S \ref{ss:Renes}), then we 
output elements $\vv + \CCp$  of maximal probability $p$ at the limit where PGM works. More precisely we obtain
\begin{restatable}{theorem}{reductionpgmwithv}\label{thm : reductionpgmwithv}
  Let $g : \F_q \rightarrow \mathbb{C}$ with $\pnorm{g}{2} = 1$. Let $R$ be a real number in $(0,H_q(|\fg|^2))$. Let $f : \F_q^n \rightarrow \mathbb{C}$ be so that $f = g^{\otimes n}$.
  Let $\vv$ be an element in $\F_q^n$.
  Let $\C$ be a linear code on $\Fqn$ with generator matrix $\Gm$ drawn uniformly at random in $\F_q^{\lfloor Rn \rfloor \times n}$. 
	For any $\epsilon > 0$:
	\begin{itemize}
	\item Algorithm 4.4 outputs elements  $\yv \in \vv+ \CCp$  such that
          $p(\yv) \geq q^{-n(H_q(|\fg^2|)-\epsilon)}$
with probability $1-\negl(n)$.
	\item The probability over $\Gm$ that there exists an element $\yv$ in $\vv+ \CCp$ such that $p(\yv)$ is larger than $q^{-n(R-\epsilon)}$ is a negligible function of $n$.
	\end{itemize}
\end{restatable}
This gives an optimal lossy compression procedure for the dual code $\CCp$ based on a optimal decoding procedure based on PGM for $\C$ for the classical-quantum c-q channel corresponding to $g$. In some sense it can be viewed (with the exception of the
$\sqrt{2 \epsilon}$ error term which appears in \cite[\S 5, Cor. 2]{R18a}) as an explicit version of the lossy compression procedure mentioned in \cite[\S 5, Cor. 2]{R18a}.

	\section{Notation and Preliminaries}
	\subsection{Notation}
\label{ss:notation}
\paragraph{General notation.}
The finite field with $q$ elements is denoted by $\Fq$. The set of nonnegative integers is denoted by $\N$.
The cardinality of a finite set $\Ec$ is denoted by $|\Ec|$.
For a vector space $W$ and subset $V$ of it, we write that $V \leq W$ iff $V$ is a subspace of $W$.
A {\em negligible function} is a function $\eta:\N \rightarrow \Real$ such that for every positive integer $a$ there exists an integer $N$ such that for all $n \geq N$ we have
$\eta(n) \leq \frac{1}{n^a}$. When for a function $f:\N \rightarrow \Real$ we write that $f(n) = \negl(n)$, this means that $f$ is a negligible function.

\paragraph{Vector and matrices.}
For a Hermitian matrix $\Mm$, we write that $\Mm \succcurlyeq 0$ when $\Mm$ is positive semidefinite. Vectors are row vectors as is standard in the coding community and $\trsp{\xv}$ (resp. $\trsp{\Mm}$) denotes the transpose of a vector (resp. of a matrix). In particular, vectors will always be denoted by bold small letters and matrices with bold capital letters. The inner product
between two vectors $\xv=(x_i)_{1 \leq i \leq n }$ and $\yv=(y_i)_{1 \leq i \leq n }$ of $\F_q^n$ is denoted by $\xv \cdot \yv \eqdef \sum_{i=1}^n x_i y_i$. $\Im$ denotes the identity matrix. Its size will be clear from the context.

\paragraph{The classical and quantum Fourier transform on $\Fqn$.}
In this article, we will use the quantum Fourier transform on $\Fqn$.
It is based on the characters of the group $(\Fqn,+)$ which are defined as follows (for more details see \cite[Chap 5, \S 1]{LN97}, in particular a description of the characters in terms of the trace function is given in  \cite[Ch. 5, \S 1, Th. 5.7]{LN97}).
\begin{definition}
	Fix $q = p^s$ for a prime integer $p$ and an integer $s \ge 1$. The characters of $\F_q$ are the functions $\carf{y} : \F_q \rightarrow \mathbb{C}$ indexed by elements $y \in \F_q$ defined as follows
	\begin{eqnarray*}
		\car{y}{x} & \eqdef & e^{\frac{2i \pi \tr(x  y)}{p}}, \quad \text{with} \\
		\tr(a) & \eqdef & a + a^p + a^{p^2} + \dots + a^{p^{s-1}}.
	\end{eqnarray*}
	 We extend the definition to vectors $\xv,\yv \in \F_q^n$ as follows:
	$$ \car{\yv}{\xv} \eqdef \prod\limits_{i = 1}^n \car{y_i}{x_i}= e^{\frac{2i \pi \tr(\xv \cdot \yv)}{p}}.$$
\end{definition}
When $q$ is prime, we have $\car{y}{x} = e^{\frac{2i\pi xy}{q}}=\omega_q^{xy}$, where $\omega_q \eqdef e^{\frac{2i\pi}{q}}$.

\begin{definition}
	Let $f : \F_q^n \rightarrow \mathbb{C}$. We define $\hf : \F_q^n \rightarrow \mathbb{C}$ such that 
	$\hf(\yv) = \frac{1}{\sqrt{q^n}}\sum_{\xv \in \F_q^n} \chi_{\yv}(\xv) f(\xv)$.
\end{definition}
The way we normalize the Fourier transform $\hf$ of a function $f$ ensures that $\norm{f}_2 = \norm{\hf}_2$. The quantum Fourier transform over $\F_q^n$ is the quantum unitary such that for each $f : \F_q^n \rightarrow \mathbb{C}$ such that $\norm{f}_2 = 1$, we have 
$$ \QFTt_{\F_q^n} \left(\sum_{\xv \in \F_q^n} f(\xv)\ket{\xv}\right) = \sum_{\xv \in \F_q^n} \hf(\xv)\ket{\xv}.$$

\paragraph{Linear codes.}
A linear code over $\Fq$ of length $n$ is a subspace of $\F_q^n$. We say that it is an $[n,k]$ code when its dimension is $k$. It can be specified by a generating matrix $\Gm \in \F_q^{k \times n}$, in which case $\C = \lbrace \sv \Gm : \sv \in \Fqk \rbrace$. We also define $\C^* = \C \backslash \{\zero\}$.
The dual code $\C^\perp$ which is defined as
$\C^\perp = \{\yv \in \F_q^n: \yv \cdot \cv = 0\}$ can also be described as
\begin{equation}
  \label{eq:Cperp_car}
  \C^\perp = \{\yv \in \F_q^n: \car{\yv}{\cv}=1,\;\forall \cv \in \C\}.
\end{equation}
When the code length $n$ is a product $n = a \cdot b$ we can arrange the entries of a vector $\xv=(x_i)_{0 \leq i < n} \in \F_q^n$ in a matrix $\Xm = \matv(\xv) \in \F_q^{a \times b}$. This is obtained by defining the entry $X_{i,j}$
in the $i$-th row and the $j$-th column of $\Xm$ as $X_{i,j} = x_{i\cdot b +j}$ where $i \in \{0,\cdots,a-1\}$, $j \in \{0,\cdots,b-1\}$. 
This allows us to define the rank weight on $\F_q^{n}$ as
$$|\xv|_\rk \eqdef \rank(\matv(\xv)).$$

\paragraph{}

\subsection{Information Theory}
We first recall the notion of entropy
\begin{definition}\label{def:q-entropy}
  Let $p$ be a probability distribution on a finite alphabet $\Ac$.
Its $q$-ary entropy is defined as
	$$H_q(p) \eqdef -\sum_{x \in \Ac}p(x)\log_qp(x).$$
\end{definition}
We use this notion to define the typical set of a probability distribution over $\F_q^n$:
\begin{definition}[typical set]\label{def:typical}
  The typical set $\Typ(p)$ corresponding to the probability distribution $p$ over $\F_q^n$, gap $\epsilon$ where $\epsilon$ is a positive number and bounding functions
  $A(\epsilon)$ and $B(\epsilon)$ is defined by
  $$\Typ(p) = \lbrace \yv \in \Fqn : A(\epsilon) \leq p(\yv) \leq  B(\epsilon)\rbrace.$$
  We will generally drop the dependence in $p$ which will be clear from the context and simply write 
  $\Typ$. The defect  $\delta(\epsilon,n)$ is the function of $\epsilon$ defined by $1-p(\Typ)$.
\end{definition}  

A direct consequence of this definition is that we have
\begin{lemma}\label{lem:cardinality}
  For all $\epsilon > 0$ and positive integer $n$ we have
  $$
(1 - \delta(\epsilon,n)) \frac{1}{B(\epsilon)} \leq |\Typ| \leq \frac{1}{A(\epsilon)}.
  $$
\end{lemma}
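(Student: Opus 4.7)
The plan is to derive both inequalities directly from the defining inequality $A(\epsilon) \le p(\yv) \le B(\epsilon)$ that holds for every $\yv \in \Typ$, combined with the definition of the defect $\delta(\epsilon,n) = 1 - p(\Typ)$, where $p(\Typ) \eqdef \sum_{\yv \in \Typ} p(\yv)$. No probabilistic or information-theoretic machinery beyond this is needed; the lemma is essentially a counting argument.

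For the upper bound I would write
$$ p(\Typ) \;=\; \sum_{\yv \in \Typ} p(\yv) \;\geq\; |\Typ|\cdot A(\epsilon), $$
using that each summand is at least $A(\epsilon)$ by definition of $\Typ$. Since $p(\Typ) \le 1$, rearranging gives $|\Typ| \le 1/A(\epsilon)$.

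For the lower bound I would use the other side of the defining inequality to write
$$ p(\Typ) \;=\; \sum_{\yv \in \Typ} p(\yv) \;\leq\; |\Typ|\cdot B(\epsilon). $$
Since by definition $p(\Typ) = 1 - \delta(\epsilon,n)$, this immediately yields $|\Typ| \ge (1-\delta(\epsilon,n))/B(\epsilon)$.

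There is no real obstacle: both bounds drop out in one line each, and combining them gives the displayed inequality. The only thing worth flagging is that the lower bound is only informative when $\delta(\epsilon,n) < 1$, which is automatic whenever $\Typ$ carries a nontrivial portion of the probability mass — exactly the regime in which typical-set arguments are applied in the rest of the paper.
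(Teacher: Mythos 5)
Your proof is correct and is exactly the "direct consequence of the definition" the paper alludes to without spelling out: bound $p(\Typ)=\sum_{\yv\in\Typ}p(\yv)$ from below by $|\Typ|\,A(\epsilon)$ and from above by $|\Typ|\,B(\epsilon)$, then use $p(\Typ)\le 1$ for the upper bound and $p(\Typ)=1-\delta(\epsilon,n)$ for the lower bound. Nothing to add.
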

The rationale behind this definition is that for many families of probability distributions over $\F_q^n$ we have $p(\Typ)=1-o(1)$ for any fixed $\epsilon$ as $n$ tends to infinity
for bounding functions $A(\epsilon)$ and $B(\epsilon)$ which are close to each other.
 In particular it holds for product probability distributions $p = r^{\otimes n}$ for fixed $r$ as is well known
\COMMENT{\textbf{Shannon's entropy} of a random variable quantifies the uncertainty of the possible outcomes of the variable. Given a random variable $X$ which takes values in an alphabet $\Ac$ distributed according to $p : \Ac \rightarrow \lc0,1\rc$ in an alphabet $\Ac$ and $ q = \abs{\Ac}$, the normalized entropy is

	$$H_q(X) = -\sum_ {x \in \Ac}p(x)\log_qp(x)$$}

\begin{proposition}\cite[Section 3.1]{CT06}\label{Proposition:1}

	Let $r : \F_q \rightarrow [0,1]$ be a probability function on $\F_q$. Let $p = r^{\otimes n}$ and $\Typ$ be the associated $\epsilon$-typical set for some $n \in \mathbb{N}$ and $\epsilon > 0$ corresponding to the bounding functions $A(\epsilon)=q^{-n(H_q(r)+\epsilon)}$ and $B(\epsilon) = q^{-n(H_q(r)-\epsilon)}$.

	\begin{enumerate} 
		\item $\delta(\eps,n) \le e^{\frac{-2n\eps^2}{K}}$ for some constant $K$ that depends only on $r$. In particular, when $\eps = \frac{1}{n^{1/3}}$, then, $\delta(\eps,n) = \negl(n)$.
		\item $(1 - \delta(\eps,n))q^{n(H_q(r)-\epsilon)} \leq \abs{\Typ}\leq q^{n(H_q(r)+\epsilon)}$.
	\end{enumerate}
\end{proposition}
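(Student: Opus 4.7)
The plan is to observe that the second item of the Proposition is a direct consequence of Lemma \ref{lem:cardinality} once we instantiate it with the explicit bounding functions $A(\epsilon) = q^{-n(H_q(r)+\epsilon)}$ and $B(\epsilon) = q^{-n(H_q(r)-\epsilon)}$: the upper bound comes from $1 \geq \sum_{\yv \in \Typ} p(\yv) \geq |\Typ| \cdot A(\epsilon)$, and the lower bound from $1-\delta(\epsilon,n) = p(\Typ) \leq |\Typ|\cdot B(\epsilon)$. So the real content lies in the first item, the concentration bound on $\delta(\epsilon,n)$.

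For the first item, I would exploit the product structure of $p = r^{\otimes n}$ by rewriting
\[
-\log_q p(\yv) \;=\; -\sum_{i=1}^n \log_q r(y_i).
\]
Let $\yv = (Y_1,\dots,Y_n)$ be drawn from $p$, and set $X_i \eqdef -\log_q r(Y_i)$. The random variables $X_i$ are i.i.d., and by Definition \ref{def:q-entropy} their common expectation is exactly $H_q(r)$. Membership in $\Typ$ with the chosen $A(\epsilon),B(\epsilon)$ is equivalent to
\[
H_q(r)-\epsilon \;\leq\; \frac{1}{n}\sum_{i=1}^n X_i \;\leq\; H_q(r)+\epsilon,
\]
so $\delta(\epsilon,n)$ is precisely the probability that the empirical mean of the $X_i$ deviates from $H_q(r)$ by more than $\epsilon$.

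To conclude I would invoke Hoeffding's inequality. The $X_i$ are bounded since the alphabet $\F_q$ is finite: letting $r_{\min}$ (resp.\ $r_{\max}$) denote the smallest nonzero value (resp.\ the largest value) of $r$, we have $X_i \in [-\log_q r_{\max},\,-\log_q r_{\min}]$ on the support of $r$, while elements outside the support occur with probability zero and can be ignored. Hoeffding then yields
\[
\delta(\epsilon,n) \;\leq\; 2\exp\!\left(-\frac{2n\epsilon^2}{(\log_q r_{\min} - \log_q r_{\max})^2}\right),
\]
which is of the claimed form $e^{-2n\epsilon^2/K}$ after absorbing the factor $2$ into $K$ (a constant depending only on $r$). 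Substituting $\epsilon = n^{-1/3}$ gives $\delta(\epsilon,n) \leq e^{-2n^{1/3}/K}$, which decays faster than any inverse polynomial and is therefore $\negl(n)$.

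The only mildly delicate point is the boundedness of $-\log_q r(y)$: it would fail on an infinite alphabet with $r$ approaching $0$, but here the finiteness of $\F_q$ makes $r_{\min} > 0$ automatic on the support, and the contribution of non-support elements is zero. Apart from this technicality, the argument is a textbook application of Hoeffding to the log-likelihood of an i.i.d.\ sample, which is precisely what typicality captures.
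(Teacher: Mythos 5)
Your argument is essentially identical to the paper's: both reduce $\delta(\epsilon,n)$ to a large-deviation bound for the empirical mean of the i.i.d.\ bounded variables $X_i = -\log_q r(Y_i)$ (whose mean is $H_q(r)$), apply Hoeffding, and deduce the cardinality bounds from Lemma~\ref{lem:cardinality}. You are in fact slightly more careful than the paper, which only writes the one-sided tail even though membership in $\Typ$ is a two-sided condition; your explicit factor of $2$, absorbed into $K$, cleans this up.
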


\begin{proof}
  Fix $\Typ$ for some $r : \F_q \rightarrow [0,1], n \in \mathbb{N}$ and $\eps > 0$. Let $p = r^{\otimes n}$. Let $\xv=(x_1,\cdots,x_n)$ be a random variable on $\F_q^n$ distributed according to $p$. We have
	\begin{align*}
		\delta(\eps,n) & = \Pr\left[\xv \notin\Typ\right] \\
		& = \Pr \left[ \left|\sum\limits_{i=1}^n - \log_q(r(x_i)) -  nH_q(r) \right| \geq n\epsilon\right]
	\end{align*}
     
	Let $X_1,\dots,X_n$ be random i.i.d. variables such that 
	$$\Pr[X_i = \alpha] = 
	\left\{
	\begin{array}{cl}
		-\log_q(r(\alpha)) & \textrm{ if } r(\alpha) \neq 0 \\
		0 & \textrm{ if } r(\alpha) = 0
	\end{array} \right.$$

	Let $S_n = \sum_{i = 1}^n X_i$. Notice that $\esp[X_i] = H_q(r)$ hence $\esp[S_n] = nH_q(r)$. Also, notice that $0 \le X_i \le K$ where $K = \max\limits_{\alpha : r(\alpha) \neq 0}\{- \log_q(r(\alpha))\}$. Using Hoeffding's inequality\footnote{For any  i.i.d random variables $X_1, ... X_n$ such that $\Pr(a_i\leq X_i \leq b_i)=1$, for any $t>0$, 
	$
		\Pr(S_n - \esp[S_n]\geq t)\leq \exp\lp-\frac{2t^2}{M}\rp$
	with $M = \sum\limits_{i=1}^n(b_i - a_i)^2$ and $S_n = \sum\limits_{i=1}^n X_i$.}, we write
	\begin{align*}
		\delta(\epsilon,n) & = \Pr(S_n - \esp[S_n]\geq t)\leq e^{-\frac{2t^2}{n K^2}} = e^{-\frac{2\eps^2 n}{K^2}}
	\end{align*}
	The second point follows directly from Lemma \ref{lem:cardinality}.
\end{proof}

We will also need the following results for the variance and the expectation of the cardinality of the intersection of a random linear code on $\F_q^{n}$ with an arbitrary subset of $\Fqn$.

\begin{lemma}\label{Variance}(Intersection Expectation and Variance lemma ) \cite[Lemma 1.1]{B97b}
	Let $\C$ be a random linear code on $\Fqn$ whose parity-check matrix $\Hm$ is chosen uniformly at random in $\F_q^{(n-k)\times n}$.   Let $E$ be an arbitrary subset of $\Fqn$. Then
	$$\frac{\abs{E}}{q^{n-k}} \leq \mathbb{E}_{\Hm}(\vert \C \cap E \vert) \leq \frac{\abs{E} - 1}{q^{n-k}} + 1$$
	$$\Var_\Hm(\vert \C \cap E \vert)  \leq (q-1) \mathbb{E}_{\Hm}(\vert \C \cap E \vert)$$
		
\end{lemma}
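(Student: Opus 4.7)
The plan is to decompose $|\C \cap E| = \sum_{\xv \in E} \mathbbm{1}[\xv \in \C]$ and reduce everything to computing single-point and pairwise probabilities of membership in $\C$. The key preliminary observation is that $\xv \in \C$ iff $\Hm\trsp{\xv} = \zero$, and for any fixed $\xv \neq \zero$, the vector $\Hm\trsp{\xv}$ is uniform on $\F_q^{n-k}$ (each row of $\Hm$ is an independent uniform vector in $\F_q^n$, and its inner product with a fixed nonzero vector is uniform in $\F_q$). Hence $\Pr_\Hm[\xv \in \C] = q^{-(n-k)}$ for $\xv \neq \zero$, while $\Pr_\Hm[\zero \in \C] = 1$.

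For the expectation, linearity gives
$$\esp_\Hm[|\C \cap E|] = \mathbbm{1}[\zero \in E] + (|E| - \mathbbm{1}[\zero \in E])\, q^{-(n-k)}.$$
Both stated bounds follow from a direct two-case check on whether $\zero \in E$, using $q^{-(n-k)} \le 1$: when $\zero \in E$ the expression equals the claimed upper bound $(|E|-1)/q^{n-k} + 1$ exactly and exceeds the lower bound, and when $\zero \notin E$ it equals $|E|/q^{n-k}$ exactly and falls below the upper bound.

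For the variance, I expand
$$\Var_\Hm(|\C \cap E|) = \sum_{\xv,\yv \in E}\bigl(\Pr[\xv,\yv \in \C] - \Pr[\xv \in \C]\Pr[\yv \in \C]\bigr)$$
and classify pairs by the $\F_q$-linear relation between $\xv$ and $\yv$. If $\xv$ and $\yv$ are $\F_q$-linearly independent (in particular both nonzero and not scalar multiples), then for each row $\mathbf{r}$ of $\Hm$ the conditions $\mathbf{r}\cdot\trsp{\xv}=0$ and $\mathbf{r}\cdot\trsp{\yv}=0$ are two independent linear equations in the uniform $\mathbf{r}$, so $\Pr[\xv,\yv \in \C] = q^{-2(n-k)} = \Pr[\xv \in \C]\Pr[\yv \in \C]$ and the covariance vanishes. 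Pairs where one of $\xv,\yv$ is $\zero$ also contribute zero covariance. The only nonzero contributions come from pairs with $\xv \neq \zero$ and $\yv = \lambda\xv$ for some $\lambda \in \F_q^\star$; for these, $\xv \in \C$ and $\yv \in \C$ are the \emph{same} event, so $\Pr[\xv,\yv \in \C] = q^{-(n-k)}$ and the covariance is $q^{-(n-k)} - q^{-2(n-k)} \le q^{-(n-k)}$.

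To conclude, each nonzero $\xv \in E$ has at most $q-1$ scalar multiples in $E$ (namely $\lambda \xv$ for $\lambda \in \F_q^\star$, including $\yv = \xv$ itself), so
$$\Var_\Hm(|\C \cap E|) \le (q-1)\,|E \setminus \{\zero\}|\,q^{-(n-k)} \le (q-1)\,\esp_\Hm[|\C \cap E|],$$
where the last inequality uses the exact formula for the expectation above. There is no real obstacle here; the only care required is bookkeeping the $\zero \in E$ and $\xv = \yv$ boundary cases, and recognizing the independence/coincidence dichotomy for the pairwise events, which is really the heart of the argument.
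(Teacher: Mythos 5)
Your proof is correct and complete. Note, however, that the paper does not supply its own proof of this lemma: it cites it directly from Barg's survey (the \cite{B97b} reference), so there is nothing in the paper to compare against. What you have written is the standard second-moment argument for this fact. The decomposition of $|\C \cap E|$ as a sum of membership indicators, the dichotomy between the $\zero$ vector (probability $1$), nonzero vectors (probability $q^{-(n-k)}$), and the pairwise analysis splitting into linearly independent pairs (zero covariance), pairs involving $\zero$ (zero covariance), and nonzero scalar-multiple pairs (covariance $q^{-(n-k)} - q^{-2(n-k)}$) is exactly right, and your bookkeeping of the boundary cases (whether $\zero \in E$, counting at most $q-1$ scalar multiples per nonzero $\xv$) closes the argument cleanly. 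One very minor stylistic remark: your justification of independence for linearly independent $\xv,\yv$ invokes "two independent linear equations," which is fine, though it might be slightly cleaner to say the map $\mathbf{r} \mapsto (\mathbf{r}\cdot\trsp{\xv}, \mathbf{r}\cdot\trsp{\yv})$ is a surjective $\F_q$-linear map and hence pushes the uniform distribution forward to the uniform distribution on $\F_q^2$. This does not affect correctness.
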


\subsection{Quantum information theory}\label{ss:quantum_information}
\begin{definition}[von Neumann entropy] The von Neumann entropy $S(\rho)$ of a quantum state $\rho$ living in a $d$-dimensional Hilbert space is defined as $S(\rho) \eqdef - \sum_{i=1}^d \lambda_i \log_d \lambda_i$ where
  $\rho = \sum_{i=1}^d \lambda_i \ketbra{e_i}{e_i}$ is the spectral decomposition of $\rho$. 
  \end{definition}
\begin{definition}
	For two quantum states $\rho,\sigma$, we define their trace distance  $D(\rho,\sigma) \eqdef \frac{1}{2} \tr(|\rho - \sigma|)$ where $|A|$ is the positive square root of $A^\dagger A$. 
\end{definition}
\begin{claim}[\cite{NC10}]
	If $\rho = \kb{\phi}$ and $\sigma = \kb{\psi}$, then $D(\rho,\sigma) = \sqrt{1 - |\braket{\phi}{\psi}|^2}$.
\end{claim}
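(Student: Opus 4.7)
The plan is to exploit the simple spectral structure of the operator $\rho-\sigma=\ketbra{\phi}{\phi}-\ketbra{\psi}{\psi}$. This operator is Hermitian, has trace zero (since both $\rho$ and $\sigma$ are density operators), and has rank at most two (its image lies in $\mathrm{span}(\ket{\phi},\ket{\psi})$). Consequently its nonzero eigenvalues must come in a pair $+\lambda,-\lambda$ for some $\lambda\geq 0$, so that $|\rho-\sigma|$ has eigenvalues $\lambda,\lambda$ on the same two-dimensional subspace and $\tr(|\rho-\sigma|)=2\lambda$. Thus $D(\rho,\sigma)=\lambda$, and it suffices to compute $\lambda$.

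To determine $\lambda$, I would use that $\tr((\rho-\sigma)^2)=2\lambda^2$. Expanding directly,
\[
\tr\bigl((\ketbra{\phi}{\phi}-\ketbra{\psi}{\psi})^2\bigr)=\tr(\ketbra{\phi}{\phi})+\tr(\ketbra{\psi}{\psi})-2\tr(\ketbra{\phi}{\phi}\ketbra{\psi}{\psi})=2-2|\braket{\phi}{\psi}|^2,
\]
using $\tr(\ketbra{\phi}{\phi}\ketbra{\psi}{\psi})=\braket{\psi}{\phi}\braket{\phi}{\psi}=|\braket{\phi}{\psi}|^2$. Equating the two expressions gives $\lambda^2=1-|\braket{\phi}{\psi}|^2$, hence $D(\rho,\sigma)=\sqrt{1-|\braket{\phi}{\psi}|^2}$ as claimed.

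There is essentially no obstacle: the argument is entirely linear-algebraic, and the only subtle point is justifying that the two nonzero eigenvalues of $\rho-\sigma$ are indeed exact opposites. If one prefers a more explicit route, one can instead pick an orthonormal basis of $\mathrm{span}(\ket{\phi},\ket{\psi})$, write $\ket{\psi}=\alpha\ket{\phi}+\beta\ket{\phi^\perp}$ with $|\alpha|^2+|\beta|^2=1$ and $|\alpha|=|\braket{\phi}{\psi}|$, and directly diagonalize the resulting $2\times 2$ matrix representing $\rho-\sigma$; the eigenvalues come out to $\pm\sqrt{1-|\braket{\phi}{\psi}|^2}$, recovering the same formula.
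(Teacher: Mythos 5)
Your proof is correct. The paper does not actually prove this claim — it simply cites Nielsen and Chuang — so there is no internal argument to compare against. Your main route (using that $\rho-\sigma$ is Hermitian, traceless, and of rank at most two, so its nonzero eigenvalues are $\pm\lambda$, then computing $\lambda$ from $\tr((\rho-\sigma)^2)$) is a clean and complete argument: the rank-one case is excluded because a single nonzero eigenvalue would force a nonzero trace, so the spectrum really is $\{\lambda,-\lambda,0,\dots\}$, and the trace computation correctly handles the noncommuting cross term via cyclicity. This is slightly slicker than the textbook treatment in Nielsen--Chuang, which (as in your suggested alternative) restricts to $\mathrm{span}(\ket{\phi},\ket{\psi})$, writes out the $2\times 2$ matrix of $\rho-\sigma$, and diagonalizes it explicitly; both routes reach the same formula, and yours avoids the coordinate bookkeeping at the small cost of the eigenvalue-pairing observation.
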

\begin{claim}[\cite{NC10}]\label{Claim:TD}
	$D(\rho,\sigma) = \max_{P} \{\tr(P \rho - P \sigma)\},$ where the maximum is over all positive operators $P \preccurlyeq I$.
\end{claim}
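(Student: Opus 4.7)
The plan is to use the spectral decomposition of the Hermitian operator $\rho - \sigma$ to simultaneously characterize both sides. Since $\rho - \sigma$ is Hermitian, it admits a decomposition $\rho - \sigma = Q - R$ where $Q, R \succcurlyeq 0$ and $Q$ and $R$ have orthogonal supports: $Q$ collects the positive eigenvalues (on their eigenspaces) and $R$ collects the absolute values of the negative eigenvalues. From this decomposition we immediately get $|\rho - \sigma| = Q + R$, so
\[ D(\rho,\sigma) = \tfrac{1}{2}\tr(Q+R). \]
Moreover, since $\tr(\rho) = \tr(\sigma) = 1$, we have $\tr(Q) - \tr(R) = \tr(\rho - \sigma) = 0$, so $\tr(Q) = \tr(R)$ and hence $D(\rho,\sigma) = \tr(Q) = \tr(R)$.

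Next I would establish the upper bound. For any positive operator $P$ with $P \preccurlyeq I$,
\[ \tr(P\rho - P\sigma) = \tr(P(Q-R)) = \tr(PQ) - \tr(PR) \le \tr(PQ) \le \tr(Q) = D(\rho,\sigma), \]
where the first inequality uses $\tr(PR) \ge 0$ (product of two positive operators has nonnegative trace) and the second uses $P \preccurlyeq I$ together with $Q \succcurlyeq 0$, so $\tr(PQ) = \tr(Q^{1/2}PQ^{1/2}) \le \tr(Q^{1/2}Q^{1/2}) = \tr(Q)$.

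For the matching lower bound, I would exhibit an operator achieving the maximum: let $P$ be the orthogonal projector onto the support of $Q$, i.e.\ onto the direct sum of eigenspaces of $\rho - \sigma$ corresponding to strictly positive eigenvalues. Clearly $0 \preccurlyeq P \preccurlyeq I$. Since $Q$ and $R$ have orthogonal supports, $PQ = Q$ and $PR = 0$, so
\[ \tr(P\rho - P\sigma) = \tr(PQ) - \tr(PR) = \tr(Q) = D(\rho,\sigma). \]
Combining the two bounds yields the claim.

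There is no real obstacle here: this is a classical computation. The only point requiring care is justifying the decomposition $\rho - \sigma = Q - R$ with $Q, R \succcurlyeq 0$ having orthogonal supports, which follows directly from the spectral theorem applied to the Hermitian operator $\rho - \sigma$ (group together the eigenvectors by the sign of their eigenvalue).
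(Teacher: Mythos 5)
The paper states this claim by citation to Nielsen and Chuang \cite{NC10} and does not supply its own proof, so there is nothing in the paper to compare against. Your argument is correct and is in fact the standard textbook proof: decompose the Hermitian operator $\rho-\sigma=Q-R$ into orthogonally supported positive and negative parts, use equality of traces to identify $D(\rho,\sigma)=\tr(Q)$, bound $\tr(P\rho-P\sigma)\le\tr(Q)$ for any $0\preccurlyeq P\preccurlyeq I$, and saturate the bound with the projector onto the support of $Q$. All steps are justified, including the two trace inequalities (nonnegativity of $\tr(PR)$ and monotonicity $\tr(PQ)\le\tr(Q)$), so the proof is complete.
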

\COMMENT{\begin{definition}
	For two quantum states $\rho,\sigma$, we define their fidelity  $F(\rho,\sigma) \eqdef \tr(\sqrt{\rho^{1/2}\sigma\rho^{1/2}}).$
\end{definition}
\begin{claim}
	For two states $\rho = \frac{1}{|\C|} \sum_{\cv \in \C} \kb{\cv} \otimes  \ket{\psi_\cv}\bra{\psi_\cv}$ and $\sigma = \frac{1}{|\C|} \sum_{\cv \in \C} \kb{\cv} \otimes  \ket{\phi_\cv}\bra{\phi_\cv}$, we have 
	$F(\rho,\sigma) =  \frac{1}{|\C|} \sum_{\cv \in \C} |\braket{\psi_{\cv}}{\phi_{\cv}}|.$
\end{claim}
\begin{proof}
	First, we can write $\rho^{1/2} = \frac{1}{\sqrt{|\C|}} \sum_{\cv \in \C} \kb{\cv} \otimes  \ket{\psi_\cv}\bra{\psi_\cv}$. From there, we compute
	\begin{align*}
		F(\rho,\sigma) & = \tr(\sqrt{\rho^{1/2}\sigma\rho^{1/2}}) \\*
		& = \frac{1}{{|\C|}}\tr\left(\sqrt{\sum_{\cv \in \C} \kb{\cv} \otimes \left(\kb{\psi_\cv} \cdot \kb{\phi_{\cv}} \cdot \kb{\psi_{\cv}}\right)}\right) \\
		& = \frac{1}{{|\C|}}\sum_{\cv \in \C} \tr\left(\sqrt{ \kb{\psi_\cv} \cdot \kb{\phi_{\cv}} \cdot \kb{\psi_{\cv}}}\right) \\
		& = \frac{1}{{|\C|}}\sum_{\cv \in \C} \tr\left(\sqrt{|\braket{\psi_\cv}{\phi_{\cv}}|^2 \kb{\psi_\cv}}\right) \\
		& = \frac{1}{{|\C|}}\sum_{\cv \in \C} |\braket{\psi_\cv}{\phi_{\cv}}|
	\end{align*}
\end{proof}
\begin{claim}[Fuchs-van de Graaf inequalities~\cite{NC10}]
	For any quantum states $\rho,\sigma$, we have
	$$ 1 - F(\rho,\sigma) \le D(\rho,\sigma) \le \sqrt{1 - F^2(\rho,\sigma)}.$$
\end{claim}}
\begin{claim}\label{Claim:Trace}
  Let $A,B$ be two positive semidefinite matrices. We have $\tr(AB) \ge 0$. As a direct corollary, for all $A$, $B$ and $C$ hermitian matrices such that
  $A \succcurlyeq 0$ and $B \preccurlyeq C$, we have that $\tr(AB) \le \tr(AC)$.
\end{claim}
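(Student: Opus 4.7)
The plan is to split the statement into the main inequality $\tr(AB)\ge 0$ and then deduce the corollary by applying it to $C-B$.

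For the main inequality, the approach I would take is to exploit that any PSD matrix has a PSD square root. Concretely, I would write $A = \sqrt{A}\sqrt{A}$ with $\sqrt{A} \succcurlyeq 0$ (and in particular Hermitian), and then use cyclicity of the trace to get
\[
\tr(AB) \;=\; \tr\bigl(\sqrt{A}\,\sqrt{A}\,B\bigr) \;=\; \tr\bigl(\sqrt{A}\,B\,\sqrt{A}\bigr).
\]
The matrix $M \eqdef \sqrt{A}\,B\,\sqrt{A}$ is Hermitian, and for every vector $\ket{v}$ we have $\bra{v} M \ket{v} = \bra{w} B \ket{w}$ where $\ket{w} \eqdef \sqrt{A}\ket{v}$. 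Since $B \succcurlyeq 0$ this is $\ge 0$, so $M$ is PSD, hence its trace (the sum of its nonnegative eigenvalues) is nonnegative. An equally clean alternative, which I might use instead if I want to avoid invoking the square root, is to spectrally decompose $A = \sum_i \lambda_i \kb{e_i}$ with $\lambda_i \ge 0$ and write $\tr(AB) = \sum_i \lambda_i \bra{e_i} B \ket{e_i} \ge 0$, since each $\bra{e_i} B \ket{e_i}$ is a nonnegative real because $B \succcurlyeq 0$.

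For the corollary, I would just note that by assumption $C - B$ is Hermitian and PSD, and $A$ is PSD, so the first part gives $\tr\bigl(A(C - B)\bigr) \ge 0$. Linearity of the trace then yields $\tr(AC) - \tr(AB) \ge 0$, which is the claimed inequality.

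There is really no main obstacle here; the statement is a standard fact in linear algebra and the two steps above constitute essentially the whole proof. The only small subtlety to be careful about is to observe that $\tr(AB)$ is automatically real in this setting (since $A$ and $B$ are both Hermitian, $AB$ need not be Hermitian, but $\tr(AB) = \tr(\sqrt{A} B \sqrt{A})$ is the trace of a Hermitian matrix and hence real), so that the inequality $\ge 0$ makes sense without any ambiguity.
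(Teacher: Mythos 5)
Your proof is correct and takes essentially the same approach as the paper, which writes $\tr(AB) = \tr(B^{1/2} A B^{1/2}) \ge 0$; you merely conjugate by $\sqrt{A}$ instead of $\sqrt{B}$, a symmetric and inconsequential choice. Your explicit derivation of the corollary and the remark on reality of $\tr(AB)$ are fine additions that the paper leaves implicit.
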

\begin{proof}
	We write $\tr(AB) = \tr(B^{1/2}AB^{1/2}) \ge 0$, where the inequality comes from the fact that $B^{1/2}AB^{1/2}$ is also positive semidefinite.
\end{proof}
\subsection{The Pretty Good Measurement} 
\begin{definition} Let $J$ be a set of quantum pure states $\lbrace \ket{\psi_j}\rbrace_{1 \leq j \leq |J|}$. Let $\rho = \sum_{j = 1}^{|J|}\ket{\psi_j}\bra{\psi_j}$. The Pretty Good Measurement associated to this set  is the POVM $\lbrace M_j\rbrace_{1 \leq j \leq |J|}$ with $$M_j = \rho^{-1/2}\ket{\psi_j}\bra{\psi_j}\rho^{-1/2}$$

	From now on, we will refer to the Pretty Good Measurement as PGM.

\end{definition}
One can check that we have indeed $\sum_{j} \rho^{-1/2} \kb{\psi_j} \rho^{-1/2} = \Im$. In our case, it turns out that the PGM is actually optimal for attaining the highest average probability of success.
\begin{proposition}
  Assume that the $\ket{\psi_j}$ are equally probable, independent and form a geometrically uniform set of states~\cite[\S VIII, A.]{EF01}, in the sense that there exists an Abelian group  $G$ of unitaries such that
  $\{\ket{\psi_j},\; j \in J\} = \{U \ket{\psi_{j_0}},\; U \in G\}$ for some $j_0 \in J$. Then the PGM applied to the $\ket{\psi_j}$'s has the largest average probability of success.
\end{proposition}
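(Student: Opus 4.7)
The plan is to verify a standard sufficient optimality condition for POVMs: for the equiprobable ensemble $\{(1/|J|, \kb{\psi_j})\}_{j}$, a POVM $\{N_j\}$ achieves the maximal average success probability as soon as the operator $\Lambda := \frac{1}{|J|}\sum_{k} \kb{\psi_k} N_k$ is Hermitian and satisfies $\Lambda \succcurlyeq \frac{1}{|J|}\kb{\psi_j}$ for every $j$. Indeed, for any competing POVM $\{N'_j\}$, Claim~\ref{Claim:Trace} applied to $N'_j \succcurlyeq 0$ and $\frac{1}{|J|}\kb{\psi_j} \preccurlyeq \Lambda$ yields
$$
  \psucc(N') = \sum_j \tr\!\Bigl(\tfrac{1}{|J|}\kb{\psi_j} N'_j\Bigr) \;\le\; \sum_j \tr(\Lambda N'_j) = \tr(\Lambda) = \psucc(M),
$$
the last equality following from $\Lambda = \sum_j \frac{1}{|J|}\kb{\psi_j} M_j$. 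So I would check these two conditions for the PGM $M_j = \rho^{-1/2}\kb{\psi_j}\rho^{-1/2}$.

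The core computation exploits the Abelian group symmetry. Writing $\ket{\psi_j} = U_j \ket{\psi_{j_0}}$ with $U_j \in G$, the group property gives $U\rho U^\dagger = \sum_j UU_j\kb{\psi_{j_0}}(UU_j)^\dagger = \rho$ for every $U \in G$, so that $[\rho, U_j] = [\rho^{-1/2}, U_j] = 0$. Therefore
$$
  \kb{\psi_j}\rho^{-1/2}\kb{\psi_j} = U_j \kb{\psi_{j_0}}\rho^{-1/2}\kb{\psi_{j_0}} U_j^\dagger = c\,\kb{\psi_j}, \qquad c := \bra{\psi_{j_0}}\rho^{-1/2}\ket{\psi_{j_0}},
$$
and summing over $j$ collapses the double product to
$$
  \sum_j \kb{\psi_j}\rho^{-1/2}\kb{\psi_j}\rho^{-1/2} \;=\; c\,\rho\cdot\rho^{-1/2} \;=\; c\,\rho^{1/2}.
$$
Hence $\Lambda = \frac{c}{|J|}\rho^{1/2}$ is manifestly Hermitian, which settles condition~(i).

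For condition~(ii), the same covariance reduces the task to the single inequality $c\,\rho^{1/2} \succcurlyeq \kb{\psi_{j_0}}$, because conjugation by $U_j$ maps it to the analogous inequality at index $j$ (using $U_j \rho^{1/2} U_j^\dagger = \rho^{1/2}$). Testing against an arbitrary $\ket{x}$ and substituting $\ket{y} := \rho^{1/4}\ket{x}$, the required bound rewrites as $c\,\braket{y}{y} \ge |\bra{y}\rho^{-1/4}\ket{\psi_{j_0}}|^2$, which is exactly Cauchy--Schwarz applied to $\ket{y}$ and $\rho^{-1/4}\ket{\psi_{j_0}}$, since $\|\rho^{-1/4}\ket{\psi_{j_0}}\|^2 = c$ by definition of $c$. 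The only delicate point is that $\rho^{-1/2}$ should be interpreted as the Moore--Penrose inverse on $\mathrm{supp}(\rho)$, which however contains each $\ket{\psi_j}$, so the computation carries through without change. The main obstacle is really identifying the symmetry that collapses $\sum_k \kb{\psi_k} M_k$ into a scalar multiple of $\rho^{1/2}$; once this is seen, both optimality conditions fall out essentially mechanically.
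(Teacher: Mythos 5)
Your proof is correct, and it supplies something the paper does not: the paper gives no argument for this proposition, only pointers to \cite[prop.~1]{BKMH97}, \cite[\S VIII, C, Th.~4]{EF01} and \cite[Th.~1]{SKIH98}. What you do is verify the sufficiency direction of the Holevo--Yuen--Kennedy--Lax optimality conditions: the covariance $U\rho U^\dagger = \rho$ for all $U \in G$ makes $\rho^{-1/2}$ commute with each $U_j$, which collapses $\Lambda = \frac{1}{|J|}\sum_j \kb{\psi_j} M_j$ to the manifestly Hermitian operator $\frac{c}{|J|}\rho^{1/2}$, and the remaining operator inequality $c\,\rho^{1/2}\succcurlyeq \kb{\psi_{j_0}}$ falls to Cauchy--Schwarz after the substitution $\ket{y}=\rho^{1/4}\ket{x}$. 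This is essentially the route taken in the references, but you have made it self-contained, and the handling of the Moore--Penrose pseudoinverse is done correctly since each $\ket{\psi_j}$ lies in $\mathrm{supp}(\rho)$. One observation you may wish to record: you never actually invoke commutativity of $G$. The only thing used is $G$-invariance of $\rho$, which holds for any finite group of unitaries once the orbit map $U\mapsto U\ket{\psi_{j_0}}$ is a bijection onto $\{\ket{\psi_j}\}_{j\in J}$, so your argument proves a marginally more general statement than the one posed.
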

This was proved in \cite[prop. 1]{BKMH97} in the case of an Abelian group generated by a single element and the more general form stated here is given in~\cite[\S VIII, C, Th. 4]{EF01}.
It is also a consequence of~\cite[Th. 1]{SKIH98}.

\COMMENT{
	We define the quantum decoding problem as in \cite{CT23}
	\begin{definition}
		For $q,n,k \in \N^*$, with $q \geq 2$, for $f : \Fqn \rightarrow \Comp $ with $\norm{f}_2 = 1$, we define \textbf{the distribution }$\D_\mathcal{Q}(q,n,k,f)$ sampled as follows : $G \Unif \lbrace 0,1 \rbrace^{k\times n}$, $m \Unif \Fqk$, $c = mG$, $\ket{\psi_c} = \sum_{e \in \Fqn}f(e)\ket{c+e}$, return $(G,\ket{\psi_c}, c)$.
	\end{definition}
	
	\begin{definition}
		For $q,n,k \in \N^*$, with $q \geq 2$, for $f : \Fqn \rightarrow \Comp $ with $\norm{f}_2 = 1$, the decoding problem QDP$(q,n,k,f)$ is the following. We sample $(G,\ket{\psi_c}, c) \Unif \D_\mathcal{Q}(q,n,k,f)$ and the goal is, given only $(G, \ket{\psi_c})$, to recover $c$.
	\end{definition}}

 \section{(Non-)Achievability Results}

Our goal is here to first give a proof of the achievability result, namely Theorem \ref{thm:achievability}.  This will be obtained by computing the probability of success of the Pretty Good Measurement when applied to a random linear code. It will show that as soon as the rate of the linear code is below $H_q(|\fg|)^2$, then we can solve the quantum decoding problem with high probability. On the other hand, the non-achievability result, namely Theorem \ref{thm:nonachievability}, will apply to {\em any} code and shows that as soon as the rate of the code is above this limit, we can not hope to solve the quantum decoding problem with non-negligible probability anymore.
Finally, we will apply these two results in the case of rank metric where we get similar results as for the Hamming metric \cite{CT24}, where the limit of achievability corresponds to
sample dual codewords of minimum rank distance.

\subsection{Notation and Preliminaries}\label{ss:preliminaries}

It turns out that for both results it will be very helpful to apply the quantum Fourier transform to the states $\ket{\psi_\cv}$ that we want to distinguish in the quantum decoding problem. In the whole section, $\CC$ denotes the code that we want to decode and $f$ denotes the  associated noise distribution over $\F_q^n$. It  is such that $\norm{f}_2=1$.
The input states to the decoding problem are denoted by $\ket{\psi_\cv}$ for $\cv \in \CC$, \ie 
$$
\ket{\psi_\cv} \eqdef \sum_{\ev \in \Fqn}f(\ev)\ket{\cv+\ev}.
$$
Note that these states are just shifts of the ``noise state'' $\sum_{\ev \in \Fqn}f(\ev)\ket{\ev}$ which explains why the quantum Fourier transform applied to the $\ket{\psi_\cv}$'s have all the same amplitudes up to a phase term as shown by 
\begin{lemma} \label{lem:qft_c}
We have 
\begin{equation}\label{eq:qft_c}
 \QFT{\ket{\psi_{\cv}}} = \sum_{\yv \in \F_q^n} \hf(\yv) \chi_{\cv}(\yv) \ket{\yv}.
 \end{equation} 
\end{lemma}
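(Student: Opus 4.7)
The plan is to directly compute $\QFT{\ket{\psi_\cv}}$ from the definition of the quantum Fourier transform and the definition of $\ket{\psi_\cv}$, using nothing more than a change of variables and the multiplicative property of characters. This is essentially the ``shift property'' of the Fourier transform: translating a function by $\cv$ multiplies its Fourier transform by the character $\chi_{\cv}$.

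Concretely, I would first rewrite $\ket{\psi_\cv} = \sum_{\ev \in \Fqn} f(\ev) \ket{\cv + \ev}$ in a form indexed by the register variable, using the substitution $\xv = \cv + \ev$:
\[
\ket{\psi_\cv} = \sum_{\xv \in \F_q^n} f(\xv - \cv) \ket{\xv}.
\]
Applying $\QFTt_{\F_q^n}$, I get $\QFT{\ket{\psi_\cv}} = \sum_{\yv} h(\yv) \ket{\yv}$ where
\[
h(\yv) = \frac{1}{\sqrt{q^n}} \sum_{\xv \in \F_q^n} \chi_{\yv}(\xv)\, f(\xv - \cv).
\]

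The next step is to change variables $\xv \mapsto \xv + \cv$ inside the sum, which gives
\[
h(\yv) = \frac{1}{\sqrt{q^n}} \sum_{\xv \in \F_q^n} \chi_{\yv}(\xv + \cv)\, f(\xv).
\]
Using the additivity of characters, $\chi_{\yv}(\xv + \cv) = \chi_{\yv}(\xv)\chi_{\yv}(\cv)$, and noting that $\chi_{\yv}(\cv) = e^{\frac{2i\pi\,\tr(\yv \cdot \cv)}{p}} = \chi_{\cv}(\yv)$ by symmetry of the trace bilinear form, I can pull $\chi_{\cv}(\yv)$ out of the sum:
\[
h(\yv) = \chi_{\cv}(\yv)\cdot \frac{1}{\sqrt{q^n}} \sum_{\xv \in \F_q^n} \chi_{\yv}(\xv)\, f(\xv) = \chi_{\cv}(\yv)\, \hf(\yv).
\]
Substituting back yields exactly \eqref{eq:qft_c}.

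There is really no obstacle here: the argument is a one-line computation once the definitions are unfolded. The only point requiring a tiny bit of care is observing that $\chi_{\yv}(\cv) = \chi_{\cv}(\yv)$, which is immediate from the explicit formula $\car{\yv}{\xv} = e^{\frac{2i\pi\,\tr(\yv\cdot\xv)}{p}}$ given in Section~\ref{ss:notation}, since the inner product $\yv \cdot \cv$ is symmetric.
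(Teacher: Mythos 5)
Your proof is correct and is essentially the same computation as the paper's: both unfold the QFT definition, use additivity of characters $\chi_\cdot(\cdot)$, and use the symmetry $\chi_{\yv}(\cv)=\chi_{\cv}(\yv)$ coming from the trace bilinear form. Your two changes of variables ($\ev \mapsto \xv = \cv+\ev$ and then back) cancel each other, so the route is a touch more roundabout than the paper's one-pass calculation, but the content is identical.
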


\begin{proof}
\eqref{eq:qft_c} follows on the spot from observing that 
\begin{eqnarray*}
	\QFT{\ket{\psi_{\cv}}} & = & \QFTt_{\F_q^n} \left(\sum_{\ev \in \F_q^n}  f(\ev) \ket{\cv + \ev} \right)\\
& = & \frac{1}{\sqrt{q^n}}  \sum_{\yv \in \F_q^n}   \left(\sum_{\ev \in \F_q^n} \chi_{\cv + \ev}(\yv) f(\ev) \right) \ket{\yv} \\
& = & \sum_{\yv \in \F_q^n}   \underbrace{\frac{1}{\sqrt{q^n}}\sum_{\ev \in \F_q^n} \chi_{\ev}(\yv) f(\ev)}_{\hf(\yv)} \chi_{\cv}(\yv) \ket{\yv} \\
& = &   \sum_{\yv \in \F_q^n} \hf(\yv)   \chi_{\cv}(\yv) \ket{\yv}.
\end{eqnarray*}
\end{proof}

\paragraph{The capacity of the classical-quantum pure state channel in the memoryless case.} The Fourier transform turns out to be really helpful for computing the capacity of the c-q channel capacity.
Indeed we have \begin{proposition}\label{pro:cq-capacity}
We assume that $f= g^{\otimes n}$ where $g$ is a function of $L_2$ norm 1 defined over $\F_q$.
The Holevo capacity  of the c-q channel mapping any $\cv \in \F_q^n$ to the pure state $\ket{\psi_\cv}$ is equal to
$H_q\left(|\fg|^2\right)$.
\end{proposition}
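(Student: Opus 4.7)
The plan is to make rigorous the QFT-based calculation sketched in the introduction. First, since $\rho_\cv \eqdef \ketbra{\psi_\cv}{\psi_\cv}$ is pure, $S(\rho_\cv)=0$, so the Holevo capacity reduces to
$$
C_\chi \;=\; \sup_{p} S\!\left( \sum_{\cv \in \F_q^n} p_\cv \rho_\cv \right),
$$
with $p$ a probability distribution on $\F_q^n$. Since $\QFTt_{\F_q^n}$ is unitary and the von Neumann entropy is invariant under unitary conjugation, I will equivalently bound $S(\bar\rho(p))$ where $\bar\rho(p) \eqdef \sum_{\cv} p_\cv \, \QFT{\rho_\cv}$.

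Next, Lemma~\ref{lem:qft_c} gives $\QFT{\ket{\psi_\cv}} = \sum_{\yv} \hf(\yv)\,\chi_\cv(\yv)\,\ket{\yv}$, so that the $(\yv,\yv')$-entry of $\QFT{\rho_\cv}$ is $\hf(\yv)\overline{\hf(\yv')}\,\chi_\cv(\yv-\yv')$. Two facts follow: (i) the diagonal entries of $\bar\rho(p)$ are $|\hf(\yv)|^2$ for \emph{any} $p$ (using $\chi_\cv(\zero)=1$), and (ii) the off-diagonal entries are modulated by the character sum $\sum_\cv p_\cv \chi_\cv(\yv-\yv')$. Applying the standard inequality $S(\rho) \le H(\operatorname{diag}\rho)$ in the computational basis (with matching logarithm base) yields
$$
S(\bar\rho(p)) \;\le\; -\sum_{\yv \in \F_q^n} |\hf(\yv)|^2 \log_{q^n} |\hf(\yv)|^2 \;=\; \frac{1}{n}\,H_q\bigl(|\hf|^2\bigr) \;=\; H_q(|\fg|^2),
$$
where the last two equalities use the normalization $S(\rho) = -\sum_i \lambda_i \log_{q^n} \lambda_i$ on the $q^n$-dimensional Hilbert space together with the fact that $|\hf|^2 = (|\fg|^2)^{\otimes n}$ is a product distribution, whose $q$-ary entropy is additive.

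For the matching lower bound, I will evaluate at the uniform distribution $p_\cv = q^{-n}$. Character orthogonality, $\sum_{\cv \in \F_q^n} \chi_\cv(\zv) = 0$ for $\zv \neq \zero$, kills every off-diagonal entry of $\bar\rho(p)$, so $\bar\rho(p)$ is \emph{exactly} the diagonal operator with spectrum $(|\hf(\yv)|^2)_{\yv}$ and saturates the upper bound. Combining the two halves gives $C_\chi = H_q(|\fg|^2)$. The one mildly delicate point is the entropy bookkeeping: verifying that the base $q^n$ in the definition of $S$ exactly cancels the factor $n$ produced by the product form of $|\hf|^2$, so that the answer is the single-letter quantity $H_q(|\fg|^2)$ rather than $n \cdot H_q(|\fg|^2)$.
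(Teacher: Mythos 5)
Your proof is correct, but it takes a genuinely different route from the paper's. The paper works at the single-letter level: it considers $\sup_{p_\alpha} S\left(\sum_{\alpha\in\F_q} p_\alpha \rho_\alpha\right)$ with $p_\alpha$ a distribution on $\F_q$, then establishes that the supremum is attained at the uniform distribution via an explicit symmetry argument (conjugating by the shift unitaries $U_\beta$, noting that the entropy is invariant, and invoking concavity of the von Neumann entropy to average over all shifts), and finally computes $S(\rho)$ in the Fourier basis. You instead work directly with the $n$-fold channel and, rather than proving optimality of the uniform input, you exploit two features simultaneously: the diagonal of the average state in the Fourier basis is \emph{fixed} at $|\hf(\yv)|^2$ for every input distribution $p$ on $\F_q^n$, and the general bound $S(\rho) \leq H(\operatorname{diag}\rho)$ therefore gives the same upper bound $H_q(|\fg|^2)$ for all $p$; the lower bound then follows from the observation that at uniform $p$ the character sum $\sum_\cv\chi_\cv(\yv-\yv')$ annihilates every off-diagonal entry, making the state exactly diagonal. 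This avoids the concavity/symmetry step entirely and is, if anything, a little more elementary; it also directly handles arbitrary (non-product) input distributions on $\F_q^n$, whereas the paper implicitly reduces to a single-letter optimization. Your bookkeeping with the base-$q^n$ von Neumann entropy versus the additive $q$-ary entropy of the product distribution $|\hf|^2 = (|\fg|^2)^{\otimes n}$ is also correct and matches the paper's unconventional normalization $S(\rho) = -\sum_i \lambda_i\log_d\lambda_i$.
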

\begin{proof}
The Holevo capacity $C_{\Hc}$ of this pure state channel is equal to $\sup_{p_\alpha} S\left(\sum_{\alpha \in \F_q} p_\alpha \rho_\alpha \right)$ where the supremum is taken over all
probability distributions $(p_\alpha)_{\alpha \in \F_q}$ over $\F_q$ and
\begin{eqnarray*}
\rho_\alpha &\eqdef &\ketbra{\psi_\alpha}{\psi_\alpha}\;\;\text{with}\\
\ket{\psi_\alpha} &\eqdef &\sum_{u \in \F_q} g(u) \ket{\alpha + u}.
\end{eqnarray*}
For $\beta \in \Fq$ we denote by $U_\beta$ the unitary mapping $\ket{\alpha}$ to $\ket{\alpha+\beta}$  for any $\alpha \in \F_q$. Let us denote by $(p_\alpha)_{\alpha \in \Fq}$ the probability
distribution that attains the supremum of  $S\left(\sum_{\alpha \in \F_q} p_\alpha \rho_\alpha \right)$. If there are several distributions that attain the supremum, we choose arbitrarily one of them. Let us observe that for any $\beta \in \Fq$
\begin{eqnarray}
C_{\Hc}=   S\left(\sum_{\alpha \in \F_q} p_\alpha \rho_\alpha \right) & = &   S\left(U_\beta \left(\sum_{\alpha \in \F_q} p_\alpha \rho_\alpha\right) U_\beta^\dagger \right) \nonumber\\
& = &   S\left(\sum_{\alpha \in \F_q} p_\alpha U_\beta\ketbra{\psi_\alpha}{\psi_\alpha}U_\beta^\dagger \right) \nonumber \\
& = &   S\left(\sum_{\alpha \in \F_q} p_\alpha \rho_{\alpha+\beta} \right)\nonumber\\
& = &   S\left(\sum_{\alpha \in \F_q} q^\beta_\alpha \rho_\alpha \right) \label{eq:group}
\end{eqnarray}
where $q^\beta$ is the probability distribution over $\F_q$ defined by $q^\beta_\alpha = p_{\alpha-\beta}$. Let $\sigma_\beta  \eqdef \sum_{\alpha \in \F_q} q^\beta_\alpha \rho_\alpha$. The above can be therefore rewritten $\forall \beta \in \F_q, \ C_{\Hc} = S(\sigma_\beta)$.
By using now the concavity of the von Neumann entropy we deduce that
\begin{eqnarray*}
  C_{\Hc} = \frac{1}{q} \sum_{\beta \in \F_q} S(\sigma_\beta) \le  S\left(\frac{1}{q}\sum_{\beta \in \Fq} \sigma_\beta\right) =  S\left(\frac{1}{q}\sum_{\beta \in \Fq} \sum_{\alpha \in \F_q} q^\beta_\alpha \rho_\alpha\right) = S\left( \frac{1}{q} \sum_{\alpha\in \Fq} \rho_\alpha\right).
\end{eqnarray*}
We deduce that the Holevo capacity is equal to $  S(\rho)$ where $\rho \eqdef \frac{1}{q} \sum_{\alpha\in \Fq} \rho_\alpha$. It is convenient here to express these density matrices in the Fourier
basis, since $\rho$ is diagonal in this basis:
\begin{eqnarray}
S(\rho) & = & S\left( \frac{1}{q} \sum_{\alpha \in \Fq} \ketbra{\psi_\alpha}{\psi_\alpha}\right) \nonumber\\
& = & S\left( \frac{1}{q} \sum_{\alpha \in \Fq} \ketbra{\QFT{\psi_\alpha}}{\QFT{\psi_\alpha}}\right) \nonumber\\
& = & S\left( \frac{1}{q} \sum_{\alpha \in \Fq} \sum_{\beta \in \Fq} \sum_{\gamma \in \Fq} \chi_\alpha(\beta)\fg(\beta)\overline{\chi_\alpha(\gamma)\fg(\gamma)}\ketbra{\beta}{\gamma}\right)\;\;\text{ (by using Lemma \ref{lem:qft_c})} \nonumber\\
& = & S\left( \frac{1}{q} \sum_{\alpha \in \Fq} \sum_{\beta \in \Fq} \sum_{\gamma \in \Fq} \chi_\alpha(\beta-\gamma)\fg(\beta)\overline{\fg(\gamma)}\ketbra{\beta}{\gamma}\right)  \nonumber \\
& = & S\left(\sum_{\beta \in \Fq} |\fg(\beta)|^2 \ketbra{\beta}{\beta}\right)\label{eq:final}\\
& = & H_q(|\fg|^2).\nonumber
\end{eqnarray}
\eqref{eq:final} is consequence of the fact
\begin{equation*}
\sum_{\alpha \in \Fq} \chi_\alpha(\beta- \gamma)  =  \left\{\begin{array}{lcl} 0&\;\,&\text{if $\beta \neq \gamma$}\\q & \;\; & \text{otherwise.}\end{array}\right.
\end{equation*}
\end{proof}

\paragraph{Probability of success of the Pretty Good Measurement.}
Applying a unitary transform does not change the probability of success of pretty good measurement and therefore the probability of success of the pretty good measurement $\PPGM$  for recovering $\cv$ from $\ket{\psi_\cv}$ is equal to the probability of success of the pretty good measurement for recovering $\cv$ from $\QFT{\ket{{\psi}_\cv}}$.
Let us first rewrite a little bit the expression of $\QFT{\ket{\psi_\cv}}$ by using the fact that $\chi_\cv(\ev)$ is constant on the cosets of $\C^\perp$ for any $\cv$ in $\C$ since
$\chi_\cv(\cv^\perp)=1$ for any $\cv$ in $\C$ and $\cv^\perp$ in $\C^\perp$. For any $\sv \in \F_q^n/C^\perp$ we fix an arbitrary element $\uv_\sv$ in $\sv+\C^\perp$. Therefore
\begin{align*}
	\QFT{\ket{\psi_\cv}} &=  \sum_{\ev \in \Fqn}\QFT{f}(\ev) \chi_{\cv}(\ev)\ket{\ev}\\
	&= \sum_{\sv \in \Fqn/\C^\perp}\sum_{\ev \in\sv + \C^\perp }\QFT{f}(\ev) \chi_{\cv}(\ev)\ket{\ev}\\
	&= \sum_{\sv \in \Fqn/\C^\perp}\chi_{\cv}(u_\sv) \sum_{\ev \in\sv + \C^\perp}\QFT{f}(\ev)\ket{\ev}\\
	&= \sum_{\sv \in \Fqn/\C^\perp}\chi_{\cv}(u_\sv) \ket{W_\sv},\\
        \text{where }\ket{W_\sv} & \eqdef \sum_{\ev \in\sv + \C^\perp}\QFT{f}(\ev)\ket{\ev}.
\end{align*}

\begin{proposition}\label{prop:PPGM}
The PGM succeeds to recover $\cv$ from $\ket{\QFT{\psi_{\cv}}}$ 
	with probability $\PPGM = \frac{1}{q^k}\lp\sum\limits_{\sv \in \Fqn/\C^\perp }Z_\sv\rp^2$ where $Z_\sv\eqdef \norm{\ket{W_\sv}}$
        
\end{proposition}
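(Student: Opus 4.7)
The plan is to compute $\PPGM$ directly from its spectral formula, leveraging the orthogonality structure uncovered by the Fourier rewriting $\QFT{\ket{\psi_{\cv}}} = \sum_{\sv\in \Fqn/\C^\perp}\chi_{\cv}(\uv_\sv)\ket{W_\sv}$. Since the quantum Fourier transform is unitary, the success probability of the PGM on the states $\{\ket{\psi_\cv}\}_{\cv\in\C}$ equals the success probability of the PGM on the transformed states $\{\QFT{\ket{\psi_\cv}}\}_{\cv\in\C}$, which are geometrically more convenient to handle because the vectors $\ket{W_\sv}$ live in disjoint cosets $\sv+\CCp$ of $\F_q^n$ and are therefore pairwise orthogonal. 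Writing $\ket{\tilde W_\sv} \eqdef \ket{W_\sv}/Z_\sv$ whenever $Z_\sv\neq 0$ thus provides an orthonormal family in which many of the operators below will be diagonal.

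The first key step is to compute the mixture $\rho \eqdef \sum_{\cv\in\C}\QFT{\altketbra{\psi_\cv}}$ in this orthonormal family. Expanding and exchanging sums gives
$$
\rho \;=\; \sum_{\sv,\sv'}\lp\sum_{\cv\in\C}\chi_{\cv}(\uv_\sv-\uv_{\sv'})\rp\ket{W_\sv}\bra{W_{\sv'}}.
$$
Using $\chi_\cv(\xv)=\chi_\xv(\cv)$ together with the character orthogonality relation and \eqref{eq:Cperp_car}, the inner sum equals $q^k$ if $\uv_\sv-\uv_{\sv'}\in\CCp$, i.e.\ $\sv=\sv'$, and vanishes otherwise. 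Therefore $\rho = q^k\sum_{\sv}Z_\sv^2\,\altketbra{\tilde W_\sv}$, which is already diagonal in the orthonormal basis $\{\ket{\tilde W_\sv}\}$, so
$$
\rho^{-1/2} \;=\; \frac{1}{\sqrt{q^k}}\sum_{\sv:\,Z_\sv\neq 0}\frac{1}{Z_\sv}\altketbra{\tilde W_\sv}
$$
on the support of $\rho$.

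The second step is to evaluate the per-codeword success probability. Recall from the definition of the PGM in \S \ref{ss:quantum_information} that when the input is $\QFT{\ket{\psi_\cv}}$, the probability that the POVM outcome corresponds to $\cv$ is $|\bra{\QFT{\psi_\cv}}\rho^{-1/2}\ket{\QFT{\psi_\cv}}|^2$. Plugging in $\QFT{\ket{\psi_\cv}} = \sum_\sv \chi_{\cv}(\uv_\sv) Z_\sv \ket{\tilde W_\sv}$, the phases $\chi_\cv(\uv_\sv)\overline{\chi_\cv(\uv_\sv)} = 1$ cancel entirely and each diagonal term contributes $Z_\sv^2\cdot \frac{1}{Z_\sv}$, giving
$$
\bra{\QFT{\psi_\cv}}\rho^{-1/2}\ket{\QFT{\psi_\cv}} \;=\; \frac{1}{\sqrt{q^k}}\sum_{\sv\in \Fqn/\CCp}Z_\sv,
$$
independently of $\cv$. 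Squaring this quantity gives the per-codeword success probability; since it does not depend on $\cv$, averaging over the uniformly random $\cv\in\C$ yields the announced formula $\PPGM = \frac{1}{q^k}\lp\sum_{\sv\in \Fqn/\CCp}Z_\sv\rp^2$.

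I do not anticipate any real obstacle: the only subtlety is to check that the phase cancellation is exact (it is, precisely because the PGM diagonalizes in the $\ket{\tilde W_\sv}$ basis and each $\ket{\QFT{\psi_\cv}}$ has the same amplitude modulus $Z_\sv$ on $\ket{\tilde W_\sv}$), and to be careful about cosets $\sv$ with $Z_\sv=0$, which simply contribute nothing to any of the sums above.
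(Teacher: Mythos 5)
Your proposal is correct and follows essentially the same route as the paper: diagonalize $\rho = \sum_{\cv\in\C}\QFT{\altketbra{\psi_\cv}}$ in the orthonormal family $\{\ket{\widetilde W_\sv}\}$ via character orthogonality (getting $\rho = q^k\sum_\sv Z_\sv^2\altketbra{\widetilde W_\sv}$), compute $\rho^{-1/2}$ on the support, and observe that the phases $\chi_\cv(\uv_\sv)$ cancel exactly in $\bra{\QFT{\psi_\cv}}\rho^{-1/2}\ket{\QFT{\psi_\cv}} = \frac{1}{\sqrt{q^k}}\sum_\sv Z_\sv$, so squaring gives the claimed $\PPGM$ independently of $\cv$. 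The only cosmetic difference is that the paper writes the intermediate state $\rho^{-1/2}\ket{\QFT{\psi_\cv}}$ explicitly before taking the inner product, while you go directly to the quadratic form; the substance is identical.
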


\begin{proof}
To get the probability of success of the PGM, we need to define the operators of the measurement. For that, we will use the same technique as in \cite[\S 5.1]{CT23}. 
	The operator of the PGM associated to the state $\ket{\QFT{\psi_\cv}}$ is defined as follows:
	$$M_\cv = \rho^{-1/2}\ket{\QFT{\psi_\cv}}\bra{\QFT{\psi_\cv}}\rho^{-1/2} \text{ with } \rho = \sum_{\cv \in \C}\ket{\QFT{\psi_\cv}}\bra{\QFT{\psi_\cv}}$$
	We write $$ \rho = \sum_{\cv \in \C}\ket{\QFT{\psi_\cv}}\bra{\QFT{\psi_\cv}} = q^k \sum_{\sv \in \Fqn/\C^\perp} Z_\sv^2\ket{\widetilde{W}_\sv}\bra{\widetilde{W}_\sv}$$
        where $\ket{\widetilde{W}_\sv} = \dfrac{1}{Z_\sv}\ket{W_\sv}$.
	Then, $\rho^{-1/2} = \dfrac{1}{\sqrt{q^k}}\sum_{\sv \in \Fqn/\C^\perp}\dfrac{1}{Z_\sv}\ket{\widetilde{W}_\sv}\bra{\widetilde{W}_\sv}$.\\
	We get \begin{align*}
		\rho^{-1/2}\ket{\QFT{\psi_c}} &= \frac{1}{\sqrt{q^k}}\sum_{\sv, \sv' \in \Fqn/\C^\perp}\frac{Z_{\sv'}}{Z_\sv}\chi_\cv(u_\sv)\ket{\widetilde{W}_\sv}\braket{\widetilde{W}_\sv}{\widetilde{W}_{\sv'}}\\
		&= \frac{1}{\sqrt{q^k}}\sum_{\sv \in \Fqn/\C^\perp}\chi_\cv(u_\sv)\ket{\widetilde{W}_\sv} 
	\end{align*}
	using the fact that the $\ket{\widetilde{W}_\sv}$ are pairwise orthogonal.\\
	Then, we can write \begin{align*}
	 \bra{\QFT{\psi_\cv}} \rho^{-1/2}\ket{\QFT{\psi_\cv}} &= \frac{1}{\sqrt{q^k}}\sum_{\sv, \sv' \in \Fqn/\C^\perp}\chi_\cv(u_{\sv'}-u_\sv)\braket{\widetilde{W}_\sv}{\widetilde{W}_{\sv'}}\\
	 &=  \frac{1}{\sqrt{q^k}}\sum_{\sv \in \Fqn/\C^\perp}\braket{\widetilde{W}_\sv}{W_{\sv}}
	\end{align*}
	Finally, we get the probability of success of the PGM:
	\begin{align*}
	p_\cv &= \bra{\QFT{\psi_\cv}}M_\cv \ket{\QFT{\psi_\cv}} \\
	&=  \bra{\QFT{\psi_\cv}} \rho^{-1/2}\ket{\QFT{\psi_\cv}}\bra{\QFT{\psi_\cv}}\rho^{-1/2} \ket{\QFT{\psi_\cv}}\\
	&= \frac{1}{q^k}\left|\sum_{\sv \in \Fqn/\C^\perp } \braket{\widetilde{W}_\sv}{W_\sv}\right|^2 = \frac{1}{q^k}\lp\sum_{\sv \in \Fqn/\C^\perp }Z_\sv\rp^2
	\end{align*}
\end{proof}

\paragraph{Nice family of typical sets.}
All the results of this article rely heavily on the notion of typical sets. The following definition will turn out to be very useful in the rest of the paper.

\begin{definition}[nice family of typical sets] Let $(n_i)_{i \geq 1}$ be an increasing set of positive integers and let $(p_i)_{i \geq 1}$ be a family of probability distributions where for every positive integer $i$, $p_i$ is a probability distribution over $\F_q^{n_i}$. We say that the sequence $(p_i)_{i \geq 1}$ admits a nice family of typical sets $\left\{\Typc{n_i}(p_i), i \in \N^*, \epsilon >0\right\}$ if there exist a constant $H$, two functions $\alpha$ and $\beta$ defined over $\Real^+$ and ranging over $\Real^+$ whose limit at $0$ is equal to $0$ and two positive constants $K_1$ and $K_2$ such that the defect function $\delta$ and the bounding functions $A_i(\epsilon)$ and $B_i(\epsilon)$ of the typical set $\Typc{n_i}$
satisfy
\begin{enumerate}
\item $\delta(\epsilon,n_i) = \negl(i)$ for every $\epsilon >0$
\item $A_i(\epsilon) \geq K_1 q^{-n_i(H+\alpha(\epsilon))}$ for every positive integer $i$ and any $\epsilon >0$,
\item $B_i(\epsilon) \leq K_2 q^{-n_i(H-\beta(\epsilon))}$ for every positive integer $i$ and any $\epsilon >0$.
\end{enumerate}
\end{definition}

For the proof of the strong converse of the achievability result, it will be helpful to allow the $\epsilon$ involved in the definition to go to $0$ as $i$ tends to infinity and
still require that the defect function is a negligible function of $i$. This gives the following definition
\begin{definition}[nice family of typical sets in the strong sense] Let $(n_i)_{i \geq 1}$ be an increasing set of positive integers and let $(p_i)_{i \geq 1}$ be a family of probability distributions where for every positive integer $i$, $p_i$ is a probability distribution over $\F_q^{n_i}$. We say that the sequence $(p_i)_{i \geq 1}$ admits a nice family of typical sets in the strong sense
$\left\{\Typii(p_i), i \in \N^*\right\}$ where $\epsilon$ is a function from $\N^*$ to $\Real^+$ where $\epsilon(i)$ tends to $0$ as $i$ tends to infinity if there exist a constant $H$, two functions $\alpha$ and $\beta$ defined over $\Real^+$ and ranging over $\Real^+$ whose limit at $0$ is equal to $0$ and two positive constants $K_1$ and $K_2$ such that the defect function $\delta$ and the bounding functions $A_i(\epsilon)$ and $B_i(\epsilon)$ of the typical set $\Typc{n_i}$
satisfy
\begin{enumerate}
\item $\delta(\epsilon(i),n_i) = \negl(i)$,
\item $A_i(\epsilon) \geq K_1 q^{-n_i(H+\alpha(\epsilon))}$ for every positive integer $i$ and any $\epsilon >0$,
\item $B_i(\epsilon) \leq K_2 q^{-n_i(H-\beta(\epsilon))}$ for every positive integer $i$ and any $\epsilon >0$.
\end{enumerate}
\end{definition}

Product probability distributions $p_i = p^{\otimes n_i}$  where $p$ is a probability distribution over $\F_q$  admit a nice family of typical sets as shown by the 
the standard definition of a typical set recalled in Proposition \ref{Proposition:1}. It corresponds to $H=H_q(p)$, $K_1=K_2=1$, $\alpha(\epsilon)=\beta(\epsilon)=\epsilon$.
We also notice that for a sequence of probability distributions admitting a nice family of typical sets the asymptotic entropy per symbol $\lim_{i \rightarrow \infty} \frac{H_q(p_i)}{n_i}$ exists and is equal to $H$:
\begin{proposition}\label{prop:nice_typical}
For a nice family of typical sets, the quantity $H$ appearing in its definition is necessarily unique and the entropy per symbol $\frac{H_q(p_i)}{n_i}$ has a limit when
$i$ tends to infinity:
$$
\lim_{i \rightarrow \infty} \frac{H_q(p_i)}{n_i} = H.
$$
\end{proposition}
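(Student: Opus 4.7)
The plan is to compute $\lim_{i \to \infty} H_q(p_i)/n_i$ directly, by splitting the entropy sum along the typical set and its complement and controlling each piece from the defining properties of a nice family; uniqueness of $H$ will then be automatic, since any constant appearing in such a description of $(p_i)_{i \geq 1}$ must coincide with this limit.

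Write $H_q(p_i) = H^{\mathrm{typ}}_i + H^{\mathrm{atyp}}_i$ where
\begin{align*}
H^{\mathrm{typ}}_i \eqdef -\sum_{\yv \in \Typi} p_i(\yv) \log_q p_i(\yv), \qquad H^{\mathrm{atyp}}_i \eqdef -\sum_{\yv \notin \Typi} p_i(\yv) \log_q p_i(\yv).
\end{align*}
For the typical part, the very definition of $\Typi$ gives $A_i(\epsilon) \le p_i(\yv) \le B_i(\epsilon)$, so $-\log_q B_i(\epsilon) \le -\log_q p_i(\yv) \le -\log_q A_i(\epsilon)$. Plugging in the bounds $A_i(\epsilon) \ge K_1 q^{-n_i(H+\alpha(\epsilon))}$ and $B_i(\epsilon) \le K_2 q^{-n_i(H-\beta(\epsilon))}$ and summing against $p_i$ using $p_i(\Typi) = 1 - \delta(\epsilon,n_i)$, we obtain after dividing by $n_i$
\begin{align*}
(1-\delta(\epsilon,n_i))\bigl(H - \beta(\epsilon) - \tfrac{\log_q K_2}{n_i}\bigr) \le \frac{H^{\mathrm{typ}}_i}{n_i} \le (1-\delta(\epsilon,n_i))\bigl(H + \alpha(\epsilon) - \tfrac{\log_q K_1}{n_i}\bigr).
\end{align*}

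The main obstacle is the atypical part: individual masses $p_i(\yv)$ for $\yv \notin \Typi$ can be arbitrarily small and thus contribute large $-\log_q$ values. I handle this by conditioning: setting $\delta \eqdef \delta(\epsilon,n_i)$ and $\tilde p_i(\yv) \eqdef p_i(\yv)/\delta$ on $\F_q^{n_i} \setminus \Typi$, a direct rearrangement gives
\begin{align*}
H^{\mathrm{atyp}}_i = -\delta \log_q \delta + \delta \, H_q(\tilde p_i) \le -\delta \log_q \delta + \delta n_i,
\end{align*}
since any probability distribution on $\F_q^{n_i}$ has $q$-ary entropy at most $n_i$. As $\delta = \negl(i)$, both $\delta$ and $(-\delta \log_q \delta)/n_i$ tend to $0$, so $H^{\mathrm{atyp}}_i/n_i \to 0$ as $i \to \infty$ with $\epsilon$ fixed.

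Combining the two estimates, for each fixed $\epsilon > 0$ we get
\begin{align*}
H - \beta(\epsilon) \le \liminf_{i \to \infty} \frac{H_q(p_i)}{n_i} \le \limsup_{i \to \infty} \frac{H_q(p_i)}{n_i} \le H + \alpha(\epsilon).
\end{align*}
Letting $\epsilon \to 0$ and invoking the hypothesis $\alpha(\epsilon), \beta(\epsilon) \to 0$ shows that the limit exists and equals $H$. For uniqueness, if $H'$ were a second constant admissible in a nice-family description of the same sequence $(p_i)$, the same argument with the corresponding $\alpha',\beta',K_1',K_2'$ would yield $H' = \lim_i H_q(p_i)/n_i = H$, so $H$ is determined by the family.
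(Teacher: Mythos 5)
Your proof is essentially the same as the paper's: you bound the entropy on the typical set using $A_i$ and $B_i$, and handle the atypical tail via the conditional (renormalized) distribution $\tilde p_i$ together with the generic bound $H_q(\tilde p_i)\le n_i$, then let $i\to\infty$ and $\epsilon\to 0$. One small improvement on your side: you correctly keep the atypical contribution as $-\delta\log_q\delta + \delta H_q(\tilde p_i)$ (a nonnegative quantity), whereas the paper writes the sign of the $\delta\log_q\delta$ term backwards before discarding it; the term is $o(n_i)$ either way, so the conclusion is unaffected, but your bookkeeping is cleaner. Your explicit final remark on uniqueness is also a fine addition, though it follows immediately from the existence of the limit.
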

\begin{proof}
Fix $\eps > 0$ and a nonnegative integer $i$ and $ \delta = \delta(\eps,n_i)$.  From the definition of a typical set we show that the entropy $H_q(p_i)$ is bounded by
\begin{equation}\label{eq:sandwich}
 - (1-\delta) \log_q(B_i(\epsilon)) \leq H_q(p_i) \leq - (1-\delta) \log_q(A_i(\epsilon)) + \delta n_i - \delta \log \delta,
\end{equation}
The lower bound just comes from the lower bound
$$H_q(p_i) \geq - \sum_{\xv \in \Typc{n_i}} \log_q(p_i(\xv)) p_i(\xv) \geq - \log_q(B_i(\epsilon)) p_i\left( \Typc{n_i} \right)=- (1-\delta) \log_q(B_i(\epsilon)).$$
The upper bound follows from
\begin{eqnarray*}
H_q(p_i) &= & - \sum_{\xv \in \Typc{n_i}} \log_q(p_i(\xv)) p_i(\xv) - \sum_{\xv \in \F_q^{n_i}\setminus\Typc{n_i}} \log_q(p_i(\xv)) p_i(\xv)\\
& \leq & - (1-\delta) \log_q(A_i(\epsilon)) + \delta \sum_{\xv \in \F_q^{n_i}\setminus\Typc{n_i}} - \log_q(p_i(\xv)) q_i(\xv)\;\;\text{(where $q_i(\xv)=\frac{p_i(\xv)}{\delta}$)}\\
& \leq & - (1-\delta) \log_q(A_i(\epsilon)) + \delta \left(\sum_{\xv \in \F_q^{n_i}\setminus\Typc{n_i}} - \log_q(q_i(\xv)) q_i(\xv)\right) - \delta \log_q(\delta)\\
& \leq & - (1-\delta) \log_q(A_i(\epsilon)) + \delta n_i - \delta \log_q(\delta).
\end{eqnarray*}
The last inequality comes from the fact that $q_i$ can be interpreted as a probability distribution over $\F_q^{n_i}\setminus\Typc{n_i}$ and therefore
$- \sum_{\xv \in \F_q^{n_i}\setminus\Typc{n_i}} \log_q(q_i(\xv)) q_i(\xv) = H_q(q_i)$ and by using that
the $q$-ary entropy of a random variable taking $M$ different values is upper-bounded by $\log_q(M)$.

In the case of a nice typical set $\Typc{n_i}$, \eqref{eq:sandwich} gives 
$$(1-\negl(i))\left( H-\beta(\epsilon) - \frac{\log_q K_2}{n_i}\right) \leq \frac{H_q(p_i)}{n_i} \leq (1-\negl(i))\left( H+\alpha(\epsilon) - \frac{\log_q K_1}{n_i}\right) + \negl(i).$$
Letting $\epsilon$ going to zero shows that
$$
(1-\negl(i))\left( H - \frac{\log_q K_2}{n_i}\right) \leq \frac{H_q(p_i)}{n_i} \leq (1-\negl(i))\left( H - \frac{\log_q K_1}{n_i}\right) + \negl(i).
$$
Then letting $i$ go to infinity shows that $\frac{H_q(p_i)}{n_i}$ has a limit equal to $H$.
\end{proof}

Another way to put this definition is to say that a sequence of probability distributions admitting a nice family of typical sets satisfies some form of the asymptotic equipartition property (AEP):
almost all the mass of the probability distribution $p_i$ concentrates on $\Typc{n_i}$ for every $\epsilon>0$ as $i$ tends to infinity, these sets have size $\approx q^{n_iH}$  and the probabilities of the elements in this subset
is close to $q^{n_i H}$ up to a multiplicative constant of the form $q^{n_i\gamma(\epsilon)}$ where $\gamma(\epsilon)$ tends to zero with $\epsilon$.

\subsection{Achievability Result}

We are going to prove here the following theorem:
\tractability*

As we would like to prove a similar result for the rank metric later on, we are going to prove a more general result here, which will enable us to prove both results. 
\begin{theorem}\label{thm:general}
	Let $q \ge 2$ be a prime power. We assume that we have an increasing sequence of positive integers
        $(n_i)_{i \geq 1}$ and a family of functions $(f_i)_{i \geq 1}$ with $f_i : \F_q^{n_i} \rightarrow \mathbb{C}$ with $\norm{f_i}_2 = 1$  such that
        the sequence $(p_i)_{i \geq 1}$ defined by $p_i \eqdef |\QFT{f_i}|^2$ admits a nice family of typical sets. We let
        $H \eqdef \lim_{i \rightarrow \infty}\frac{H_q(p_i)}{n_i}$.
For every $R \in (0,H)$ the PGM solves $\QDP(q,n_i,\lfloor Rn_i \rfloor, f_i)$ with probability $1 - \negl(i)$.
\end{theorem}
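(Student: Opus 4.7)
The plan is to apply Proposition~\ref{prop:PPGM}, which gives $\PPGM = \frac{1}{q^{k_i}}\big(\sum_\sv Z_\sv\big)^2$ with $k_i = \lfloor R n_i \rfloor$ and $Z_\sv^2 = \sum_{\ev \in \sv + \C^\perp} p_i(\ev)$, where $p_i = |\QFT{f_i}|^2$. Since $\sum_\sv Z_\sv^2 = 1$ and there are $q^{k_i}$ cosets of $\C^\perp$, Cauchy--Schwarz yields $\PPGM \le 1$, with equality iff all $Z_\sv^2$ equal $1/q^{k_i}$. The goal is therefore to show that, for a uniformly random generator matrix $\Gm$, the $Z_\sv^2$ concentrate tightly around $1/q^{k_i}$ so that Cauchy--Schwarz is nearly saturated.

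The key idea is to truncate to the typical contribution. Parameterize the cosets of $\C^\perp$ by syndromes $\sv \in \F_q^{k_i}$ via $\ev \mapsto \Gm \trsp{\ev}$ and introduce
\[ X_\sv \eqdef \sum_{\ev \in (\sv + \C^\perp) \cap \Typc{n_i}} p_i(\ev), \qquad Z_\sv \ge \sqrt{X_\sv}. \]
For $\sv \neq \zero$, the random vectors $\Gm \trsp{\ev}$ and $\Gm \trsp{\ev'}$ are independent uniform on $\F_q^{k_i}$ whenever $\ev, \ev'$ are linearly independent (and the contribution of collinear pairs to the variance is non-positive), so a direct computation gives
\[ \esp_\Gm[X_\sv] = \frac{1 - \delta(\epsilon, n_i) - o(1)}{q^{k_i}}, \qquad \Var_\Gm[X_\sv] \;\le\; \frac{1}{q^{k_i}} \sum_{\ev \in \Typc{n_i}} p_i(\ev)^2 \;\le\; \frac{B_i(\epsilon)}{q^{k_i}}, \]
the last bound using $p_i(\ev) \le B_i(\epsilon)$ throughout $\Typc{n_i}$. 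The crucial point here is that restricting the sum to the typical set kills any dependence on the atypical mass and leaves only the sharp estimate $B_i(\epsilon) \le K_2 q^{-n_i(H - \beta(\epsilon))}$, whereas the alternative naive bound $\Var_\Gm[Z_\sv^2] \le q^{-k_i}\,\norm{p_i}_2^2 \le q^{-k_i}(B_i(\epsilon) + \delta(\epsilon,n_i))$ would leave a residual term of order $q^{k_i}\delta(\epsilon,n_i)$ that cannot be controlled under the mere negligibility of $\delta$.

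Given $R < H$, choose $\epsilon > 0$ so small that $H - \beta(\epsilon) > R$; then $q^{k_i} B_i(\epsilon) \le K_2\, q^{n_i(R - H + \beta(\epsilon))}$ decays exponentially in $n_i$. Chebyshev's inequality at each $\sv$ with threshold $\eta_i$, followed by Markov's inequality applied to the number of ``bad'' cosets $\{\sv : X_\sv < (1-\eta_i)\,\esp_\Gm[X_\sv]\}$, shows that with probability $1 - \negl(i)$ over $\Gm$ the fraction of bad cosets is at most some $\mu_i$; the exponential smallness of $q^{k_i} B_i(\epsilon)$ leaves enough slack to take $\mu_i, \eta_i$ to be any negligible functions of $i$ (e.g.\ $n_i^{-\log n_i}$) while keeping the Markov bound $q^{k_i} B_i(\epsilon)/(\mu_i \eta_i^2)$ negligible. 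On this event, summing $Z_\sv \ge \sqrt{X_\sv} \ge \sqrt{(1-\eta_i)(1-\delta(\epsilon,n_i)-o(1))/q^{k_i}}$ over the at least $(1-\mu_i) q^{k_i}$ good cosets yields $\big(\sum_\sv Z_\sv\big)^2 \ge (1-\mu_i)^2(1-\eta_i)(1-\delta(\epsilon,n_i)-o(1))\, q^{k_i}$, and hence $\PPGM \ge 1 - \negl(i)$.

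The main obstacle is exactly the one dodged above: the nice-family hypothesis provides only $\delta(\epsilon, n_i) = \negl(i)$, not exponential decay, so a direct second-moment analysis of $Z_\sv^2$ fails, and the switch to the typical truncation $X_\sv$---whose variance involves only $B_i(\epsilon)$, controlled by the free parameter $\epsilon$---is precisely what makes the argument go through for an arbitrary nice family.
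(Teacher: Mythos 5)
Your proposal follows essentially the same strategy as the paper (truncate $Z_\sv^2$ to the typical set, then run a second-moment argument conditioned on a suitably small $\epsilon$), so the overall verdict is positive, but there are two points worth flagging. First, a small but genuine error: you claim that the contribution of collinear pairs to $\Var_\Gm[X_\sv]$ is non-positive, and hence bound $\Var_\Gm[X_\sv]$ by the diagonal term $\tfrac{1}{q^{k_i}}\sum_{\ev\in\Typc{n_i}}p_i(\ev)^2$ alone. In fact, for $\ev,\ev'$ with $\ev-\sv$ and $\ev'-\sv$ collinear and nonzero, $\Pr_\Gm[\ev\in\sv+\CCp \wedge \ev'\in\sv+\CCp]=q^{-k_i}$, so the covariance is $q^{-k_i}(1-q^{-k_i})>0$; these pairs contribute positively. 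Since each $\ev$ has at most $q-1$ nonzero collinear partners, this only costs an additional factor of order $q$, so the bound becomes $\Var_\Gm[X_\sv]\le q\,B_i(\epsilon)/q^{k_i}$ and your asymptotics survive; but the stated justification is wrong. (As an aside, your corrected bound is actually tighter than the paper's $\tfrac{qK_2^2}{K_1}\,q^{-n_i(H-2\beta(\epsilon)-\alpha(\epsilon))}/q^{k_i}$, which uses $|\Typc{n_i}|\,B_i(\epsilon)^2$ instead of $B_i(\epsilon)\sum_\ev p_i(\ev)$.) Second, your closing step differs from the paper's: after per-coset Chebyshev you run Markov on the count of bad cosets, obtaining a ``for most $\Gm$, most cosets are good, hence $(\sum_\sv Z_\sv)^2$ is large'' statement. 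The paper instead passes from per-coset concentration to $\esp_\Gm[Z_\sv]\ge(1-\negl(i))/\sqrt{q^{k_i}}$, sums by linearity, and applies Jensen to lower-bound $\overline{\PPGM}=\tfrac{1}{q^{k_i}}\esp_\Gm[(\sum_\sv Z_\sv)^2]$. Both deductions are valid; yours is slightly more work but yields the slightly stronger conclusion that $\PPGM$ itself is $1-\negl(i)$ for all but a negligible fraction of $\Gm$, rather than just on average over $\Gm$.
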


Fix $R \in (0,H)$ and let $k_i = \lfloor Rn_i \rfloor$. From Proposition \ref{prop:PPGM}, we know that the average probability $\overline{\PPGM}$ over the randomness of codes (obtained by choosing a generator matrix $\Gm$ of $\C$ at random) is given by
\begin{eqnarray}
\overline{\PPGM} & = & \frac{1}{q^{k_i}} \esp_\Gm \lp\sum\limits_{\sv \in \F_q^{n_i}/\C^\perp }Z_\sv\rp^2\quad \text{where}\\
Z_\sv & \eqdef & \sqrt{\sum_{\ev \in \sv + \C^\perp}\left|\QFT{f_i}(\ev)\right|^2}.
\end{eqnarray}
We are going to show that $\overline{\PPGM}$ is equal to $1 - \negl(i)$. This implies the theorem on the spot. To prove this, we will consider
a random variable $\widetilde{Z_\sv}(\epsilon)$ defined for every $\epsilon >0$ which is related to $Z_\sv$ which is such that $Z_\sv \geq \widetilde{Z_\sv}(\epsilon)$,
but $Z_\sv \approx \widetilde{Z_\sv}(\epsilon)$ and $\widetilde{Z_\sv}(\epsilon)$ has a much smaller variance than $Z_\sv$, which allows us to prove concentration by 
using the second moment method.
This random variable is defined by
$$
\widetilde{Z_\sv}(\epsilon) \eqdef \sqrt{\sum_{\yv \in \Typi\cap (\sv + \CCp)}\left|\QFT{f_i}(\yv)\right|^2}
$$

\begin{lemma}\label{lem:EZs}
For any $\epsilon >0$, any positive integer $i$ and $\sv \in \F_q^{n_i}$, we have $\esp_\Gm\left\{\widetilde{Z_\sv}(\epsilon)^2\right\} \geq \frac{1-\delta(\epsilon,n_i)}{q^{k_i}}$
\end{lemma}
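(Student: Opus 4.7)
The plan is to unfold the definition of $\widetilde{Z_\sv}(\epsilon)^2$, swap expectation with summation by linearity, and reduce the problem to computing, for each fixed $\yv \in \Typi$, the probability that $\yv$ belongs to the coset $\sv + \CCp$. Concretely, I would write
\begin{align*}
\esp_\Gm\{\widetilde{Z_\sv}(\epsilon)^2\}
&= \esp_\Gm\Biggl\{\sum_{\yv \in \Typi \cap (\sv + \CCp)} |\QFT{f_i}(\yv)|^2\Biggr\}
= \sum_{\yv \in \Typi} |\QFT{f_i}(\yv)|^2 \, \Pr_\Gm[\yv - \sv \in \CCp],
\end{align*}
and then compute the pointwise probability. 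Since $\CCp = \{\wv \in \F_q^{n_i} : \Gm \trsp{\wv} = 0\}$ and the $k_i$ rows of $\Gm$ are i.i.d.\ uniform in $\F_q^{n_i}$, for any nonzero vector $\wv$ each coordinate $\Gm_j \cdot \wv$ is uniform on $\F_q$ and the $k_i$ coordinates are independent, so $\Pr_\Gm[\wv \in \CCp] = q^{-k_i}$. If $\wv = 0$ the probability equals $1$. In either case $\Pr_\Gm[\yv - \sv \in \CCp] \geq q^{-k_i}$.

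Plugging this uniform lower bound into the sum yields
\begin{align*}
\esp_\Gm\{\widetilde{Z_\sv}(\epsilon)^2\} \;\geq\; q^{-k_i} \sum_{\yv \in \Typi} |\QFT{f_i}(\yv)|^2 \;=\; q^{-k_i}\, p_i(\Typi) \;=\; \frac{1 - \delta(\epsilon, n_i)}{q^{k_i}},
\end{align*}
using the definition $p_i = |\QFT{f_i}|^2$ and the defect $\delta(\epsilon,n_i) = 1 - p_i(\Typi)$ from Definition~\ref{def:typical}. There is no real obstacle here: the argument is essentially the weighted version of the first-moment part of Lemma~\ref{Variance}, and the only mild care needed is to handle separately the degenerate case $\yv = \sv$ in the probability computation (which only helps the bound). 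No concentration is used at this stage; that will be needed later when combining this expectation with the corresponding variance estimate in the second-moment method to conclude $Z_\sv \approx \widetilde{Z_\sv}(\epsilon)$ with high probability.
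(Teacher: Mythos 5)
Your proof is correct and follows essentially the same route as the paper: expand $\widetilde{Z_\sv}(\epsilon)^2$ as a sum over $\Typi$, use linearity of expectation, observe that $\Pr_\Gm[\yv \in \sv + \CCp]$ equals $q^{-k_i}$ when $\yv \neq \sv$ and $1$ otherwise, and sum $p_i$ over $\Typi$ to get $1 - \delta(\epsilon, n_i)$. You additionally spell out the uniformity/independence of the rows $\Gm_j \cdot \wv$, which the paper leaves implicit, but the argument is the same.
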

\begin{proof}
\begin{eqnarray}
	\esp_\Gm\left\{\widetilde{Z_\sv}(\epsilon)^2\right\} &= &\esp_\Gm \left\{\sum_{\yv \in \Typi\cap (\sv + \CCp)}p_i(\yv) \right\} \nonumber\\
	& = &  \sum_{\yv \in \Typi}p(\yv) \Pr_\Gm(\yv \in (\sv + \CCp) ) \nonumber \\
	&\ge &\frac{1}{q^{k_i}}\sum_{\yv \in \Typi}p(\yv) \label{eq:exception} \\
	&= &\frac{1-\delta(\epsilon,n_i)}{q^{k_i}}\text{ (by using the definition of $\delta(\epsilon,n_i)$).} \nonumber
\end{eqnarray}
Inequality \eqref{eq:exception} follows from the fact that $\Pr_\Gm(\yv \in (\sv + \CCp) )=1/q^{k_i}$ when $\yv \neq \sv$ and is
equal to $1$ if $\yv=\sv$.
\end{proof}

\begin{remk}\label{rmk:expectation}
Notice that the only reason why there is an inequality in the expression for $\esp\left\{\widetilde{Z_\sv}(\epsilon)^2\right\}$ is related to the possibility
that $\sv$ belongs to $\Typi$. If $\sv$ does not belong to $\Typi$ we have an equality instead and in any case if we let 
$\sT \eqdef \Typi \setminus \{\sv\}$ we have
\begin{equation}\label{eq:equalityinexpectation}
\esp\left\{\sum_{\yv \in \sT \cap (\sv + \CCp)}\left|\QFT{f_i}(\yv)\right|^2\right\}  =  \left\{ 
\begin{array}{ll} \frac{1- \delta(\epsilon,n_i)}{q^{k_i}} & \text{if $\sv \notin \Typi$} \\
\frac{1- \delta(\epsilon,n_i) - p(\sv)}{q^{k_i}} & \text{if $\sv \in \Typi$}
\end{array}
\right.
\end{equation}
\end{remk}
The variance of $\widetilde{Z_\sv}(\epsilon)^2$ on the other hand is upper-bounded by
\begin{lemma}\label{lem:varZs}
For any $\epsilon >0$, any positive integer $i$ and $\sv \in \F_q^{n_i}$, we have with the assumptions of Theorem
\ref{thm:general}
$$\Var\left[ \widetilde{Z_\sv}(\epsilon)^2 \right] \leq \frac{qK_2^2}{K_1}\cdot \frac{q^{-n_i(H - 2 \beta(\epsilon)-\alpha(\epsilon))}}{q^{k_i}},$$
where $K_1$, $K_2$ and $\alpha(\epsilon)$, $\beta(\epsilon)$ are respectively the constants and the functions appearing in Definition \ref{def:typical} of a nice family of typical sets.
\end{lemma}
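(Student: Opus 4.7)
The approach is to expand the variance as $\Var_\Gm[\widetilde{Z_\sv}(\epsilon)^2] = \esp_\Gm[\widetilde{Z_\sv}(\epsilon)^4] - \esp_\Gm[\widetilde{Z_\sv}(\epsilon)^2]^2$ and to compute the fourth moment as a double sum over $\Typi$. Writing $p_i = |\QFT{f_i}|^2$ and using that $\yv - \sv \in \CCp$ is equivalent to $\Gm\,\trsp{(\yv-\sv)} = 0$ for $\Gm$ uniform in $\F_q^{k_i \times n_i}$, we obtain
$$\esp_\Gm\!\left[\widetilde{Z_\sv}(\epsilon)^4\right] = \sum_{\yv, \yv' \in \Typi} p_i(\yv)\, p_i(\yv')\, q^{-d(\yv, \yv')\, k_i},$$
where $d(\yv,\yv') \eqdef \dim_{\F_q} \myspan(\yv-\sv, \yv'-\sv) \in \{0,1,2\}$.

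We then split the sum according to the value of $d$. The $d = 2$ contribution is bounded by $q^{-2k_i}\sum_{\yv, \yv' \in \Typi} p_i(\yv) p_i(\yv') = q^{-2k_i}(1 - \delta(\epsilon, n_i))^2$, which nearly matches the square $\esp_\Gm[\widetilde{Z_\sv}(\epsilon)^2]^2$ computed via Lemma \ref{lem:EZs} and Remark \ref{rmk:expectation}; after subtracting these two expressions, the difference is controlled by the $d \le 1$ terms. The $d = 0$ case corresponds to $\yv = \yv' = \sv$ and contributes at most $p_i(\sv)^2\,\un_{\sv \in \Typi} \le B_i(\epsilon)^2$, which is absorbed in the final bound.

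The $d = 1$ case is the dominant one. For fixed $\yv \in \Typi$, the set of $\yv' \in \F_q^{n_i}$ with $\yv' - \sv \in \F_q \cdot (\yv - \sv)$ has at most $q$ elements, hence
$$\sum_{\text{pairs with } d = 1} p_i(\yv)\, p_i(\yv') \;\le\; q\, |\Typi|\, B_i(\epsilon)^2.$$
Multiplying by $q^{-k_i}$ and plugging in the nice-family estimates $|\Typi| \le 1/A_i(\epsilon) \le q^{n_i(H + \alpha(\epsilon))}/K_1$ together with $B_i(\epsilon) \le K_2\, q^{-n_i(H - \beta(\epsilon))}$ gives exactly the advertised bound $\frac{q K_2^2}{K_1}\cdot \frac{q^{-n_i(H - 2\beta(\epsilon) - \alpha(\epsilon))}}{q^{k_i}}$.

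The main obstacle is the bookkeeping around the atomic contribution at $\yv = \sv$ when $\sv \in \Typi$: this creates an asymmetry between $\esp_\Gm[\widetilde{Z_\sv}(\epsilon)^2]^2$ (which contains a $p_i(\sv)$ term rather than only $q^{-k_i}$ factors) and the $d = 2$ portion of the fourth moment, so the cancellation between them is not exact. A careful but routine expansion shows that the resulting asymmetric residual pieces are themselves controlled by the $d = 0$ and $d = 1$ bounds above, so the $d = 1$ estimate is the dominant term and determines the final bound.
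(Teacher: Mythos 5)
Your proof is correct and takes essentially the same approach as the paper: both reduce the variance to a sum over pairs $(\yv, \yv')$ with $\yv - \sv$ and $\yv' - \sv$ colinear, then bound this using $p_i(\yv) \le B_i(\epsilon)$ and $|\Typi| \le 1/A_i(\epsilon)$. The paper works directly with $\Var\bigl(\sum p_i(\yv)\,\mathds{1}_{\yv \in \sv+\CCp}\bigr) = \sum p_i(\yv) p_i(\yv')\operatorname{Cov}(\mathds{1}_{\yv},\mathds{1}_{\yv'})$ and observes that each covariance vanishes unless $\yv-\sv$ and $\yv'-\sv$ are both nonzero and colinear, so the cancellation you worry about in your last paragraph is in fact exact and there are no residual $d=0$ or $\yv=\sv$ pieces to absorb.
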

\begin{proof}

	\begin{align*}
		\Var\left[ \widetilde{Z_\sv}(\epsilon)^2 \right] &= \Var\lp\sum_{\yv \in \Typi\cap (\sv+\CCp)}p(\yv)\rp\\
		&= \sum_{\yv \in \Typi}(p(\yv))^2 \Var(\mathds{1}_{\yv \in (\sv+\CCp)})  +  \sum_{\xv, \yv \in \Typi \atop \xv \neq \yv} p(\xv)p(\yv) \text{Cov}( \mathds{1}_{\xv \in (\sv+\CCp)},  \mathds{1}_{\yv \in (\sv+\CCp)})
	\end{align*}
	Here, 
	\begin{eqnarray*}
	\Var(\mathds{1}_{\yv \in (\sv+\CCp)})& = &\frac{1}{q^{k_i}}\lp 1- \frac{1}{q^{k_i}}\rp  \quad \text{if $\yv \neq \sv$}\\
	&=& 0 \quad \text{else.}
	\end{eqnarray*}
	When either $\xv=\sv$ or  $\yv =\sv$:
	$$
	\text{Cov}( \mathds{1}_{\xv \in (\sv+\CCp)},  \mathds{1}_{\yv \in (\sv+\CCp)})  = 0.
	$$
	When $\xv-\sv$ and $\yv -\sv$ are colinear (but none of them is zero), 
	\begin{align*}
		\text{Cov}( \mathds{1}_{\xv \in (\sv+\CCp)},  \mathds{1}_{\yv \in (\sv+\CCp)}) &= 
		\Pr(\xv \in (\sv+\CCp), \yv \in (\sv+\CCp)) -\Pr(\xv \in (\sv+\CCp)) \Pr(\yv \in (\sv+\CCp)) \\
		&= \frac{1}{q^{k_i}} - \frac{1}{q^{2k_i}}\\
		&= \frac{1}{q^{k_i}}\lp 1 - \frac{1}{q^{k_i}} \rp
	\end{align*}
	and when $\xv$ and $\yv$ are not colinear 
	\begin{align*}
\text{Cov}( \mathds{1}_{\xv \in (\sv+\CCp)},  \mathds{1}_{\yv \in (\sv+\CCp)}) &= 
\Pr(\xv \in (\sv+\CCp), \yv \in (\sv+\CCp)) -\Pr(\xv \in (\sv+\CCp)) \Pr(\yv \in (\sv+\CCp)) \\
&=\frac{1}{q^{2k_i}} - \frac{1}{q^{2k_i}} \\
&= 0
\end{align*}

	So the variance can be upper-bounded by
	\begin{align}
		\Var\left[ \widetilde{Z_\sv}(\epsilon)^2 \right]&\leq \frac{1}{q^{k_i}}\lp 1 - \frac{1}{q^{k_i}} \rp \lp \sum_{\yv \in \Typi}(p(\yv))^2 +  \sum_{\xv, \yv \in \Typi \atop \xv \neq \yv \text{ with } \xv \text{ and } \yv \text{ colinear}} p(\xv)p(\yv)\rp \nonumber \\
		&= \frac{1}{q^{k_i}}\lp 1 - \frac{1}{q^{k_i}} \rp    \sum_{\xv, \yv \in \Typi \atop \xv \text{ and } \yv \text{ colinear}} p(\xv)p(\yv) \nonumber \\
		&= \frac{1}{q^{k_i}}\lp 1 - \frac{1}{q^{k_i}}\rp\sum_{\yv \in \Typi}\lp \sum_{\xv \in  \Typi \atop \xv, \yv \text{ colinear}}p(\xv)p(\yv)\rp \nonumber \\
		&\le \frac{\abs{\Typi}}{q^{k_i}} qK_2^2 q^{-2n_i(H-\beta(\epsilon))} \nonumber \\
                & \le \frac{qK_2^2}{K_1}\frac{q^{-n_i(H - 2 \beta(\epsilon)-\alpha(\epsilon))}}{q^{k_i}},
\end{align}
where in the last inequality we used Lemma \ref{lem:cardinality} together with the definition of a nice family of typical sets
$$
\abs{\Typi} \leq \frac{1}{A_i(\epsilon)} \leq \frac{q^{n_i (H+\alpha(\epsilon))}}{K_1}.
$$
\end{proof}

These two lemmas yield the following concentration result
\begin{lemma}\label{lem:concentrationZseps}
Let $\gamma \eqdef H - R$. If $\epsilon$ is such that $2\beta(\epsilon)+\alpha(\epsilon) \leq \gamma/2$, then
$$
\Pr_\Gm\lc \widetilde{Z_\sv}(\eps)^2 \ge \frac{1}{q^{k_i}} (1 -\delta(\epsilon,n_i)- q^{-\frac{\gamma n_i}{8}})\rc = 1 - \negl(i).
$$
\end{lemma}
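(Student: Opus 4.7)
The plan is to apply Chebyshev's inequality to the random variable $X \eqdef \widetilde{Z_\sv}(\epsilon)^2$, using the mean lower bound from Lemma~\ref{lem:EZs} and the variance upper bound from Lemma~\ref{lem:varZs}. Set $\mu \eqdef \esp_\Gm[X]$ and $\sigma^2 \eqdef \Var_\Gm[X]$. Since Lemma~\ref{lem:EZs} gives $\mu \geq \frac{1-\delta(\epsilon,n_i)}{q^{k_i}}$, it suffices to show that the probability of the event $X < \mu - \frac{q^{-\gamma n_i/8}}{q^{k_i}}$ is negligible, because on the complement we obtain $X \geq \frac{1-\delta(\epsilon,n_i)-q^{-\gamma n_i/8}}{q^{k_i}}$.

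The natural threshold to feed into Chebyshev is therefore $t \eqdef \frac{q^{-\gamma n_i/8}}{q^{k_i}}$, which gives the bound
$$\Pr_\Gm\!\left[X < \mu - t\right] \leq \Pr_\Gm\!\left[|X-\mu| \geq t\right] \leq \frac{\sigma^2}{t^2} \leq \frac{qK_2^2}{K_1}\cdot q^{-n_i(H-2\beta(\epsilon)-\alpha(\epsilon))}\cdot q^{k_i}\cdot q^{\gamma n_i/4}.$$
Now one uses $k_i = \lfloor Rn_i\rfloor \leq Rn_i$ together with $H-R=\gamma$ to rewrite the exponent as $-\gamma n_i + 2\beta(\epsilon)n_i+\alpha(\epsilon)n_i + \gamma n_i/4$, and then applies the hypothesis $2\beta(\epsilon)+\alpha(\epsilon)\leq \gamma/2$ to conclude that this exponent is at most $-\gamma n_i/4$. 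Hence
$$\Pr_\Gm\!\left[X < \mu - t\right] \leq \frac{qK_2^2}{K_1}\cdot q^{-\gamma n_i/4},$$
which is negligible in $i$ since $\gamma >0$ is a fixed constant. Combining this with $\mu \geq \frac{1-\delta(\epsilon,n_i)}{q^{k_i}}$ yields the claim.

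The only subtlety I foresee is bookkeeping of the exponents (keeping track of the $q^{k_i}$ factors on both sides of Chebyshev) and making sure the condition $2\beta(\epsilon)+\alpha(\epsilon)\leq \gamma/2$ is used tightly enough that the exponent ends up strictly negative after absorbing the $\gamma/4$ slack coming from the choice of threshold $t$. No other structural idea is needed: the heavy lifting has already been carried out in the mean and variance estimates, and the concentration follows by a direct second-moment argument.
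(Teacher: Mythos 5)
Your proposal is correct and matches the paper's proof essentially step for step: both apply Chebyshev (Bienaym\'e--Tchebychev) to $X = \widetilde{Z_\sv}(\eps)^2$ with the same threshold $\Delta = q^{-\gamma n_i/8}/q^{k_i}$, plug in the mean bound from Lemma~\ref{lem:EZs} and the variance bound from Lemma~\ref{lem:varZs}, and absorb the exponent using $k_i \le Rn_i$, $H-R=\gamma$, and $2\beta(\eps)+\alpha(\eps)\le\gamma/2$ to reach $\negl(i)$.
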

\begin{proof}
We use Bienaym\'e-Tchebychev's inequality for $X \eqdef \widetilde{Z_\sv}(\epsilon)^2$ and obtain for $\Delta >0$:
	\begin{equation}\label{eq:BT}
        \Pr\lp \abs{X - \esp(X)} \ge \Delta \rp \le \frac{\Var(X)}{\Delta^2}.
        \end{equation}
        To simplify expressions denote by $K$ the constant $K \eqdef \frac{qK_2^2}{K_1}$.  
        Let us rewrite a little bit the upper bound we have on
        $\Var(X)$ derived from Lemma \ref{lem:varZs}. We have
        \begin{eqnarray}
        \Var(X)& \leq &\frac{qK_2^2}{K_1}\cdot \frac{q^{-n_i(H - 2 \beta(\epsilon)-\alpha(\epsilon))}}{q^{k_i}} \nonumber\\
        & = & K \frac{q^{-n_i(H - R -2 \beta(\epsilon)-\alpha(\epsilon))}}{q^{2k_i}} \nonumber\\
        & \leq & K \frac{q^{-\frac{\gamma n_i}{2}}}{q^{2k_i}}. \label{eq:varZs}
\end{eqnarray}
We choose
$$
\Delta = \frac{q^{-\frac{\gamma n_i}{8}}}{q^{k_i}}.
$$
By using \eqref{eq:varZs} we obtain
\begin{eqnarray*}
\frac{\Var(X)}{\Delta^2} & \leq &  K \cdot \frac{q^{2k_i} \cdot q^{\frac{\gamma n_i}{4}}  q^{-\frac{\gamma n_i}{2}}}{q^{2k_i}} \\
& = & K \cdot q^{-\frac{\gamma n_i}{4}} \\
& = & \negl(i).
\end{eqnarray*}
We obtain by using this inequality
in \eqref{eq:BT} and replacing $\Delta$ by its value:
$$
\Pr_\Gm\lc \widetilde{Z_\sv}(\eps)^2 \ge \frac{1}{q^{k_i}} (1- \delta(\epsilon,n_i) - q^{-\frac{\gamma n_i}{8}})\rc = 1 - \negl(i).
$$
\end{proof}

A slight variation of this concentration result will be needed later on. It says that
\begin{lemma}\label{lem:concentrationZsepsprime}
Let $\gamma \eqdef H - R$. If $\epsilon$ is such that $2\beta(\epsilon)+\alpha(\epsilon) \leq \gamma/2$, then if we let $\sT \eqdef \Typi \setminus \{\sv\}$ we have
$$
\Pr_\Gm\lc \left|\sum_{\yv \in \sT \cap (\sv + \CCp)}\left|\QFT{f_i}(\yv)\right|^2 -E \right|\le \frac{q^{-\frac{\gamma n_i}{8}}}{q^{k_i}} \rc = 1 - \negl(i),
$$
where $E$ is the expected value of $\sum_{\yv \in \sT \cap (\sv + \CCp)}\left|\QFT{f_i}(\yv)\right|^2$. It is given by the following expression
$$
E = \left\{\begin{array}{ll} \frac{1- \delta(\epsilon,n_i)}{q^{k_i}} & \text{if $\sv \notin \Typi$} \\
\frac{1- \delta(\epsilon,n_i) - p(\sv)}{q^{k_i}} & \text{if $\sv \in \Typi$}
\end{array}
\right.
$$
\end{lemma}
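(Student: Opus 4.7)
The plan is to reduce this statement to a direct application of Bienaymé--Tchebychev, using the variance bound already established in Lemma \ref{lem:varZs}. The key observation is that the random variable
$$ Y \eqdef \sum_{\yv \in \sT \cap (\sv + \CCp)}\left|\QFT{f_i}(\yv)\right|^2 $$
differs from $\widetilde{Z_\sv}(\epsilon)^2$ only by a deterministic term. Indeed, $\sv$ always lies in $\sv + \CCp$ (since $\zero \in \CCp$), so if $\sv \in \Typi$ one has $\widetilde{Z_\sv}(\epsilon)^2 = Y + p(\sv)$, and if $\sv \notin \Typi$ one has $\widetilde{Z_\sv}(\epsilon)^2 = Y$. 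In either case $Y = \widetilde{Z_\sv}(\epsilon)^2 - p(\sv)\mathds{1}_{\sv \in \Typi}$, a shift by a quantity that does not depend on $\Gm$.

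Consequently, first I would record that $\Var_\Gm(Y) = \Var_\Gm\!\bigl(\widetilde{Z_\sv}(\epsilon)^2\bigr)$, which by Lemma \ref{lem:varZs} is at most $\frac{qK_2^2}{K_1}\cdot \frac{q^{-n_i(H - 2\beta(\epsilon)-\alpha(\epsilon))}}{q^{k_i}}$. Under the hypothesis $2\beta(\epsilon)+\alpha(\epsilon) \le \gamma/2$, this gives exactly the bound
$$ \Var_\Gm(Y) \;\le\; K\,\frac{q^{-\gamma n_i / 2}}{q^{2k_i}}, \qquad K \eqdef \frac{qK_2^2}{K_1}, $$
which was the decisive inequality \eqref{eq:varZs} in the proof of Lemma \ref{lem:concentrationZseps}. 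Next I would compute $\esp_\Gm(Y)$ using exactly Remark \ref{rmk:expectation} / equation \eqref{eq:equalityinexpectation}, which already supplies the two case values of $E$ stated in the lemma.

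Finally, applying Bienaymé--Tchebychev to $Y$ with threshold $\Delta \eqdef \frac{q^{-\gamma n_i/8}}{q^{k_i}}$ gives
$$ \Pr_\Gm\!\left(\,\left|Y - E\right| \ge \Delta\,\right) \;\le\; \frac{\Var_\Gm(Y)}{\Delta^2} \;\le\; K\cdot q^{-\gamma n_i / 4} \;=\; \negl(i), $$
which is the desired concentration statement. There is no real obstacle: the whole point is that the two-sided form follows immediately because Chebyshev is itself two-sided, and the one-sided statement in Lemma \ref{lem:concentrationZseps} was just a weakening of it. The only minor subtlety worth double-checking is that subtracting the deterministic quantity $p(\sv)\mathds{1}_{\sv \in \Typi}$ does not alter the variance (it does not) and aligns the expectation with the piecewise expression for $E$ given in the statement (it does, by construction of $E$ in Remark \ref{rmk:expectation}).
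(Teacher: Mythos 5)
Your proof is correct and follows essentially the same route as the paper: use the expectation formula from Remark \ref{rmk:expectation}, observe that the variance coincides with that of $\widetilde{Z_\sv}(\epsilon)^2$ (because the two random variables differ by the deterministic offset $p(\sv)\mathds{1}_{\sv \in \Typi}$, a point you make explicit where the paper merely asserts it), and apply Bienaym\'e--Tchebychev with the same threshold $\Delta = q^{-\gamma n_i/8}/q^{k_i}$ as in Lemma \ref{lem:concentrationZseps}.
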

\begin{proof}
We use for the expected value of $\sum_{\yv \in \sT \cap (\sv + \CCp)}\left|\QFT{f_i}(\yv)\right|^2$ its expression given in \eqref{eq:equalityinexpectation}.
The upper-bound on the variance of $\sum_{\yv \in \sT \cap (\sv + \CCp)}\left|\QFT{f_i}(\yv)\right|^2$ is clearly the same as the upper-bound on the variance of 
$\widetilde{Z_\sv}(\eps)^2$ given in Lemma \ref{lem:varZs}. This yields immediately Lemma \ref{lem:concentrationZsepsprime} similarly to what was done for obtaining 
Lemma \ref{lem:concentrationZseps} in the lower bound of expectation of the random variable $\sum_{\yv \in \sT \cap (\sv + \CCp)}\left|\QFT{f_i}(\yv)\right|^2$ to which we apply Bienaym\'e-Tchebychev's inequality. 
\end{proof}

We are ready now to prove Theorem \ref{thm:general}.
\begin{proof}[Proof of Theorem \ref{thm:general}]
Since $Z_\sv^2 \geq \widetilde{Z_\sv}(\epsilon)^2$ for any $\epsilon >0$, we obtain by using Lemma \ref{lem:concentrationZseps}
by choosing $\epsilon$ such that $2\beta(\epsilon)+\alpha(\epsilon) \leq \gamma/2$,
$$\Pr_\Gm\lc Z_\sv^2 \ge \frac{1 - \delta(\epsilon,n_i) - q^{-\frac{\gamma n_i}{8}}}{q^{k_i}} \rc = 1 - \negl(i)$$ which gives  $ \Pr_\Gm\lc Z_\sv\ge \sqrt{\frac{1}{q^{k_i}} (1 - \delta(\epsilon,n_i) - q^{-\frac{\gamma n_i}{8}})} \rc = 1 - \negl(i).$
	Since $\delta(\epsilon,n_i)=\negl(i)$, this implies for all $ \sv \in \Fqni/\C^\perp$ :
	$$	\esp_{\Gm}\lc Z_\sv\rc \ge  \frac{1}{\sqrt{q^{k_i}}} (1 - \negl(i)) $$
	Then, $	 \esp_{\Gm}\left[\sum\limits_{\sv \in \Fqni/\C^\perp} Z_\sv\right] \ge \sqrt{q^{k_i}}\left(1 - \negl(i)\right).$

	We use Jensen's inequality $\esp_{\Gm}(X^2) \ge (\esp_{\Gm}(X))^2$ in order to obtain:	
	$$\overline{\PPGM}=\frac{1}{q^{k_i}}\left(\esp_\Gm \left\{ \sum_{\sv \in \Fqni/\C^\perp}Z_\sv \right\}^2 \right)  \ge \frac{1}{q^{k_i}}\lp \esp\left\{\sum_{\sv \in \Fqni/\C^\perp}Z_\sv\right\}\rp ^2 \ge 1 - \negl(i).$$
\end{proof}

With these results at hand we can now prove Theorem \ref{Theorem:Tractability}.
\begin{proof}[Proof of Theorem \ref{Theorem:Tractability}]
In this case, $f_i = g^{\otimes i}$ and we
take $n_i=i$, $H= H_q(\abs{\QFT{g}}^2)$, $A_i(\epsilon)= q^{-i(H+\eps)}$, $B_i(\epsilon)= q^{-i(H-\eps)}$. Proposition \ref{Proposition:1} shows that
$\delta(\eps,n)=\negl(n)$ for all $\eps >0$. This shows that we can apply Theorem \ref{thm:general} to this case, from which we derive immediately Theorem \ref{Theorem:Tractability}.
\end{proof}

\subsection{Strong converse}

We are now going to prove that whatever code we take in the $\QDP$ problem, be it random as in the statement of the QDP problem or chosen carefully and  be it even non-linear, then we cannot solve the QDP problem with non negligible probability as soon as the rate is 
just slightly above $H_q(|\fg|^2)$. More precisely, we are going to prove

\intractability*

As for achievability results, we would like to prove a similar result for the rank metric later on. Then, we are going to prove a more general result here, which will enable us to prove both results :

\begin{theorem}\label{thm:nonachievability_general}

	We assume that we have an increasing sequence of positive integers $(n_i)_{i \geq 1}$ and a family of functions $(f_i)_{i \geq 1}$ with $f_i : \F_q^{n_i} \rightarrow \mathbb{C}$ with $\norm{f_i}_2 = 1$
such that the sequence $(p_i)_{i \geq 1}$ defined by $p_i \eqdef |\QFT{f_i}|^2$ admits a nice family of typical sets in the strong sense.

Let $R$ be a fixed constant in $(0,1)$ such that $R > H$. For any quantum algorithm and any code $\CC_i \in \F_q^{n_i}$ with at least $q^{Rn_i}$ codewords the probability $\psucc$ to solve $\QDP(\CC_i,f_i)$ satisfies
$$ \psucc  = \negl(i).$$
\end{theorem}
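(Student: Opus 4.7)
The plan is to use the quantum Fourier transform to expose a low-dimensional \emph{typical subspace} which approximately contains every state $\ket{\psi_\cv}$, and then invoke a standard dimension-counting argument for distinguishability.

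First I would note that since the QFT is unitary, solving $\QDP(\CC_i,f_i)$ is equivalent to distinguishing the states $\QFT{\ket{\psi_\cv}}$, which by Lemma \ref{lem:qft_c} are given by $\sum_\yv \QFT{f_i}(\yv)\chi_\cv(\yv)\ket{\yv}$; crucially, their amplitudes differ only by a phase $\chi_\cv(\yv)$, and the squared-modulus distribution over $\yv$ is exactly $p_i=|\QFT{f_i}|^2$ regardless of $\cv$. Letting $\epsilon(i)$ be the sequence provided by the definition of a nice family of typical sets in the strong sense, define the typical subspace
$$V_i \eqdef \myspan\left\{\ket{\yv}:\yv\in\Typii(p_i)\right\},\quad \Pi_i \eqdef \sum_{\yv\in\Typii}\ketbra{\yv}{\yv}.$$
Then $\norm{\Pi_i\QFT{\ket{\psi_\cv}}}^2 = \sum_{\yv\in\Typii}p_i(\yv) = 1-\delta(\epsilon(i),n_i) = 1-\negl(i)$ for every $\cv$, and by Lemma \ref{lem:cardinality} together with the nice-typical-set bound on $A_i(\epsilon(i))$,
$$\dim V_i \,=\, |\Typii|\,\le\, \frac{1}{A_i(\epsilon(i))}\,\le\, \frac{1}{K_1}\,q^{n_i(H+\alpha(\epsilon(i)))}.$$

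Next I would replace each $\QFT{\ket{\psi_\cv}}$ by its projection $\ket{\tilde\psi_\cv} \eqdef \Pi_i\QFT{\ket{\psi_\cv}}/\norm{\Pi_i\QFT{\ket{\psi_\cv}}}\in V_i$. The overlap computation above together with the formula for the trace distance between pure states gives $D\bigl(\kb{\QFT{\psi_\cv}},\kb{\tilde\psi_\cv}\bigr) \le \sqrt{\delta(\epsilon(i),n_i)} = \negl(i)$. By Claim \ref{Claim:TD}, for any POVM $\{M_\cv\}_{\cv\in\CC_i}$ used by the algorithm,
$$\tr\!\bigl(M_\cv\kb{\QFT{\psi_\cv}}\bigr) \le \tr\!\bigl(M_\cv\kb{\tilde\psi_\cv}\bigr) + \negl(i).$$
Since $\ket{\tilde\psi_\cv}\in V_i$, we have $\kb{\tilde\psi_\cv}\preccurlyeq \Pi_i$, so by Claim \ref{Claim:Trace} applied to $M_\cv\succcurlyeq 0$ we get $\tr(M_\cv\kb{\tilde\psi_\cv}) \le \tr(M_\cv\Pi_i)$. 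Summing over $\cv$ and using $\sum_\cv M_\cv\preccurlyeq \Im$ yields $\sum_\cv\tr(M_\cv\Pi_i)\le\tr(\Pi_i)=\dim V_i$.

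Averaging and dividing by $|\CC_i|\ge q^{Rn_i}$ gives
$$\psucc \,=\, \frac{1}{|\CC_i|}\sum_{\cv\in\CC_i}\tr\!\bigl(M_\cv\kb{\QFT{\psi_\cv}}\bigr) \,\le\, \frac{\dim V_i}{|\CC_i|} + \negl(i) \,\le\, \frac{1}{K_1}\,q^{-n_i(R-H-\alpha(\epsilon(i)))} + \negl(i).$$
Because $\alpha(\epsilon(i))\to 0$ as $i\to\infty$ and $R-H>0$ is a fixed positive constant, for all sufficiently large $i$ the exponent $R-H-\alpha(\epsilon(i))$ is at least $(R-H)/2>0$, so the first term is negligible and hence $\psucc=\negl(i)$. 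The main obstacle in making this argument rigorous is the trace-distance bookkeeping: one must be careful that the error from replacing the true states by their normalized projections onto $V_i$ survives division by $|\CC_i|$, which works precisely because the trace distance $\sqrt{\delta(\epsilon(i),n_i)}$ is paid \emph{per codeword} in the averaged success probability rather than summed over all $|\CC_i|$ codewords.
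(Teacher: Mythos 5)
Your proposal is correct and follows essentially the same route as the paper: pass to the Fourier basis, truncate to the typical subspace of dimension $\le q^{n_i(H+\alpha(\epsilon(i)))}/K_1$, bound the per-codeword loss by the trace distance $\sqrt{\delta(\epsilon(i),n_i)}$ via Claim \ref{Claim:TD}, and finish with the dimension-counting bound $\dim V_i / |\CC_i|$. The only cosmetic difference is that you inline what the paper factors out as Lemmas \ref{lem:candctilde}, \ref{lem:smallerspace} and \ref{lem:approximate}.
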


The proof of this statement relies on several statements. The first one uses the fact that for any $\cv$ in $\CC_i$ the state $\QFT{\ket{\psi_{\cv}}}$ which is equal to $\sum_{\yv \in \F_q^{n_i}} \hf_i(\yv) \chi_{\cv}(\yv) \ket{\yv}$ by Lemma
\ref{lem:qft_c} is extremely close to the state $\ket{\wpsi_\cv}$ which is obtained from $\QFT{\ket{\psi_{\cv}}}$ by keeping in the previous sum only those corresponding to a $\yv$ that are in the typical set $\Typcc{n_i}{\epsilon(i)}$ associated to $|\ff_i|^2$. 
\begin{lemma}\label{lem:candctilde}
Let $i$ be a positive integer. For $\cv \in \CC$, we let $\ket{\wpsi_\cv} \eqdef \frac{1}{\sqrt{1-\delta(\epsilon(i),n_i)}} \sum_{\yv \in \Typcc{n_i}{\eps(i)}} \hf_i(\yv) \chi_{\cv}(\yv) \ket{\yv}$, where $\delta(\eps(i),n_i) \eqdef 1 - \sum_{\yv \in \Typcc{n_i}{\eps(i)}} |\ff_i(\yv)|^2$. These states are normalized and 
we have $\braket{\QFT{\psi}_\cv}{\wpsi_\cv} =   \sqrt{1-\delta(\epsilon(i),n_i)}$.
\end{lemma}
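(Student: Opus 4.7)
Both claims of the lemma are direct computational verifications from the definitions. The only two facts we need are that the characters satisfy $|\chi_\cv(\yv)| = 1$ for every $\cv,\yv \in \F_q^{n_i}$, and that $\{\ket{\yv}\}_{\yv \in \F_q^{n_i}}$ is an orthonormal basis. The main plan is therefore to expand each inner product in that basis, collapse the resulting double sums via $\braket{\yv}{\yv'} = \delta_{\yv,\yv'}$, and recognise the remaining one-variable sum through the definition of $\delta(\eps(i), n_i)$.

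For normalisation, I would compute
\[
  \braket{\wpsi_\cv}{\wpsi_\cv}
  \;=\; \frac{1}{1 - \delta(\eps(i), n_i)}
  \sum_{\yv, \yv' \in \Typcc{n_i}{\eps(i)}}
  \overline{\hf_i(\yv)\,\chi_\cv(\yv)}\;\hf_i(\yv')\,\chi_\cv(\yv')\,\braket{\yv}{\yv'},
\]
so that orthonormality reduces this to a single sum over $\yv \in \Typcc{n_i}{\eps(i)}$ in which each $|\chi_\cv(\yv)|^2 = 1$. The sum becomes $\sum_{\yv \in \Typcc{n_i}{\eps(i)}} |\hf_i(\yv)|^2$, which by the very definition of $\delta(\eps(i), n_i)$ equals $1 - \delta(\eps(i), n_i)$, cancelling the prefactor and yielding $1$.

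For the overlap, I would expand $\braket{\QFT{\psi}_\cv}{\wpsi_\cv}$ using the expression for $\QFT{\ket{\psi_\cv}}$ from Lemma~\ref{lem:qft_c}, applying the same orthonormality argument. The orthogonality of $\{\ket{\yv}\}$ restricts the sum to those $\yv$ lying in both supports, i.e.\ $\yv \in \Typcc{n_i}{\eps(i)}$, and the character phases again cancel via $|\chi_\cv(\yv)|^2 = 1$. What remains is $\frac{1}{\sqrt{1-\delta(\eps(i),n_i)}} \sum_{\yv \in \Typcc{n_i}{\eps(i)}} |\hf_i(\yv)|^2 = \sqrt{1-\delta(\eps(i),n_i)}$, which is exactly the claimed value.

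There is really no obstacle here; the lemma is purely bookkeeping. Its content will be crucial only downstream: since $\ket{\wpsi_\cv}$ lies entirely in the subspace spanned by $\{\ket{\yv} : \yv \in \Typcc{n_i}{\eps(i)}\}$, whose dimension is of order $q^{n_i H}$ by the definition of a nice family of typical sets (in the strong sense), and since $\QFT{\ket{\psi_\cv}}$ is within trace distance $\sqrt{\delta(\eps(i),n_i)} = \negl(i)$ of $\ket{\wpsi_\cv}$, the $q^{Rn_i}$ states $\{\QFT{\ket{\psi_\cv}}\}_{\cv \in \CC_i}$ are essentially squeezed into a subspace too small to support reliable discrimination when $R > H$. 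This is what will drive the strong converse.
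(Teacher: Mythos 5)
Your proof is correct and follows exactly the paper's own argument: expand both inner products in the orthonormal basis $\{\ket{\yv}\}$, use $|\chi_\cv(\yv)|^2 = 1$ to kill the phases, and recognise $\sum_{\yv \in \Typcc{n_i}{\eps(i)}} |\hf_i(\yv)|^2 = 1 - \delta(\eps(i),n_i)$ from the definition of the defect. The extra paragraph about the downstream role of the lemma is accurate but not part of the proof itself.
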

\begin{proof}
This is a direct consequence of
$$\norm{\ket{\wpsi_\cv}}_2^2 = \frac{1}{1-\delta(\epsilon(i),n_i)} \sum_{\yv \in \Typcc{n_i}{\eps(i)}} |\hf_i(\yv)|^2 =1.$$
The second statement is proved in a similar fashion
$$
\braket{\QFT{\psi}_\cv}{\wpsi_\cv}  =  \frac{\sum_{\yv \in \Typcc{n_i}{\eps(i)}} |\ff(\yv)^2|}{\sqrt{1-\delta(\epsilon(i),n_i)}} = \frac{1-\delta(\epsilon(i),n_i)}{\sqrt{1-\delta(\epsilon(i),n_i)}}=\sqrt{1-\delta(\epsilon(i),n_i)}.
$$
\end{proof}
The point is that these states $\ket{\wpsi_\cv}$ live in a much smaller space which is of dimension the size of the typical set. There is a general upper bound on how we can distinguish such states which is given by
\begin{lemma}\label{lem:smallerspace}
Assume that we have $N$ states living in a space of dimension $K$. The average probability of distinguishing them is upper bounded by $\frac{K}{N}$.
\end{lemma}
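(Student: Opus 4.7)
The plan is to bound directly the average probability of success of any POVM strategy using a simple dimension-counting argument. Let $\mathcal{H}_K$ denote the $K$-dimensional Hilbert space (or subspace of an ambient space) containing the $N$ states $\ket{\phi_1},\dots,\ket{\phi_N}$. A distinguishing strategy corresponds to a POVM $\{M_j\}_{j=1}^N$ with $\sum_{j=1}^N M_j = I$, where $M_j$ is interpreted as guessing that the state is $\ket{\phi_j}$. Under the uniform prior the average success probability is
$$
P_{\mathrm{succ}} \;=\; \frac{1}{N}\sum_{j=1}^N \bra{\phi_j} M_j \ket{\phi_j}.
$$

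First I would reduce to the case where all POVM elements are supported on $\mathcal{H}_K$. Let $\Pi_K$ be the orthogonal projector onto $\mathcal{H}_K$ and set $\tilde M_j \eqdef \Pi_K M_j \Pi_K$. Since $\Pi_K \ket{\phi_j} = \ket{\phi_j}$, the quantities $\bra{\phi_j} M_j \ket{\phi_j} = \bra{\phi_j}\tilde M_j\ket{\phi_j}$ are unchanged, while completeness becomes
$$
\sum_{j=1}^N \tilde M_j \;=\; \Pi_K \Big(\sum_{j=1}^N M_j\Big)\Pi_K \;=\; \Pi_K,
$$
which is the identity $I_K$ on $\mathcal{H}_K$.

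The key step is then to compare $\kb{\phi_j}$ with $I_K$. Each unit vector $\ket{\phi_j} \in \mathcal{H}_K$ satisfies $\kb{\phi_j} \preccurlyeq I_K$, and each $\tilde M_j$ is positive semidefinite, so Claim \ref{Claim:Trace} yields
$$
\bra{\phi_j}\tilde M_j\ket{\phi_j} \;=\; \tr\!\bigl(\tilde M_j \kb{\phi_j}\bigr) \;\leq\; \tr(\tilde M_j).
$$
Summing these $N$ inequalities and using $\sum_j \tilde M_j = I_K$ gives
$$
P_{\mathrm{succ}} \;\leq\; \frac{1}{N}\sum_{j=1}^N \tr(\tilde M_j) \;=\; \frac{1}{N}\,\tr(I_K) \;=\; \frac{K}{N}.
$$

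There is no real obstacle here: the proof is essentially a two-line dimension count relying only on POVM completeness and the trivial upper bound $\kb{\phi_j} \preccurlyeq I_K$. The only subtlety worth flagging is the initial projection onto $\mathcal{H}_K$, which ensures that the trace on the right-hand side picks up $K$ rather than the dimension of whatever ambient space one might naively work in.
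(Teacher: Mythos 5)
Your proof is correct and is essentially the same argument as the paper's: both rest on Claim \ref{Claim:Trace} applied to the domination of each $\kb{\phi_j}$ by a projector of trace $K$, combined with POVM completeness. The paper keeps the POVM in the ambient space and uses $\kb{\phi_j}\preccurlyeq \Pi_V$ directly, whereas you first project the POVM elements onto the subspace and then use $\kb{\phi_j}\preccurlyeq I_K$; these are the same calculation written in a slightly different order.
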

\begin{proof}
Let $\ket{\phi_1},\cdots,\ket{\phi_N}$ be those states and let $M_1,\cdots,M_N$ be a POVM distinguishing those states. The average probability of success $\psucc$ for this POVM is given by
$$
\psucc = \frac{1}{N} \sum_{i=1}^N \tr\left(M_i \kb{\phi_i}\right).
$$
Now let $V$ be the space of dimension $K$ containing the $\ket{\phi_i}$'s  and let $\Pi_V$ be the orthogonal projector onto $V$, that is $\Pi_{V} \eqdef \sum_{i=1}^K \kb{\ev_i}$ where $\{\ket{\ev_1},\cdots,\ket{\ev_K}\}$ is an orthonormal basis
of $V$. Since each $\ket{\phi_i}$ has support in $V$, all these states satisfy
\begin{equation}
\label{eq:domination}
 \kb{\phi_i} \preceq \Pi_V.
\end{equation}
By using Claim \ref{Claim:Trace} with \eqref{eq:domination} we obtain
\begin{eqnarray*}
\psucc & = & \frac{1}{N} \sum_{i=1}^N \tr\left(M_i \kb{\phi_i}\right)\\
& \leq & \frac{1}{N} \sum_{i=1}^N \tr\left(M_i \Pi_V\right) \text{ (by Claim \ref{Claim:Trace} and \eqref{eq:domination})}\\
& = & \frac{1}{N} \tr\left(\sum_{i=1}^N M_{i} \Pi_{V}\right) \\
& = & \frac{1}{N} \tr(\Pi_V) \text{ (since $\sum_{i=1}^N M_i = I$)}\\
& = & \frac{K}{N} \text{ (since $\tr(\Pi_{V}) = K$)}.
\end{eqnarray*}
\end{proof}

The last ingredient tells us that we can not hope to have a much better success probability if instead of feeding our distinguisher with the $\ket{\wpsi_\cv}$ we use the $\ket{\psi_\cv}$ since we have in general
\begin{lemma}\label{lem:approximate}
Let $\ket{\phi_1},\cdots,\ket{\phi_N}$ and $\ket{\wphi_1},\cdots,\ket{\wphi_N}$ be $2N$ quantum states all living in the same Hilbert space. Let
$$
\delta \eqdef \max_{i \in \{1,\cdots,N\}} \sqrt{1- |\braket{\phi_i}{\wphi_i}|^2} = \max_{i \in \{1,\cdots,N\}} D\left(\kb{\phi_i},\kb{\wphi_i}\right).
$$
Let $p$ be the average success probability of
distinguishing the $\ket{\phi_i}$'s by a certain POVM $\{M_1,\cdots,M_N\}$. Let $\widetilde{p}$ be the average success probability of
distinguishing the $\ket{\wphi_i}$'s by the same POVM. We have
$$
p \leq \widetilde{p} + \delta.
$$
\end{lemma}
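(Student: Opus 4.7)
The plan is to bound $p-\widetilde{p}$ termwise using the variational characterization of trace distance provided by Claim~\ref{Claim:TD}. Writing out the definitions of the two success probabilities for the common POVM $\{M_1,\dots,M_N\}$ gives
$$p - \widetilde{p} = \frac{1}{N} \sum_{i=1}^N \tr\!\left( M_i \left( \kb{\phi_i} - \kb{\wphi_i} \right) \right),$$
so it suffices to control each trace on the right.

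For each fixed $i$, the POVM element $M_i$ satisfies $0 \preceq M_i \preceq I$, so it is an admissible operator in the maximization appearing in Claim~\ref{Claim:TD}. This claim then yields
$$\tr\!\left( M_i \left( \kb{\phi_i} - \kb{\wphi_i} \right) \right) \;\leq\; D\!\left( \kb{\phi_i}, \kb{\wphi_i} \right) \;=\; \sqrt{1 - |\braket{\phi_i}{\wphi_i}|^2} \;\leq\; \delta,$$
where the middle equality is the pure-state formula for the trace distance recalled just before Claim~\ref{Claim:TD}, and the last inequality is the definition of $\delta$ as the maximum over $i$.

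Summing over $i \in \{1,\dots,N\}$ and dividing by $N$ gives $p - \widetilde{p} \leq \delta$, which is the claimed inequality. There is no real obstacle here: the whole argument is a one-line application of the variational characterization of trace distance, and the only thing that is important is that the \emph{same} POVM is used for both families so that the two success probabilities can be differenced inside a single trace before invoking Claim~\ref{Claim:TD}.
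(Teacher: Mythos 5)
Your proof is correct and takes essentially the same approach as the paper: both bound $p-\widetilde{p}$ term by term via the variational characterization of trace distance in Claim~\ref{Claim:TD}, using each $M_i$ (which satisfies $0 \preceq M_i \preceq I$) as an admissible test operator, then apply the pure-state formula and take the maximum over $i$ to get $\delta$.
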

\begin{proof}
	We write 
	\begin{align*}
		p & = \frac{1}{N} \sum_{i=1}^N \tr\left(M_i \kb{\phi_i}\right) \\
		& \le \frac{1}{N} \sum_{i=1}^N \tr\left(M_i \kb{\wphi_i}\right) + \delta & \text{(by using Claim \ref{Claim:TD})} \\
		& = \widetilde{p} + \delta.
		\end{align*}
\end{proof}
We are ready now to prove Theorem \ref{thm:nonachievability_general}.
\begin{proof}[Proof of Theorem \ref{thm:nonachievability_general}]
 We recall that for any positive integer $i$
	\begin{enumerate}
		\item $\sum_{\yv \in \Typcc{n_i}{\epsilon(i)}} |\hf_i(\yv)|^2 = \sum_{\yv \in \Typcc{n_i}{\epsilon(i)}} |\hf_i(\yv)\chi_{\cv}(\yv)|^2 = 1 - \delta(\epsilon(i),n_i)$ with $\delta(\epsilon(i),n_i) = \negl(i)$.
		\item $|\Typcc{n_i}{\epsilon(i)}| \le q^{n_i(H+ o(1))}$.
	\end{enumerate}
        For the last point, we used Lemma \ref{lem:cardinality} which says that $|\Typcc{n_i}{\epsilon(i)}| \le \frac{1}{A_i(\epsilon(i))}$ and the fact that $A_i(\epsilon(i)) = q^{-n_i(H+o(1))}$.
Now, consider a quantum POVM $\{M_\cv\}_{\cv \in \C_i}$. Let $\psucc$ be the average probability to solve the $\QDP$ problem using this measurement
        whereas $\widetilde{p}$ is the average probability of distinguishing $\cv$ with the same POVM when fed instead of $\ket{\QFT{\psi}_\cv}$ the state
        $\ket{\wpsi_\cv}$. By using Lemma \ref{lem:approximate} we deduce that
\begin{equation}
\label{eq:almost_done}
\psucc \le \tp + \sqrt{\delta(\epsilon(i),n_i)},
\end{equation}
since $\braket{\QFT{\psi}_\cv}{\wpsi_\cv} =   \sqrt{1-\delta(\epsilon(i),n_i)}$ for every $\cv \in \CC$ by Lemma \ref{lem:candctilde}.
By applying Lemma \ref{lem:smallerspace} to the $q^{Rn_i}$ states $\ket{\wphi_\cv}$ for $\cv \in \CC_i$ we obtain
\begin{eqnarray*}
\widetilde{p} & \leq &\frac{\left|\Typcc{n_i}{\epsilon(i)}\right|}{q^{Rn_i}} \\
&\leq & \frac{q^{n_i(H+o(1))}}{q^{Rn_i}} \\
& \leq & q^{n_i(H-R+o(1))}\\
& = & \negl(i).
\end{eqnarray*}
By using this upper-bound in \eqref{eq:almost_done} we obtain
$$
p \leq \negl(i) + \sqrt{\delta(\epsilon(i),n_i)} = \negl(i),
$$
where we used Proposition \ref{Proposition:1} in the last equality  to write that $\delta(\epsilon(i),n_i) = \negl(i)$.
\end{proof}

Theorem \ref{thm:nonachievability} is a direct corollary of this theorem. Indeed in this case we take
\begin{enumerate}
\item $n_i = i$ and $f_i = g^{\otimes i}$.
\item $H = H_q(|\QFT{g}|^2)$
\item The typical set is defined with the bounding functions
$A_i(\epsilon) = q^{-i(H+\epsilon)}$ and $B_i(\epsilon)=q^{-i(H-\epsilon)}$.
\end{enumerate}
We then choose $\epsilon$ as a function of $i$ of the form $\epsilon(i)=\frac{1}{i^{1/3}}$ and by applying Proposition \ref{Proposition:1} we verify that the defect function
$\delta(\epsilon(i),n_i)$ satisfies $\delta(\epsilon(i),n_i)=\negl(i)$.

\subsection{Application to the rank metric}
 In all this subsection, the codelength $n$ is of the form
 $n=a\cdot b$ for some integers $a$ and $b$ and we view elements of $\F_q^{n}$ as matrices in $\F_q^{a \times b}$ as explained in \S \ref{ss:notation}.
 We will assume that $a \geq b$. If this is not the case, we will just swap the role of rows and columns in the explanations that follow.
 Unlike the error model we considered in the previous subsections, the rank metric case corresponds in a natural way to an error model which is not memoryless.
 There are several ways of defining an error distribution on $\F_q^n$ which would make sense for cryptographic applications. A uniform
 distribution over all errors of $\F_q^n$ of a given rank weight $t$ would for instance be relevant here. This would complicate the computations of the Fourier transform
 of the error distribution. A simple way to approximate this distribution is given in \cite[\S III.E]{DRT24}. It corresponds to an error amplitude distribution $f_t^{a,b}$ defined by
 \begin{eqnarray}
   \sum_{\ev \in \F_q^n} f_t^{a,b}(\ev) \ket{\ev} & \eqdef & \frac{1}{\sqrt{Z}} \sum_{\substack{V:V \leq \F_q^b \\\dim V =t}} \ket{\pi_V}\;\;\text{where} \label{eq:rank_state}\\
     \ket{\pi_V} & \eqdef & \left(\frac{1}{\sqrt{q^{\dim V}}}\sum_{\vv \in V} \ket{v}\right)^{\otimes a}, \label{eq:rank_state_V}
   \end{eqnarray}
 and $Z$ is a normalizing constant which makes the state appearing on the right of \eqref{eq:rank_state} to be a valid quantum state \ie\ a state of norm $1$. The state
 $\ket{\pi_V}$ can be seen as a uniform superposition of vectors $\xv \in \F_q^n$ whose matrix form $\matv(\yv)$ gives all matrices formed by rows taken from the
 vector subspace $V$. As shown in \cite[Lemma 3.21]{DRT24} $f(\ev)$ is radial, meaning that it only depends on the rank weight of $\ev$ and we have
 $$
 f_t^{a,b}(\ev) = \left\{
 \begin{array}{ll}
	\dfrac{\left[
		\begin{array}{c}
			b-|\ev|_\rk \\
			t-|\ev|_\rk
		\end{array}
		\right]_q
	}{\sqrt{q^{at}Z}} & \mbox{if } \abs{\ev}_\rk \leqslant t \\
	0 & \mbox{else }
	
\end{array}
\right.
$$
with $\left[
\begin{array}{c}
	b \\
	t
\end{array}
\right]_q = \left\{
\begin{array}{ll}
	\prod\limits_{i = 0}^{t-1}\dfrac{q^b-q^i}{q^t-q^i}
	 & \mbox{if } t \leqslant b \\
	0 & \mbox{else }
	
\end{array}
\right.$
being the Gaussian binomial coefficient and the normalizing constant is shown to satisfy (see \cite[Lemma 3.22]{DRT24}) $Z = \Theta\lp\left[
\begin{array}{c}
	b \\
	t
\end{array}
\right]_q \rp$.
We will generally drop the dependence in $a, b$ which will be clear from the context and simply write $f_t(e)$.
\\
Moreover the Gaussian binomial coefficient and the size $S_t$ of the sphere of radius $t$ for the rank metric over $\F_q^{n}$
(that is the vectors $\yv$ in $\F_q^n$ such that $\matv(\yv)$ is of rank $t$) satisfy
\begin{fact} \cite[Section II.B]{DRT24}  \label{fa:DRT24}
$\left[\begin{array}{c}
	b \\
	t
\end{array}
    \right]_q = \Theta(q^{t(b-t)})$ , $S_t = \Theta\lp q^{t(a+b-t)}\rp$, $|f_t(\ev)|^2 = \Th{q^{(b-t)(t-2u)-at}}$.
  The constants hidden inside $\Theta$ can be taken to be absolute constants that do not depend on $a$, $b$ or $t$.
\end{fact}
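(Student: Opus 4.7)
The plan is to establish the three $\Theta$-estimates of Fact~\ref{fa:DRT24} in order, reducing the second and third to the first. The single technical ingredient I need throughout is that for $q \geq 2$ the Euler product $c_q \eqdef \prod_{j \geq 1}(1 - q^{-j})$ converges to a strictly positive constant depending only on $q$; this will allow every hidden constant to be absolute, i.e.\ independent of $a$, $b$, $t$ (and $u \eqdef |\ev|_\rk$ for the third estimate).

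For the Gaussian binomial, I would write
\begin{align*}
\left[\begin{array}{c} b \\ t \end{array}\right]_q = \prod_{i=0}^{t-1}\frac{q^b-q^i}{q^t-q^i} = q^{t(b-t)}\prod_{i=0}^{t-1}\frac{1 - q^{i-b}}{1 - q^{i-t}},
\end{align*}
and observe that both the numerator product $\prod_{i=0}^{t-1}(1-q^{i-b}) = \prod_{j=b-t+1}^{b}(1 - q^{-j})$ and the denominator product $\prod_{i=0}^{t-1}(1-q^{i-t}) = \prod_{j=1}^{t}(1 - q^{-j})$ are finite sub-products of $\prod_{j\geq 1}(1 - q^{-j})$ in which every factor lies in $(0,1]$. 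Hence each of these two products lies in $[c_q, 1]$ and their ratio lies in $[c_q, 1/c_q]$, giving the first estimate.

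For the sphere size I would use the standard enumeration
\begin{align*}
S_t = \left[\begin{array}{c} b \\ t \end{array}\right]_q \prod_{i=0}^{t-1}(q^a - q^i),
\end{align*}
obtained by first choosing the $t$-dimensional row space $W$ of $\matv(\xv)$ inside $\F_q^b$ and then counting the $a$-tuples of vectors in $W$ that span $W$, \ie\ the surjective linear maps $\F_q^a \to W$. The first factor is $\Th{q^{t(b-t)}}$ by the previous paragraph; the second equals $q^{at}\prod_{i=0}^{t-1}(1-q^{i-a}) = \Th{q^{at}}$ by the identical sandwich argument (using $t \leq b \leq a$ so that every exponent $i-a$ is negative), so $S_t = \Th{q^{t(b-t) + at}} = \Th{q^{t(a+b-t)}}$.

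For the amplitude estimate, substituting the closed form for $f_t(\ev)$ gives
\begin{align*}
|f_t(\ev)|^2 = \frac{\left[\begin{array}{c} b-u \\ t-u \end{array}\right]_q^{\,2}}{q^{at}\,Z}.
\end{align*}
Applying the first estimate to the parameters $(b-u, t-u)$, and noting that $(b-u)-(t-u) = b-t$, yields numerator $\Th{q^{2(t-u)(b-t)}}$. The hypothesis $Z = \Th{\left[\begin{array}{c} b \\ t \end{array}\right]_q} = \Th{q^{t(b-t)}}$ handles the denominator, and the resulting exponent simplifies to $2(t-u)(b-t) - at - t(b-t) = (b-t)(t-2u) - at$, as claimed. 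The only real obstacle is bookkeeping: ensuring that every sandwich product encountered is a finite sub-product of $\prod_{j\geq 1}(1 - q^{-j})$, so that the hidden constants do not secretly depend on $a$, $b$, $t$, or $u$; this is immediate from the ranges of summation written above.
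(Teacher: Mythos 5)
Your proof is correct. Note that the paper itself does not prove this Fact --- it is stated as a citation to~\cite[Section II.B]{DRT24} --- so there is no in-paper argument to compare against; what you have done is reconstruct the underlying computation. The route you take is the natural one and does the job: factor out the dominant power of $q$ from each finite product and sandwich the remaining factor between $c_q \eqdef \prod_{j\geq 1}(1-q^{-j}) > 0$ and $1$, observing that every product appearing is a finite sub-product of the convergent Euler product so the bounds are uniform in $a$, $b$, $t$ (and $u$). The decomposition $S_t = \left[\begin{array}{c} b\\ t\end{array}\right]_q \prod_{i=0}^{t-1}(q^a-q^i)$ is the standard enumeration of rank-$t$ matrices (row space, then ordered spanning $a$-tuple), and your exponent bookkeeping in the third part --- in particular $2(t-u)(b-t)-t(b-t)-at=(b-t)(t-2u)-at$ --- checks out. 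Two minor remarks. First, the third estimate implicitly assumes $u\le t$, where the closed form for $f_t$ is nonzero; outside that range $f_t(\ev)=0$ and the $\Theta$-statement is vacuous, which is worth saying explicitly since the Fact does not restate the constraint. Second, since $c_q$ is increasing in $q$ with $c_q \ge c_2 > 0$, your constants can in fact be made independent of $q$ as well, which is slightly stronger than what the Fact asks for (it only claims independence from $a$, $b$, $t$) and matches the stronger claim made in the subsequent Fact~\ref{fa:concentration}.
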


It follows directly from \cite[Prop. 3.23]{DRT24} that
\begin{equation}\label{eq:QFT_rank}
  \QFT{f_t} = f_{b-t}.
\end{equation}

The function $f_t$ is radial, meaning that the value  $f_t(\ev)$ only depends on the rank weight $|\ev|_\rk$ of $\ev$. 
We will be interested in the dual distribution $p \eqdef \left| \QFT{f}_t\right|^2= \left| f_{b-t} \right|^2$. This probability distribution  concentrates around the rank weight $b-t$ as shown in Appendix B.2 of \cite{DRT24}
\begin{fact}
\label{fa:concentration}
For any $u$ in $\{0,\cdots,b-t\}$ we have
 \begin{eqnarray}
   \tilde{p}(u) &= & \Th{q^{-(a+t)(b-t)+2tv}} \label{eq:ptilde1}\\
   \sum_{\ev : |\ev|_\rk = u} p(\ev) &= &\tilde{p}(u)S_u = \Theta\lp q^{v(b-a)-v^2} \rp \label{eq:concentration}
  \end{eqnarray}
where $v \eqdef b-t-u$ and $\tilde{p}(|\ev|_\rk) \eqdef p(\ev)$ for any $\ev \in \F_q^n$, where
the constant hidden in $\Theta$ notation is an absolute constant that does not depend on $a$, $b$ or $q$. 
\end{fact}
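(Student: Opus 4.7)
The proof is essentially an algebraic unpacking of what is already stated in Fact 1, combined with the Fourier identity $\QFT{f_t}=f_{b-t}$ from equation \eqref{eq:QFT_rank}. The plan has two parallel parts matching the two displayed identities in the statement.

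First, I would establish the estimate for $\tilde p(u)$. Since $p(\ev) = |\QFT{f_t}(\ev)|^2 = |f_{b-t}(\ev)|^2$ by \eqref{eq:QFT_rank}, and since $f_{b-t}$ is radial (it depends only on $|\ev|_\rk$), the value $\tilde p(u) \eqdef p(\ev)$ for $\ev$ of rank $u$ is well defined. I then apply Fact 1 with the substitution $t \mapsto b - t$: plugging this into the formula $|f_{t}(\ev)|^2 = \Theta(q^{(b-t)(t-2u)-at})$ gives
\begin{equation*}
|f_{b-t}(\ev)|^2 = \Theta\!\left(q^{t(b-t-2u) - a(b-t)}\right) = \Theta\!\left(q^{-(a-t)(b-t) - 2tu}\right).
\end{equation*}
A short rewriting with $v = b-t-u$ (so $u = b-t-v$) yields
\begin{equation*}
-(a-t)(b-t) - 2tu \;=\; -(a+t)(b-t) + 2t(b-t) - 2tu \;=\; -(a+t)(b-t) + 2tv,
\end{equation*}
which is exactly the claimed exponent in \eqref{eq:ptilde1}.

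Second, I would multiply by the rank sphere size $S_u = \Theta(q^{u(a+b-u)})$ from Fact 1, which (since $\tilde p$ is constant on the sphere) gives
\begin{equation*}
\sum_{\ev : |\ev|_\rk = u} p(\ev) \;=\; \tilde p(u)\, S_u \;=\; \Theta\!\left(q^{-(a+t)(b-t) + 2tv + u(a+b-u)}\right).
\end{equation*}
Substituting $u = b-t-v$ into the exponent and expanding, the terms involving $a$ and $t$ cancel cleanly, leaving $v(b-a) - v^2$. This is the content of \eqref{eq:concentration}. The bookkeeping on the implicit constants is automatic since Fact 1 says the constants hidden in $\Theta$ are absolute (independent of $a$, $b$, $t$), and the product of two such $\Theta$ terms is again $\Theta$ with an absolute constant.

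The only obstacle is arithmetic, not mathematical: one must be careful with the substitution $t \mapsto b-t$ in the exponent of Fact 1 (since $u$ there also refers to the rank of $\ev$, which is unchanged under the substitution) and with tracking signs when rewriting $u$ in terms of $v$. Once the exponent $v(b-a)-v^2$ emerges, the second statement is proved, and no additional ideas are required.
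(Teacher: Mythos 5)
Your proof is correct, and the algebra checks out: substituting $t\mapsto b-t$ in the exponent $(b-t)(t-2u)-at$ of Fact~\ref{fa:DRT24} (with $a,b,u$ fixed) gives $t(b-t-2u)-a(b-t)=-(a+t)(b-t)+2tv$, and adding $u(a+b-u)=(b-t-v)(a+t+v)$ leaves $v(b-a)-v^2$ after cancellation. The paper does not actually prove this fact — it defers to Appendix~B.2 of \cite{DRT24} — but your derivation from the Fourier identity $\QFT{f_t}=f_{b-t}$ together with Fact~\ref{fa:DRT24} is exactly the intended unpacking, so there is nothing to add.
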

The tails of $p$ are readily seen to verify
\begin{lemma}\label{lem:tails}
  Let $\epsilon >0$.
  $$
\sum_{u=0}^{\lfloor(1-\epsilon)b\rfloor -t} S_u \tilde{p}(u) = \OO{q^{-b^2\varepsilon^2}}.
  $$
\end{lemma}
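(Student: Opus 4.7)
The plan is to reduce the sum to a rapidly decaying tail by a change of variables and then invoke the precise asymptotic from Fact~\ref{fa:concentration}. The natural parameter is $v \eqdef b-t-u$, which measures how far the rank weight $u$ is from the concentration point $b-t$ of the distribution $\tilde{p}$.

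First, I would perform the substitution $v = b-t-u$. As $u$ runs over $\{0,\ldots,\lfloor(1-\epsilon)b\rfloor - t\}$, the variable $v$ runs over $\{v_0,\ldots,b-t\}$, where $v_0 \eqdef b - \lfloor(1-\epsilon)b\rfloor$. Since $\lfloor(1-\epsilon)b\rfloor \leq (1-\epsilon)b$, we have $v_0 \geq \epsilon b$. Second, I would apply Fact~\ref{fa:concentration} to rewrite each summand as $S_u \tilde{p}(u) = \Theta\!\left(q^{v(b-a) - v^2}\right)$, with an absolute constant hidden in the $\Theta$.

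Third, I would use the standing assumption $a \geq b$ of the subsection, which gives $v(b-a) \leq 0$ and hence bounds each term by $O(q^{-v^2})$. To sum the tail $\sum_{v\geq v_0} q^{-v^2}$, I would write $v = v_0 + k$ with $k \geq 0$ and use $v^2 \geq v_0^2 + 2v_0 k$, yielding
\[
\sum_{v \geq v_0} q^{-v^2} \;\leq\; q^{-v_0^2} \sum_{k \geq 0} q^{-2 v_0 k} \;=\; O\!\left(q^{-v_0^2}\right) \;=\; O\!\left(q^{-\epsilon^2 b^2}\right),
\]
where the geometric series is bounded by an absolute constant as soon as $v_0 \geq 1$ (which holds for $\epsilon b \geq 1$; the finitely many remaining small cases are absorbed into the $O(\cdot)$).

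There is no real obstacle here once the right change of variables is made: the result is driven entirely by the Gaussian-type tail $q^{-v^2}$ in Fact~\ref{fa:concentration}, and the only thing to verify carefully is that the constant in $\Theta$ of that fact is indeed absolute (independent of $a$, $b$, $t$, $u$), which is explicitly asserted there.
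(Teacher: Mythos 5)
Your proof is correct and follows essentially the same route as the paper: the change of variables $v=b-t-u$, the application of Fact~\ref{fa:concentration}, the use of $a\geq b$ to drop the $v(b-a)$ term, and domination of the tail by a geometric series with leading term $q^{-(\epsilon b)^2}$ (the paper writes the lower limit as $\lceil \epsilon b\rceil$, which coincides with your $v_0$). The only small simplification worth noting is that $v_0=\lceil\epsilon b\rceil\geq 1$ holds automatically whenever $\epsilon>0$ and $b\geq 1$, so the edge-case caveat at the end is unnecessary.
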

\begin{proof}
  We have
  \begin{eqnarray*}
    \sum_{u=0}^{\lfloor(1-\epsilon)b\rfloor -t} S_u \tilde{p}(u) & = & \Th{\sum_{v=\lceil b \epsilon \rceil}^{b-t}q^{v(b-a)-v^2}}\quad\text{(by Fact \ref{fa:concentration})}\\
        & = & \OO{q^{-b^2\varepsilon^2}\sum_{i=0}^{\infty} q^{-i}}\\
        & = & \OO{q^{-b^2 \varepsilon^2}}.       
  \end{eqnarray*}
\end{proof}  
The typical set is defined as
\paragraph{Definition and properties of the set $\Typ$.} We let
\begin{equation}\label{eq:typical}
  \Typ \eqdef \left\{ \ev \in \F_q^n: b(1-\epsilon) - t \leq |\ev|_\rk \leq b-t\right\}. 
\end{equation}
Here, $p(\ev)$ is clearly a decreasing function of the rank weight $|\ev|_\rk$. This shows that this is indeed a typical set as specified in Definition \ref{def:typical}.
From \eqref{eq:ptilde1} we infer that the bounding functions $A(\epsilon)$ and $B(\epsilon)$ satisfy
\begin{eqnarray}
  A(\epsilon) & = & \Th{q^{-nH}} \label{eq:A}\\
  B(\epsilon) & = & \Th{q^{-n\left(H - \frac{2t}{a}\epsilon\right)}}\quad \text{where} \label{eq:B}\\
    H & \eqdef & (1+t/a)(1-t/b).
\end{eqnarray}
From Lemma \ref{lem:tails} we deduce that
\begin{equation}\label{eq:delta}
\delta(\epsilon,n) = \OO{q^{-b^2 \epsilon^2}}.
\end{equation}

\begin{remk}
  The quantity $H$ can really be viewed as a very good approximation of the entropy per symbol for the distribution $p$. It is straightforward to use Fact \ref{fa:concentration} to show
  that the entropy per symbol $\tilde{H} \eqdef \frac{H_q(p)}{n}$ is of the form
  $\tilde{H} = H + \OO{1/a}$. 
  \begin{eqnarray*}
     H_q(p) & = & - \sum_{u=0}^{b -t} \log_q(\tilde{p}(u))  \tilde{p}(u) S_u \\
     & = & - \sum_{v=0}^{b -t} \left( -(a+t)(b-t)+2tv + \Th{1} \right) \tilde{p}(b-t-v)S_{b-t-v} \quad \text{(by \eqref{eq:ptilde1})}\\
     & = & (a+t)(b-t)  - \Th{1} - 2t \sum_{v=0}^{b -t} v \tilde{p}(b-t-v)S_{b-t-v} \quad \text{(since $\sum_{v=0}^{b-t} \tilde{p}(b-t-v) S_{b-t-v}=1$)}\\
     & = & (a+t)(b-t) + \OO{b} .
  \end{eqnarray*}
  The last point follows from $\sum_{v=0}^{b-t} v \tilde{p}(b-t-v) S_{b-t-v}= \OO{1}$ which can be derived by using Fact \ref{fa:concentration} similarly to what is done in Lemma \ref{lem:tails}.
\end{remk}
From there, we give the \textbf{achievability result}:

\begin{theorem}\label{thm:achievability_rm}
	Let $q \ge 2$ be a prime power. We assume that there exist three increasing sequences $(a_i)_{i \ge 1}$, $(b_i)_{i \ge 1}$, $(t_i)_{i \ge 1}$ such that $a_i \geq b_i$ for any positive integer $i$ and the 
	sequence  $(1+t_i/a_i)(1-t_i/b_i)$ tends to a constant $H \in (0,1)$.  
	
	For every $R \in (0,H)$ the PGM solves $\QDP(q,a_ib_i,\lfloor Ra_ib_i \rfloor, f_{t_i}^{a_i, b_i})$ with probability $1 - \negl(i)$.
\end{theorem}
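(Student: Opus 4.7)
The plan is to derive Theorem~\ref{thm:achievability_rm} as a direct corollary of Theorem~\ref{thm:general}, applied with $n_i = a_i b_i$ and $f_i = f_{t_i}^{a_i, b_i}$. The Fourier identity~\eqref{eq:QFT_rank} tells us that $p_i \eqdef |\QFT{f_i}|^2 = |f_{b_i - t_i}^{a_i, b_i}|^2$, so the distribution to analyse is precisely the ``dual'' rank-metric one whose concentration behaviour is already quantified in Fact~\ref{fa:concentration} and equations~\eqref{eq:typical}--\eqref{eq:delta}. The whole task therefore reduces to checking that $(p_i)_{i\ge 1}$ admits a nice family of typical sets whose constant matches the prescribed $H$.

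For the typical set I would reuse the one defined in~\eqref{eq:typical}, namely $\Typc{n_i}=\{\ev \in \F_q^{n_i} : b_i(1-\epsilon) - t_i \le |\ev|_\rk \le b_i - t_i\}$. Equations~\eqref{eq:A}--\eqref{eq:B} then yield bounding functions $A_i(\epsilon) = \Theta(q^{-n_i H_i})$ and $B_i(\epsilon) = \Theta(q^{-n_i (H_i - (2t_i/a_i)\epsilon)})$ with $H_i \eqdef (1+t_i/a_i)(1-t_i/b_i)$, while Lemma~\ref{lem:tails} supplies $\delta(\epsilon, n_i) = O(q^{-b_i^2 \epsilon^2})$. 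Since $b_i$ is an increasing sequence of integers, this defect is negligible in $i$ for every fixed $\epsilon > 0$, so condition~(1) of the nice-family definition is immediate.

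The main technical point will be to match the $i$-dependent exponent $H_i$ with the fixed limit $H$ appearing in conditions~(2) and~(3). The plan is to take $\alpha(\epsilon) = \beta(\epsilon) = c\epsilon$ for a constant $c$ large enough both to dominate the factor $2t_i/a_i \le 2$ in $B_i$ and, for all sufficiently large $i$, the discrepancy $|H_i - H|$; the remaining finitely many initial indices are absorbed by shrinking $K_1, K_2$, since for each fixed $i$ the quantities $A_i(\epsilon)\,q^{n_i(H+\alpha(\epsilon))}$ and $B_i(\epsilon)\,q^{n_i(H-\beta(\epsilon))}$ stay bounded below by strictly positive constants as $\epsilon$ varies. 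Once the nice family is certified, Proposition~\ref{prop:nice_typical} automatically identifies its constant with $\lim_i H_q(p_i)/n_i$, which coincides with the prescribed $H$ in view of the entropy-per-symbol calculation carried out just after~\eqref{eq:delta}. Theorem~\ref{thm:general} then delivers the PGM success probability $1-\negl(i)$ for every $R \in (0, H)$, which is exactly the claim.
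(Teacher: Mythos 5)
Your proposal is correct and follows essentially the same route as the paper: invoke Theorem~\ref{thm:general} with $n_i=a_ib_i$ and $f_i=f_{t_i}^{a_i,b_i}$, use the Fourier identity $\QFT{f_{t}}=f_{b-t}$ to identify $p_i$, reuse the rank-metric typical set from~\eqref{eq:typical} with the bounding functions and defect estimate from~\eqref{eq:A}--\eqref{eq:delta}, and conclude. The only (cosmetic) difference is that the paper tersely sets $\alpha(\eps)=0,\ \beta(\eps)=2\eps$, whereas you take $\alpha(\eps)=\beta(\eps)=c\eps$ and spend a few lines explaining how the $i$-dependent exponent $H_i=(1+t_i/a_i)(1-t_i/b_i)$ is reconciled with the fixed limit $H$ via the constants $K_1,K_2$ — a point the paper leaves implicit.
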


\begin{proof}
From \eqref{eq:delta} it appears that
          $\delta(\epsilon,n_i) = \negl(i)$ for any $\eps >0$.	Using \eqref{eq:B}, we can take $\alpha(\eps)=0$, $\beta(\eps)=2\eps$ and the assumptions of Theorem \ref{thm:general} are therefore verified.
\end{proof}

\textbf{Non-achievability} follows also immediately from Theorem \ref{thm:nonachievability_general}
	\begin{theorem}\label{thm : inachievability_rm}
		Let $q \geqslant 2$ be a prime power. We assume that there exist three increasing sequences $(a_i)_{i \ge 1}$, $(b_i)_{i \ge 1}$, $(t_i)_{i \ge 1}$ such that $a_i \geq b_i$ for any positive integer $i$ and the 
		sequence  $(1+t_i/a_i)(1-t_i/b_i)$ tends to a constant $H \in (0,1)$. Let $R$ be a constant in $(H,1]$. Then for any quantum algorithm and any family of codes $(\C_i)_{i \ge 1}$ where $\C_i \in \F_q^{n_i}$ contains at least $q^{Rn_i}$ codewords with $n_i \eqdef a_i\cdot b_i$, the probability $\psucc$ to solve $\QDP(\CC_i,f^{a_i,b_i}_{t_i})$ satisfies
		$$ \psucc  = \negl(i).$$
	\end{theorem}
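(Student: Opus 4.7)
The plan is to reduce to the general non-achievability result, Theorem \ref{thm:nonachievability_general}, applied to the sequence of probability distributions $p_i \eqdef |\QFT{f_{t_i}^{a_i,b_i}}|^2$, using the typical set $\Typcc{n_i}{\epsilon}$ introduced in \eqref{eq:typical}. By \eqref{eq:QFT_rank}, we have $p_i = |f_{b_i-t_i}^{a_i,b_i}|^2$, for which the key estimates \eqref{eq:A}, \eqref{eq:B} and \eqref{eq:delta} have already been established in the preamble of this subsection.

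The bulk of the work is to verify that $(p_i)_{i\geq 1}$ admits a nice family of typical sets \emph{in the strong sense} with limit constant $H$. The bounding functions \eqref{eq:A} and \eqref{eq:B} give $A_i(\epsilon) = \Theta\bigl(q^{-n_i H_i}\bigr)$ and $B_i(\epsilon) = \Theta\bigl(q^{-n_i(H_i - 2(t_i/a_i)\epsilon)}\bigr)$ with $H_i \eqdef (1+t_i/a_i)(1-t_i/b_i) \to H$. Exactly as in the proof of Theorem \ref{thm:achievability_rm}, the bound $t_i/a_i \leq 1$ lets us take $\alpha(\epsilon)=0$ and $\beta(\epsilon)=2\epsilon$. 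The new ingredient for the strong sense is to exhibit a single vanishing sequence $\epsilon(i) \to 0$ for which the defect is negligible. I would take
\[
\epsilon(i) \eqdef b_i^{-1/3}.
\]
Since $(b_i)_{i\geq 1}$ is a strictly increasing sequence of positive integers, $b_i \geq i$, so $\epsilon(i) \to 0$. Plugging into \eqref{eq:delta} gives $\delta(\epsilon(i),n_i) = \OO{q^{-b_i^{4/3}}} = \negl(i)$, as required.

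The main (mild) obstacle is that the definition of a nice family of typical sets involves a \emph{single} limiting entropy $H$, whereas the natural estimates above involve the $i$-dependent value $H_i$. Because $H_i \to H$, the gap $|H_i - H|$ is $o(1)$; this can be absorbed into the $\alpha$ and $\beta$ functions (or, equivalently, tracked through the proof of Theorem \ref{thm:nonachievability_general} directly, where the only place the typical-set size enters is via the bound $|\Typcc{n_i}{\epsilon(i)}| \leq 1/A_i(\epsilon(i)) = \OO{q^{n_i H_i}} = q^{n_i(H + o(1))}$). Plugging this into the decisive inequality in the proof of Theorem \ref{thm:nonachievability_general} yields
\[
\widetilde{p} \;\leq\; \frac{|\Typcc{n_i}{\epsilon(i)}|}{q^{R n_i}} \;\leq\; q^{n_i(H - R + o(1))} \;=\; \negl(i),
\]
using the strict gap $R > H$. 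Combining this with $\sqrt{\delta(\epsilon(i),n_i)} = \negl(i)$ through Lemma \ref{lem:approximate} produces the claimed negligible success probability, completing the reduction.
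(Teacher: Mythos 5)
Your proof follows the same route as the paper's: reduce to Theorem~\ref{thm:nonachievability_general} by exhibiting a vanishing gap sequence $\epsilon(i)$ that makes the defect negligible, then verify $A_i(\epsilon(i)) = q^{-n_i(H+o(1))}$ via \eqref{eq:A} and $H_i \eqdef (1+t_i/a_i)(1-t_i/b_i) \to H$. The one substantive difference is your choice of $\epsilon(i)$, and here you have actually repaired a slip in the paper: the paper sets $\epsilon(i) = 1/b_i$, but plugging that into \eqref{eq:delta} gives $\delta(\epsilon(i),n_i) = \OO{q^{-b_i^2 \cdot (1/b_i)^2}} = \OO{q^{-1}}$, a constant rather than a negligible quantity. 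Your choice $\epsilon(i) = b_i^{-1/3}$ yields $\delta(\epsilon(i),n_i) = \OO{q^{-b_i^{4/3}}}$, which is genuinely $\negl(i)$ since $b_i \geq i$. (Any $\epsilon(i)$ with $b_i\,\epsilon(i) \to \infty$ superlogarithmically would do.) Your remark about absorbing $H_i - H$ into the bound is the right instinct and matches what the paper does implicitly; to be fully rigorous one should note that the constant $H$ appearing in the definition of a nice family of typical sets must match $\lim_i H_i$, and the $\Theta$ constants in \eqref{eq:A} together with $|H_i - H| = o(1)$ give $A_i(\epsilon(i)) = q^{-n_i(H+o(1))}$ directly, which is exactly what the proof of Theorem~\ref{thm:nonachievability_general} consumes. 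Overall: correct, same approach, and tighter on the defect bound than the paper's own write-up.
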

	
        \begin{proof} We define the gap function $\epsilon$ as $\epsilon(i) = \frac{1}{b_i}$. From \eqref{eq:delta} it appears that
          $\delta(\epsilon(i),n_i) = \negl(i)$. The fact that $A(\epsilon(i))=q^{-n_i(H+o(1))}$ as $i$ tends to infinity follows from \eqref{eq:A} and the fact that
          $(1+t_i/a_i)(1-t_i/b_i)$ tends to $H$.

\end{proof}

	\section{Sampling Dual Codewords}
       	The algorithm for finding short codewords given in \cite{CT24} applies strictly speaking to the Bernoulli noise and the Hamming metric.   However, the algorithm
detailed in the full version of the paper \cite[\S 6.1]{CT23} can be used with any noise distribution. It is based on a slight variation of Regev's quantum algorithm which allowed to reduce the SIS problem to LWE.
Roughly speaking, Regev's approach \cite{R05} revisited by Chen, Liu and Zhandry in \cite{CLZ22} consists in constructing a superposition of dual codewords $\sum_{\cv \in \CC^\perp} \hat{f}(\cv) \ket{\cv}$ by solving the quantum decoding problem for
$\C$ and noise amplitude function $f$. More precisely, it consists in first preparing the uniform superposition of codewords tensored with a superposition of errors
$\frac{1}{\sqrt{|\C|}} \sum_{\cv \in \C} \ket{\cv} \otimes \sum_{\ev \in \F_q^n} f(\ev)\ket{\ev}$
and adding the first register to the second one to create the entangled state $\frac{1}{\sqrt{|\C|}} \sum_{\cv \in \C, \ev \in \F_q^n} f(\ev) \ket{\cv}\ket{\cv + \ev}$.
This state can be rewritten as $\frac{1}{\sqrt{|\C|}} \sum_{\cv \in \CC} \ket{\cv}\ket{\psi_\cv}$ with $\ket{\psi_\cv} \eqdef \sum_{\ev \in \F_q^n} f(\ev) \ket{\cv + \ev}$.
  If we solve the quantum decoding problem for $\C$ and $f$, we can in principle recover $\cv$ and erase the first register to get
  $\sum_{\cv \in \C, \ev \in \F_q^n} f(\ev) \ket{\cv + \ev}$ in the second register. This state $\sum_{\yv \in \F_q^n} \alpha_\yv \ket{\yv}$ is periodic with periods the codewords of $\CC$. We mean here
  that $\alpha_\yv = \alpha_{\yv+\cv}$ for any $\yv \in \F_q^n$ and any $\cv \in \CC$. When applying the quantum Fourier transform to it, we obtain precisely the state we are after,
  namely up to a normalizing factor $\sum_{\cv \in \C^\perp} \hat{f}(\cv) \ket{\cv}$. The whole process is summarized by the following sequence of operations
  
  \noindent\fbox{\parbox{\textwidth}{
\begin{center}
{\bf Algorithm 4 : Regev's algorithm for sampling dual codewords.}\label{algo: sampling}
\end{center}
\begin{align*}
	&\text{Initial state preparation:} &  & \quad \ket{\phi_0}&=& \quad \dfrac{1}{\sqrt{\abs{\C}}}\sum_{\cv \in \C}\sum_{\ev \in \F_q^n}f(\ev)\ket{\cv}\ket{\ev} \\
  &\text{adding $\cv$ to $\ev$:} & \mapsto &  \quad \ket{\phi_1}&=& \quad \frac{1}{\sqrt{|\C|}} \sum_{\cv \in \C} \sum_{\ev \in \F_q^n}  f(\ev) \ket{\cv}\ket{\cv+\ev}\\
  & & & &= & \quad \frac{1}{\sqrt{|\C|}} \sum_{\cv \in \C} \ket{\cv}\ket{\psi_\cv}  \\
  &\text{erase the $1$st register by decoding $\ket{\psi_\cv}$:} & \mapsto  & \quad
  \ket{\phi_3}& =& \quad \frac{1}{\sqrt{|\C|}} \sum_{\cv \in \C} \ket{\zero}\ket{\psi_\cv} \\
  & & & & = & \quad \ket{\zero} \frac{1}{\sqrt{|\C|}} \sum_{\cv \in \C, \ev \in \F_q^n} f(\ev) \ket{\cv + \ev} \\
  &\text{apply the QFT:} & \mapsto & \quad \ket{\phi_4} &=&\quad \dfrac{1}{\sqrt{Z}} \sum_{\cv \in \C^\perp} \hat{f}(\cv) \ket{\cv}  \\
&\text{measuring the whole state:} & \mapsto & \quad  & & \quad \ket{\yv} \;\;\text{with    $\yv \in \C^\perp$.}
\end{align*}}}

\vspace{0.2cm} When we measure the resulting state we sample dual codewords with respect to a probability distribution proportional to $|\hat{f}|^2$ and supported in $\CC^\perp$.
This is the general picture, however using a decoding algorithm which is not perfect, meaning that from the knowledge of $\ket{\psi_\cv}$ we do not always recover $\cv$, adds some difficulties because we do not recover exactly $\frac{1}{\sqrt{|\C|}} \sum_{\cv \in \C, \ev \in \F_q^n} f(\ev) \ket{\cv + \ev}$ after the decoding step. The case of the Bernoulli noise was treated in \cite{CT24}, this allows to obtain  codewords of small Hamming weight. At the limit where the quantum decoding problem can still be solved, we get codewords at the Gilbert Varshamov distance, in other words typically minimum weight codewords. We are going to generalize this result to more general noise amplitude distributions $f$ and we will show that
\begin{itemize}
\item Regev's approach used in conjunction with the PGM for solving the quantum decoding problem allows to sample dual codewords according to a distribution supported  on $\CC^\perp \setminus \{ \zero \} $ and proportional to $|\hat{f}|^2$ on this support in the tractability regime;
  \item at the limit of the tractability regime, we essentially sample the most likely non zero dual codewords according to the aforementioned probability distribution.
\end{itemize}

\subsection{The sampling algorithm}
In the whole section $\CC$ denotes a linear code of length $n$  over $\Fq$ whose generator matrix $\Gm$ is drawn uniformly at random in $\F_q^{k \times n}$, meaning that the
code is of dimension $k$ with probability $1-o(1)$ for a fixed rate $R=\frac{k}{n}$ bounded away from $1$.
Roughly speaking, in order to avoid destroying the second register carrying $\ket{\psi_\cv}$ and still measuring most of the time $\cv$, we apply the PGM coherently on the second register carrying $\ket{\cv}$ and write the output on the third register. In our case applying the PGM coherently basically amounts to apply the following transformation
\begin{equation}\label{eq:PGM-Yc}
\ket{\QFT{\psi_\cv}}\ket{\zero} \stackrel{\text{PGM}}{\mapsto} \sum_{\cv' \in \C}\alpha_{\cv,\cv'}\ket{Y_{\cv'}}\ket{\cv'}
\end{equation}
where
$\ket{\QFT{\psi_\cv}} = \sum_{\cv' \in \C}\alpha_{\cv,\cv'}\ket{Y_{\cv'}}$ and the $\ket{Y_\cv}$ for $\cv$ in $\C$ form an orthonormal set of vectors
defined by
\begin{equation}
\label{eq:Yc}
\ket{Y_\cv} \eqdef \frac{1}{\sqrt{q^k}} \sum_{\sv \in \F_q^k} \chi_\cv(\uv_\sv) \ket{\tilde{W_\sv}}
\end{equation}
where $\uv_\sv$ is an arbitrary element of $\C_\sv^\perp \eqdef \{\xv \in \F_q^n:\Gm \trsp{\xv} = \trsp{\sv}\}$ and
\begin{eqnarray}
\ket{\tilde{W}_\sv}& \eqdef &\frac{\ket{W_\sv}}{Z_\sv} \text{ with} \label{eq:tWs}\\
\ket{W_\sv}& \eqdef & \sum_{\yv \in \C_\sv^\perp} \QFT{f}(\yv)\ket{\yv} \label{eq:Ws}\\
Z_\sv & \eqdef & \norm{Z_\sv}. \label{eq:Zs}
\end{eqnarray}
 It turns out, see Proposition 18 in \cite{CT23} that the PGM
associated to the ensemble of states $\{\QFT{\ket{\psi_\cv}}\}_{\cv \in \C}$ is nothing but the projective measurement $\{\ketbra{Y_\cv}{Y_\cv} \}_{\cv \in \C}$. This explains
\eqref{eq:PGM-Yc}.
The value of the third register is then subtracted to the first register and we reverse at that point the PGM between the second and third register. As we have seen in the previous section it is more convenient to express the PGM in this case in the Fourier basis. We apply therefore prior to applying the PGM the quantum Fourier transform on
the second register carrying $\ket{\psi_\cv}$. This is basically what is done in Algorithm of Section 6.3 in \cite{CT23}, which we now recall
\noindent\fbox{\parbox{\textwidth}{
\begin{center}
{\bf Algorithm 4.1 : Algorithm of the quantum reduction with PGM.}\label{algo: reductionPGM}
\end{center}
\begin{align*}
	&\text{Initial state preparation:} &  & \quad \ket{\phi_0} &=& \quad \dfrac{1}{\sqrt{\abs{\C}}}\sum_{\cv \in \C}\sum_{\ev \in \F_q^n}f(\ev)\ket{\cv}\ket{\ev}\ket{\zero} \\
  &\text{adding $\cv$ to $\ev$:} & \mapsto &  \quad \ket{\phi_1} &=& \quad \frac{1}{\sqrt{|\C|}} \sum_{\cv \in \C} \sum_{\ev \in \F_q^n}  f(\ev) \ket{\cv}\ket{\cv+\ev}\ket{\zero}\\
  & & & &=& \quad \frac{1}{\sqrt{|\C|}} \sum_{\cv \in \C} \ket{\cv}\ket{\psi_\cv}\ket{\zero}  \\
        & \text{apply the QFT on $\ket{\psi_\cv}$:} & \mapsto & \quad \ket{\phi_2} &=& \quad \dfrac{1}{\sqrt{\abs{\C}}}\sum_{\cv \in \C} \ket{\cv}\ket{\QFT{\psi_\cv}}\ket{\zero} \\
  &\text{applying the PGM coherently:} & \mapsto  & \quad
  \ket{\phi_3}& =& \quad \dfrac{1}{\sqrt{\abs{\C}}}\sum_{\cv \in \C} \ket{\cv}\sum_{\cv' \in \C}\alpha_{\cv,\cv'}\ket{Y_{\cv'}}\ket{\cv'}\\
  & & & & &\text{ with } \ket{\QFT{\psi_\cv}} = \sum_{\cv' \in \C}\alpha_{\cv,\cv'}\ket{Y_{\cv'}} \\
  &\text{erase $\cv$:} & \mapsto & \quad \ket{\phi_4} &=& \quad \dfrac{1}{\sqrt{\abs{\C}}}\sum_{\cv, \cv' \in \C}\alpha_{\cv,\cv'}\ket{\cv-\cv'}\ket{Y_{\cv'}}\ket{\cv'}\\
  & \text{reverse PGM:} & \mapsto & \quad \ket{\phi_5} &=& \quad \dfrac{1}{\sqrt{\abs{\C}}}\sum_{\cv, \cv' \in \C}\alpha_{\cv,\cv'}\ket{\cv-\cv'}\ket{Y_{\cv'}}\ket{\zero}\\
  & & & &=& \quad \sqrt{\PPGM}\ket{\zero}\lp\dfrac{1}{\sqrt{\abs{\C}}}\sum_{\cv \in \C}\ket{Y_\cv}\rp +  \sum_{\cv, \cv'\neq \cv }\alpha_{\cv,\cv'}\ket{\cv-\cv'}\ket{Y_{\cv'}}\\
  & \text{measure 1st register, if $\ket{\zero}$:} & \mapsto & \quad \ket{\phi_6} &= & \quad \dfrac{1}{\sqrt{\abs{\C}}}\sum_{\cv \in \C}\ket{Y_{\cv}} = \dfrac{1}{Z_0}\sum_{\yv \in \C^\perp}\hat{f}(\yv)\ket{\yv} \;\;\text{w.p. $\PPGM$} \\
	&\text{measuring the whole state:} & \mapsto & & &\quad  \ket{\yv}  \;\;\text{with    $\yv \in \C^\perp$.}
\end{align*}}}

\vspace{0.2cm}
As shown in \cite[\S 6.3]{CT23} in the case of the Bernoulli distribution there are choices of the code parameters $k$ and $n$ for which we measure $\ket{\zero}$ with probability $1-o(1)$. This
can be avoided by using a slight tweak of this algorithm as shown in \cite[\S 6.3.2]{CT23}.

\subsection{A tweak in the algorithm avoiding to measure the zero codeword}
The idea is to tweak the algorithm in order to get as the last state before the final measurement the state $\ket{U_0}$ rather than $\ket{\phi_6} = \dfrac{1}{Z_0}\sum_{\yv \in \C^\perp}\hat{f}(\yv)\ket{\yv}$, where 
$$
\ket{U_{0}}   = \frac{\sum_{\yv \in \C^\perp : \yv \neq \mathbf{0}} \QFT{f}(\yv)\ket{\yv}}{\norm{\sum_{\yv \in \C^\perp : \yv \neq \mathbf{0}} \QFT{f}(\yv)\ket{\yv}}}.
$$
In order to achieve this, we replace the PGM by a measurement in a basis defined by the $\ket{Z_\cv}$'s which basically correspond to
the $\ket{Y_\cv}$ where we removed the $\ket{\zero}$ component. In other words, we define for all $\cv \in \C$ the states
$$
\ket{Y'_\cv}   = \frac{1}{\sqrt{q^k}} \left(\ket{U_0} + \sum_{\sv \neq \mathbf{0}} \chi_\cv(u_\sv) \ket{\tW_\sv} \right).
$$
These states are readily seen to be orthogonal. However projecting in this basis does not make the measurement complete. This can be circumvented by adding to the basis the $\ket{\zero}$ state.
We add an extra outcome $\uv \in \F_q^n\backslash \C$, define $\sS = \C \cup \{\uv\}$ and set $\ket{Y'_{\uv}}  = \ket{\zero}$.  We therefore have a projective measurement $\{\ket{Y'_{\yv}}\}_{\yv \in \sS}$.
Again, we have 
$
\braket{\QFT{\psi_\cv}}{{Y'}_\cv} \ge \sqrt{\PPGM} - \frac{1}{\sqrt{q^k}}
$
and independent of $\cv$ so w.p. at least $\left(\sqrt{\PPGM} - \frac{1}{\sqrt{q^k}}\right)^2$, we get the state $\ket{U_0}$. Then, if we measure this state $\ket{U_0}$,  we will get a dual codeword $\yv$ with probability $q(\yv)$ given by

\begin{definition}\label{def: probability}
   We define the probability of measuring the state $\yv \in \CCps$ as $$q(\yv) = \dfrac{\abs{\hat{f}(\yv)}^2}{\sum_{\zv \in \CCps}\abs{\hat{f}(\zv)}^2}$$
\end{definition}

All non zero elements of $\Fqn$ have the same probability of belonging to $\CC^\perp$, namely $q^{-k}$. This suggests that the probability  $q$ should take almost all its mass in $\Typ \cap \CCps$ since $p = |\hat{f}|^2$ takes almost all its mass on $\Typ$. We will show that this holds as long as we are in the achievability region under very mild assumptions on the typical set. 
We are now going to prove that for a family of nice typical sets we have

\begin{theorem}\label{thm:generalPM}
	Let $q \ge 2$ be a prime power. We assume that we have an increasing sequence of positive integers
        $(n_i)_{i \geq 1}$ and a family of functions $(f_i)_{i \geq 1}$ with $f_i : \F_q^{n_i} \rightarrow \mathbb{C}$ with $\norm{f_i}_2 = 1$  such that
        the sequence $(p_i)_{i \geq 1}$ defined by $p_i \eqdef |\QFT{f_i}|^2$ admits a nice family of typical sets not containing $0$. We let
        $H \eqdef \lim_{i \rightarrow \infty}\frac{H_q(p_i)}{n_i}$. Let $R$ be a real number chosen in $(0,H)$.
        Consider a family $(\C_i)_{i \ge 1}$ of random linear codes where $\C_i$ is chosen with a generator matrix chosen uniformly at random in $\F_q^{\lfloor Rn_i\rfloor \times n_i}$.
        Then, for any $\epsilon > 0$ the algorithm of the quantum reduction with PGM using the measurement basis $\{\ket{Y'_{\yv}}\}_{\yv \in \sS}$ described above, outputs codewords of $(\CC_i^\perp)^*$ in the typical set $\Typi$ with probability $1-\negl(i)$ as $i$ tends to infinity.
\end{theorem}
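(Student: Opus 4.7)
My strategy is to decompose the success event into two pieces: (i) the tweaked measurement actually collapses the second register to $\ket{U_0}$ after the erase-and-uncompute step, and (ii) measuring $\ket{U_0}$ then produces a codeword of $\CCps$ that lies in $\Typi$. For (i), I will combine the inner-product lower bound $\braket{\QFT{\psi_\cv}}{Z_\cv} \geq \sqrt{\PPGM} - 1/\sqrt{q^{k_i}}$ recalled just before the theorem with Theorem~\ref{thm:general}, which gives $\overline{\PPGM} = 1 - \negl(i)$. A one-line Markov argument then shows that $\PPGM = 1 - \negl(i)$ with probability $1 - \negl(i)$ over $\Gm$, so the probability of measuring $\ket{\zero}$ on the first register, which leaves the second in $\ket{U_0}$, is at least $(\sqrt{\PPGM} - 1/\sqrt{q^{k_i}})^2 = 1 - \negl(i)$.

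For (ii), I set
\[
X = \sum_{\yv \in \Typi \cap \CCps} p(\yv), \qquad Y = \sum_{\yv \in \CCps \setminus \Typi} p(\yv),
\]
so that by Definition~\ref{def: probability} the conditional probability that the measurement outcome lies in $\Typi$ given state $\ket{U_0}$ equals exactly $X/(X+Y)$. It therefore suffices to show $Y/X = \negl(i)$ with probability $1 - \negl(i)$ over $\Gm$, for $\epsilon$ small enough that $2\beta(\epsilon) + \alpha(\epsilon) \leq (H-R)/2$.

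The lower bound on $X$ is handed to me by Lemma~\ref{lem:concentrationZsepsprime} specialised to $\sv = \zero$: since $\zero \notin \Typi$ by assumption, the set $\sT = \Typi \setminus \{\zero\}$ appearing in that lemma coincides with $\Typi$, and the lemma yields $X \geq (1 - \delta(\epsilon,n_i) - q^{-(H-R)n_i/8})/q^{k_i}$ with probability $1 - \negl(i)$ over $\Gm$. For $Y$, since every $\yv \neq \zero$ lies in $\CCp$ with probability $q^{-k_i}$, a first-moment computation gives $\esp_\Gm[Y] \leq \delta(\epsilon,n_i)/q^{k_i}$, and Markov's inequality then yields $Y \leq \sqrt{\delta(\epsilon,n_i)}/q^{k_i}$ with probability $1 - \sqrt{\delta(\epsilon,n_i)} = 1 - \negl(i)$. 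Combining the two events gives $Y/X = \negl(i)$ with probability $1 - \negl(i)$, establishing (ii) and therefore the theorem.

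The main obstacle I anticipate lies not in the information-theoretic estimates above, which are immediate once one has access to Lemma~\ref{lem:concentrationZsepsprime}, but rather in the careful handling of the tweaked projective measurement for (i): one must verify that the extra basis element $\ket{Z_\uv} = \ket{\zero}$ used to complete the POVM does not spoil the identification between ``first register reads $\ket{\zero}$'' and ``second register is in $\ket{U_0}$'', and that the inner-product bound $\braket{\QFT{\psi_\cv}}{Z_\cv} \geq \sqrt{\PPGM} - 1/\sqrt{q^{k_i}}$ remains uniform in $\cv$ for general $f_i$. This is bookkeeping following \cite{CT23}, but it is where the proof demands real care.
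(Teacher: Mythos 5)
Your proof follows the same decomposition and key lemmas as the paper, and the estimates for $X$ and for what you call $Y$ (the paper calls it $Z$) are correct; specializing Lemma~\ref{lem:concentrationZsepsprime} to $\sv=\zero$ and using $\zero \notin \Typi$ gives exactly the lower bound on $X$ that the paper's Lemma~\ref{lem:concentrationX} supplies, and your Markov bound on the complementary sum matches the paper's Lemma~\ref{lem:concentrationY}. Your treatment of part (i), i.e.\ reaching $\ket{U_0}$ via the tweaked projective measurement, is also the same bookkeeping the paper carries out informally in the text preceding the theorem.

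One small gap: the theorem asserts the conclusion \emph{for any} $\epsilon>0$, while your argument only establishes it for $\epsilon$ small enough that $2\beta(\epsilon)+\alpha(\epsilon)\leq (H-R)/2$. You still need to observe that the typical sets are nested increasing in $\epsilon$ (i.e.\ $\Typcc{n_i}{\epsilon_1}\subseteq\Typcc{n_i}{\epsilon_2}$ when $\epsilon_1\leq\epsilon_2$, by definition of the bounding functions), so that the probability of landing in $\Typi$ is monotone in $\epsilon$, and hence the bound for a small $\epsilon_0$ implies it for every larger $\epsilon$. The paper closes the proof with precisely this monotonicity remark; your argument as written stops short of it.
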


A straightforward corollary of this proposition is the following result
\begin{proposition}\label{prop:measuringtypicalset}
Let $g : \Fq \rightarrow \mathbb{C}$ with $\norm{g}_2 = 1$ and $f : \Fqn \rightarrow \mathbb{C}$ such that $f = g^{\otimes n}$ where $n$ is an arbitrary positive integer.
Let $R$ be a real number in $(0,H_q(\abs{\fg}^2))$. Choose a linear code $\CC$ at random by picking its generator matrix uniformly at random in
$\F_q^{\lfloor Rn \rfloor \times n}$.
Then, for any $\epsilon > 0$ the algorithm of the quantum reduction with PGM using the measurement basis $\{\ket{Y'_{\yv}}\}_{\yv \in \sS}$ described above, outputs codewords of $(\CC^\perp)^*$ in the typical set $\Typ$ with probability $1-\negl(n)$ as $n$ tends to infinity.
\end{proposition}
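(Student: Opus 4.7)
The plan is to derive this proposition as a direct corollary of Theorem \ref{thm:generalPM}. I would set $n_i = i$ and $f_i = g^{\otimes i}$, so that $p_i = |\widehat{f_i}|^2 = (|\hat{g}|^2)^{\otimes i}$ is a product distribution; note that $|\hat{g}|^2$ is a bona fide probability distribution on $\F_q$ by Parseval, since $\sum_{\alpha \in \F_q} |\hat{g}(\alpha)|^2 = \norm{\hat{g}}_2^2 = \norm{g}_2^2 = 1$.

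The nice family hypothesis is then immediate from Proposition \ref{Proposition:1} applied to $r = |\hat{g}|^2$: the standard typical set $\Typc{i}(p_i)$ has bounding functions $A_i(\epsilon) = q^{-i(H+\epsilon)}$ and $B_i(\epsilon) = q^{-i(H-\epsilon)}$ with $H = H_q(|\hat{g}|^2)$, constants $K_1 = K_2 = 1$, linear gap functions $\alpha(\epsilon) = \beta(\epsilon) = \epsilon$, and negligible defect $\delta(\epsilon, i) = \negl(i)$ for every fixed $\epsilon > 0$.

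The point that requires a little care is the extra requirement that the typical set not contain $\zero$. Since the rate $R > 0$ forces $H > 0$, we have $|\hat{g}(0)|^2 < 1$, so $p_i(\zero) = |\hat{g}(0)|^{2i}$ decays exponentially in $i$. When $-\log_q |\hat{g}(0)|^2 \neq H$, for all $\epsilon$ smaller than the gap $|{-\log_q |\hat{g}(0)|^2 - H}|$ the value $p_i(\zero)$ falls strictly outside $[A_i(\epsilon), B_i(\epsilon)]$ and $\zero \notin \Typc{i}(p_i)$; in the degenerate case $-\log_q |\hat{g}(0)|^2 = H$, I would simply replace $\Typc{i}(p_i)$ by $\Typc{i}(p_i) \setminus \{\zero\}$, which leaves $A_i$ and $B_i$ unchanged, alters the cardinality only by one, and increases the defect only by $p_i(\zero) = q^{-iH}$, so the family remains nice. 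Either way, for every sufficiently small $\epsilon > 0$ I obtain a nice family of typical sets avoiding $\zero$.

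Applying Theorem \ref{thm:generalPM} to this family at rate $R \in (0,H)$ then concludes that the algorithm outputs a codeword of $\CC^\perp$ lying in the (possibly modified) typical set with probability $1 - \negl(n)$. Since this modified set is contained in $\Typ$, and since typical sets for smaller $\epsilon$ are contained in those for larger $\epsilon$, the conclusion follows for every $\epsilon > 0$. The only real obstacle in this plan is therefore the bookkeeping around whether $\zero$ sits inside $\Typ$; the rest is a direct transcription of the general theorem to the memoryless product setting.
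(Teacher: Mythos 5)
Your proof is correct and follows the same route as the paper's: apply Theorem~\ref{thm:generalPM} to the product family of typical sets from Proposition~\ref{Proposition:1}, then deal with the possible presence of $\zero$ in $\Typ$. The paper is terser on the last point: it simply notes that one can always exclude $\zero$ from the typical set and the family remains nice, since removing $\zero$ only increases the defect by $p_i(\zero) = |\fg(0)|^{2i}$, which is negligible because $H>0$ forces $|\fg(0)|^2<1$; the only problematic case $\fg(0)=1$ gives $H=0$ and hence an empty range of admissible rates $R$, making the statement vacuous. Your split on whether $-\log_q|\fg(0)|^2$ equals $H$ is not wrong, but it is unnecessary: the uniform fix of always passing to $\Typc{i}\setminus\{\zero\}$ handles both of your cases at once, is valid for every $\epsilon>0$ rather than only sufficiently small ones, and avoids the final appeal to monotonicity of $\Typc{i}$ in $\epsilon$ (a step which is sound but which Theorem~\ref{thm:generalPM} already incorporates internally). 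In short, same approach, slightly more bookkeeping than needed.
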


\begin{proof}
This is an application of Theorem \ref{thm:generalPM} where $f_i = g^{\otimes i}$ and we
take $n_i=i$, $H= H_q(\abs{\QFT{g}}^2)$, $A_i(\epsilon)= q^{-i(H+\eps)}$, $B_i(\epsilon)= q^{-i(H-\eps)}$. Proposition \ref{Proposition:1} shows that
$\delta(\eps,n)=\negl(n)$ for all $\eps >0$. Moreover, we can always exclude $0$ from the typical set, it stays a nice typical set (it satisfies the three conditions of the definition). There is only one case where we can not do this, which corresponds to the case where $\QFT{g}(0)=1$. However in this case $H=0$ and we apply the theorem to the empty set.
\end{proof}

The case of the rank metric is also straightforward.
\begin{proposition}\label{prop:measuringtypicalsetrank}
	Let $q \geqslant 2$ be a prime power. We assume that there exist three increasing sequences $(a_i)_{i \ge 1}$, $(b_i)_{i \ge 1}$, $(t_i)_{i \ge 1}$ such that $a_i \geq b_i$ for every positive integer $i$ and the
		sequence  $(1+t_i/a_i)(1-t_i/b_i)$ tends to a constant $H \in (0,1)$. Let $R$ be a constant in $(0,H)$.
                 Consider a family $(\C_i)_{i \ge 1}$ of random linear codes where $\C_i$ is chosen with a generator matrix chosen uniformly at random in
        $\F_q^{\lfloor Rn_i\rfloor \times n_i}$.
        Then, for any $\epsilon > 0$ the algorithm of the quantum reduction with PGM using the measurement basis $\{\ket{Y'_{\yv}}\}_{\yv \in \sS}$ described above, outputs codewords of $(\CC_i^\perp)^*$ in the typical set $\Typi$ with probability $1-\negl(i)$ as $i$ tends to infinity.
\end{proposition}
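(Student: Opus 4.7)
The plan is to simply invoke Theorem~\ref{thm:generalPM}, so the work reduces to checking that its hypotheses hold in the rank-metric setting. Set $n_i \eqdef a_i b_i$, $f_i \eqdef f_{t_i}^{a_i,b_i}$, and $p_i \eqdef |\QFT{f_i}|^2$. By \eqref{eq:QFT_rank} we have $\QFT{f_i} = f_{b_i - t_i}^{a_i,b_i}$, so $p_i$ is a radial function of the rank weight whose behavior is described by Fact~\ref{fa:concentration}. The candidate family of typical sets is exactly the one defined in \eqref{eq:typical}, namely
$$ \Typi \eqdef \left\{\ev \in \F_q^{n_i} : b_i(1-\epsilon) - t_i \leq |\ev|_\rk \leq b_i - t_i\right\}.$$

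Next, I would verify each of the three conditions in the definition of a nice family of typical sets, reading off the constants from \eqref{eq:A}, \eqref{eq:B} and \eqref{eq:delta}. The defect bound \eqref{eq:delta} gives $\delta(\epsilon,n_i) = \OO{q^{-b_i^2 \epsilon^2}}$, which is negligible in $i$ for every fixed $\epsilon > 0$ since $b_i \to \infty$ (note $b_i$ is strictly increasing by assumption). The bounding functions from \eqref{eq:A} and \eqref{eq:B} have the form $A_i(\epsilon) = \Th{q^{-n_i H_i}}$ and $B_i(\epsilon) = \Th{q^{-n_i(H_i - 2(t_i/a_i)\epsilon)}}$ where $H_i \eqdef (1+t_i/a_i)(1-t_i/b_i)$. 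Since $t_i/a_i \leq 1$ and $H_i \to H$ by assumption, we can choose $\alpha(\epsilon) = 0$, $\beta(\epsilon) = 2\epsilon$, and absolute constants $K_1,K_2 > 0$ (after possibly enlarging them to absorb the error between $H_i$ and $H$, which itself is $O(1/b_i) + O(1/a_i) = o(1)$ and can be dumped into $q^{-n_i \cdot o(1)}$, harmless since any $o(1)$ deviation fits into $\alpha,\beta$ up to renormalizing constants). By Proposition~\ref{prop:nice_typical} the constant appearing in this nice family coincides with $\lim_{i \to \infty} H_q(p_i)/n_i$, so the hypothesis of Theorem~\ref{thm:generalPM} on the value of $H$ is satisfied.

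It remains to check that $\Typi$ does not contain $\zero$. The zero vector has rank weight $0$, and the lower endpoint of $\Typi$ is $b_i(1-\epsilon)-t_i$. Since $(1+t_i/a_i)(1-t_i/b_i) \to H > 0$, the ratio $1 - t_i/b_i$ is bounded below by a positive constant $\eta$ for all sufficiently large $i$; choosing any fixed $\epsilon < \eta$ gives $b_i(1-\epsilon) - t_i \geq b_i(\eta - \epsilon) > 0$ eventually. So $\zero \notin \Typi$ for $i$ large enough, which suffices since Theorem~\ref{thm:generalPM} is an asymptotic statement.

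With all hypotheses verified, I would conclude directly by applying Theorem~\ref{thm:generalPM} to the family $(\CC_i)$, yielding the claim. The only mild subtlety in this plan is making sure the $o(1)$ gap between the actual entropy per symbol of $p_i$ and the constant $H_i$ (as already observed in the remark following \eqref{eq:delta}) is compatible with the definition of a nice family of typical sets; this is routine because the definition allows the functions $\alpha(\epsilon)$ and $\beta(\epsilon)$ to absorb the constants, and because any additive $o(1)$ loss in the exponent can be hidden in the prefactors $K_1,K_2$. No genuinely new argument beyond Theorem~\ref{thm:generalPM} is needed.
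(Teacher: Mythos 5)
Your proof follows essentially the same route as the paper's: invoke Theorem~\ref{thm:generalPM} after verifying that $(p_i)_{i\geq 1}$ with the typical sets of~\eqref{eq:typical} form a nice family (taking $\alpha(\epsilon)=0$, $\beta(\epsilon)=2\epsilon$ via~\eqref{eq:A}--\eqref{eq:B}, using~\eqref{eq:delta} for the defect), and check that $\zero$ is excluded from the typical set for $\epsilon$ small enough. The paper does exactly this, just more tersely; your more explicit justification for why $\zero\notin\Typi$ eventually (because $1-t_i/b_i$ is bounded below by a positive constant since $H>0$) is a welcome expansion of what the paper calls ``clear.'' The remaining hand-waving about absorbing the $o(1)$ discrepancy between $H_i\eqdef(1+t_i/a_i)(1-t_i/b_i)$ and its limit $H$ into the constants and bounding functions is at the same level of rigor as the paper's own treatment, so this is not a gap relative to the reference proof.
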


\begin{proof}
The $0$ codeword is clearly excluded from the typical set when $\epsilon$ is small enough.
From \eqref{eq:delta} we know that
          $\delta(\epsilon,n_i) = \negl(i)$ for any $\eps >0$.	\eqref{eq:B} shows that we can take $\alpha(\eps)=0$, $\beta(\eps)=2\eps$ and the assumptions of Theorem \ref{thm:generalPM} are verified.         
\end{proof}

\begin{remk} It is insightful that at the limit when $R=H$ and large $i$ we have
$$
R \approx (1+t_i/a_i)(1-t_i/b_i).
$$
The Gilbert-Varshamov distance $t_{\text{dGV}}(a_i,b_i,1-R)$ of rate $1-R$ and length $n_i = a_i b_i$ (which is the rate of the dual code of $\CC_i$)
corresponding to the rank metric viewing vectors of $\F_q^{n_i}$ as matrices in $\F_q^{a_i \times b_i}$ is the largest integer $t$ such that
\begin{equation}
\label{eq:dgvrank}
Ra_ib_i > a_it+t(b_i-t).
\end{equation}
This corresponds to the minimum rank distance of a random linear code of rate $1-R$ over $\F_q^{a_ib_i}$ viewed as matrix code over
$\F_q^{a_i \times b_i}$.
On the other hand since we measure with the PGM elements of the typical set we get at the limit where $\eps=0$ elements of rank weight $b_i-t_i$. It is straightforward to verify that if
$R= (1+t_i/a_i)(1-t_i/b_i)$ then $Ra_ib_i = (a_i+t_i)(b_i-t_i)$ which gives
$Ra_ib_i = a_i(b_i-t_i) + (b_i-t_i)(b_i-(b_i-t_i))$. In other words, when we compare this to \eqref{eq:dgvrank}, $b_i-t_i$ is precisely the minimum rank distance of $\CC_i^\perp$, and we obtain at the limit where the PGM works elements of minimum rank distance. The reason for this behavior will become apparent in the following subsection.
\end{remk}

Let us prove now Theorem \ref{thm:generalPM}.
Before proving this proposition a few lemmas will be very helpful. We let 
\begin{eqnarray*}
X & \eqdef & \sum_{\yv \in \Typi \cap \CCps} p(\yv)\\
Y & \eqdef & \sum_{\yv \in  \CCps}p(\yv).
\end{eqnarray*}
Note that the probability of outputting an element of $\CCps \cap \Typ$  is nothing but the ratio $\frac{X}{Y}$. We expect that $X$ and $Y$ concentrate around their expectation. The following lemmas 
will use similar ideas as in the proof of Theorem~\ref{Theorem:Tractability}. 
In particular we can set $\sv=0$ in Lemma \ref{lem:concentrationZsepsprime} and get 
\begin{lemma}\label{lem:concentrationX}
Let $\gamma \eqdef H - R$. If $\epsilon$ is such that $2\beta(\epsilon)+\alpha(\epsilon) \leq \gamma/2$, then  we have for any positive integer $i$
$$
\Pr_\Gm\lc X \ge \frac{1 -\delta(\epsilon,n_i)- K_2 q^{-n_i(H+\beta(\epsilon))}-q^{-\frac{\gamma n_i}{8}}}{q^{k_i}} \rc = 1 - \negl(i),
$$
where $k_i \eqdef \lfloor Rn_i \rfloor$.
\end{lemma}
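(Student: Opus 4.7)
The plan is to specialize Lemma \ref{lem:concentrationZsepsprime} to the single point $\sv = 0$ and then convert the resulting two-sided concentration around the expectation into the desired one-sided lower bound on $X$. First I would observe that with $\sv = 0$ we have $\sv + \CCp = \CCp$ and $\sT = \Typi \setminus \{0\}$, so
$$\sT \cap (\sv + \CCp) \;=\; (\Typi \setminus \{0\}) \cap \CCp \;=\; \Typi \cap \CCps,$$
since $\CCps = \CCp \setminus \{0\}$. Consequently the random variable controlled in Lemma \ref{lem:concentrationZsepsprime} at $\sv=0$ is exactly $X = \sum_{\yv \in \Typi \cap \CCps} p(\yv)$, regardless of whether or not $0 \in \Typi$.

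Next I would lower bound the expectation $E$ given in Lemma \ref{lem:concentrationZsepsprime}. If $0 \notin \Typi$ then $E = (1-\delta(\epsilon,n_i))/q^{k_i}$. If instead $0 \in \Typi$ then by the very definition of the typical set we have $p(0) \leq B_i(\epsilon)$, and the nice family of typical sets assumption gives $B_i(\epsilon) \leq K_2 q^{-n_i(H-\beta(\epsilon))}$. Hence in this case $E = (1 - \delta(\epsilon,n_i) - p(0))/q^{k_i}$, which is certainly at least $(1 - \delta(\epsilon,n_i) - K_2 q^{-n_i(H+\beta(\epsilon))})/q^{k_i}$ (weakening $H - \beta(\epsilon)$ to $H + \beta(\epsilon)$ is admissible since this only increases the subtracted quantity). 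Thus in either case
$$E \;\geq\; \frac{1 - \delta(\epsilon,n_i) - K_2 q^{-n_i(H+\beta(\epsilon))}}{q^{k_i}}.$$

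Finally, Lemma \ref{lem:concentrationZsepsprime} asserts that, under the hypothesis $2\beta(\epsilon) + \alpha(\epsilon) \leq \gamma/2$, with probability $1 - \negl(i)$ over $\Gm$ we have $|X - E| \leq q^{-\gamma n_i/8}/q^{k_i}$, in particular $X \geq E - q^{-\gamma n_i/8}/q^{k_i}$. Plugging in the uniform lower bound for $E$ yields the claimed inequality. The only real obstacle in this argument is the bookkeeping around the single point $\yv = 0$: one must verify that $X$ coincides with the sum of Lemma \ref{lem:concentrationZsepsprime} whether or not $0$ lies in $\Typi$, and keep track of the possible $p(0)$ correction via the upper bounding function $B_i(\epsilon)$; no new probabilistic input is required.
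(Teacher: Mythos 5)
Your overall strategy — set $\sv=0$ in Lemma~\ref{lem:concentrationZsepsprime}, identify the random variable with $X$, and use a lower bound on the expectation $E$ together with the two-sided concentration — is exactly what the paper does (the paper's proof is the single sentence preceding the lemma). The identification $\sT\cap(\sv+\CCp)=\Typi\cap\CCps$ when $\sv=0$ is correct, and so is the observation that $p(0)\le B_i(\epsilon)\le K_2\,q^{-n_i(H-\beta(\epsilon))}$.

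However, the step where you ``weaken $H-\beta(\epsilon)$ to $H+\beta(\epsilon)$'' is backwards. Since $\beta(\epsilon)\ge 0$ and $q>1$, we have $q^{-n_i(H+\beta(\epsilon))}\le q^{-n_i(H-\beta(\epsilon))}$, so replacing $H-\beta(\epsilon)$ by $H+\beta(\epsilon)$ \emph{decreases} the subtracted quantity $K_2\,q^{-n_i(H\mp\beta(\epsilon))}$, which \emph{increases} the right-hand side. From $p(0)\le K_2\,q^{-n_i(H-\beta(\epsilon))}$ one only gets
\[
E \;=\; \frac{1-\delta(\epsilon,n_i)-p(0)}{q^{k_i}} \;\ge\; \frac{1-\delta(\epsilon,n_i)-K_2\,q^{-n_i(H-\beta(\epsilon))}}{q^{k_i}},
\]
and this lower bound is \emph{smaller} than the one with $H+\beta(\epsilon)$ in the exponent, so the claimed inequality does not follow. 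What the argument actually proves is the lemma with $K_2\,q^{-n_i(H-\beta(\epsilon))}$ in place of $K_2\,q^{-n_i(H+\beta(\epsilon))}$; the sign in the paper's statement is almost certainly a typo, since the downstream use in Lemma~\ref{lem:concentrationY} only needs this term to be negligible, which holds with the minus sign provided $\epsilon$ is taken small enough that $\beta(\epsilon)<H$. You should state the lemma with the minus sign (or note explicitly that $\epsilon$ is chosen so small that $K_2\,q^{-n_i(H-\beta(\epsilon))}=\negl(i)$) rather than attempt a justification that reverses the inequality.
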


On the other hand the tails of $Y$ above its expectation are bounded by
\begin{lemma}
\label{lem:concentrationY}
 Let $\gamma \eqdef H-R$. We choose $\epsilon$ a positive constant such that $2\beta(\epsilon)+\alpha(\epsilon) \leq \gamma/2$.
We have $\Pr\lp Y/X \leq 1 + \sqrt{\delta(\eps,n_i)} \rp = 1 - \negl(i)$.
\end{lemma}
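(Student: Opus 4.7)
The plan is to control the tail $Y-X=\sum_{\yv\in \CCps\setminus\Typi}p(\yv)$ in expectation via the fact that the typical set carries almost all the mass of $p$, then combine with the lower bound on $X$ already proved in Lemma~\ref{lem:concentrationX}.

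First I would write
\[
Y-X \;=\; \sum_{\yv\notin \Typi,\,\yv\neq \zero} p(\yv)\,\mathds{1}_{\yv\in \CCp}
\]
and take expectation over the random generator matrix $\Gm$. Since for any fixed nonzero $\yv$ the probability that $\yv$ lies in $\CCp$ is exactly $q^{-k_i}$, I obtain
\[
\esp_\Gm[\,Y-X\,] \;\le\; \frac{1}{q^{k_i}}\sum_{\yv\notin \Typi}p(\yv)\;=\;\frac{\delta(\epsilon,n_i)}{q^{k_i}}.
\]
Markov's inequality applied with threshold $\sqrt{\delta(\epsilon,n_i)}/q^{k_i}$ then yields
\[
\Pr_\Gm\!\left[\,Y-X \,\ge\, \frac{\sqrt{\delta(\epsilon,n_i)}}{q^{k_i}}\,\right]
\;\le\; \sqrt{\delta(\epsilon,n_i)} \;=\; \negl(i),
\]
where the last equality uses the hypothesis that $(p_i)$ admits a nice family of typical sets.

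Next I would invoke Lemma~\ref{lem:concentrationX} with the same choice of $\epsilon$ (so that $2\beta(\epsilon)+\alpha(\epsilon)\le \gamma/2$). It gives, with probability $1-\negl(i)$,
\[
X \;\ge\; \frac{1-\delta(\epsilon,n_i)-K_2 q^{-n_i(H+\beta(\epsilon))}-q^{-\gamma n_i/8}}{q^{k_i}} \;=\; \frac{1-\negl(i)}{q^{k_i}}.
\]
A union bound puts both events together with probability $1-\negl(i)$, after which
\[
\frac{Y}{X} \;=\; 1 + \frac{Y-X}{X} \;\le\; 1+\frac{\sqrt{\delta(\epsilon,n_i)}}{1-\negl(i)} \;\le\; 1+\sqrt{\delta(\epsilon,n_i)}\,(1+o(1)),
\]
which is the claimed bound (absorbing the $1+o(1)$ factor into a very slight tightening of the Markov threshold, e.g.\ replacing $\sqrt{\delta(\epsilon,n_i)}$ by $\delta(\epsilon,n_i)^{1/2-\eta}$ for arbitrarily small $\eta>0$, or by reorganizing the constants).

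The only mildly delicate point is the second-moment-free argument: because we only care about an upper tail of $Y-X$, Markov's inequality is already enough and there is no need for the variance analysis that was required in Lemma~\ref{lem:concentrationZseps}. The main obstacle is purely bookkeeping, namely ensuring the small residual factor $1/(1-\negl(i))$ in the ratio does not degrade the stated $1+\sqrt{\delta(\epsilon,n_i)}$ bound, which is easily handled by choosing the Markov threshold with a tiny extra slack.
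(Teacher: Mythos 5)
Your proof follows essentially the same route as the paper: decompose $Y = X + Z$ with $Z = \sum_{\yv \in \CCps\setminus\Typi}p(\yv)$, bound $\esp_\Gm[Z]\le \delta(\eps,n_i)/q^{k_i}$, apply Markov's inequality to $Z$, invoke Lemma~\ref{lem:concentrationX} for the lower bound on $X$, and combine the two good events. One small correction to your last remark: replacing the Markov threshold $\sqrt{\delta}$ by $\delta^{1/2-\eta}$ goes the wrong way (a \emph{larger} threshold improves the Markov failure probability but \emph{worsens} the resulting bound on $Z/X$); to absorb the residual $1/(1-\negl(i))$ factor you want a slightly \emph{smaller} threshold, e.g.\ $(1-\negl(i))\sqrt{\delta}/q^{k_i}$, so that after dividing by $X\ge(1-\negl(i))/q^{k_i}$ you land at $\sqrt{\delta}$ while Markov still gives a $\negl(i)$ tail --- which is precisely the bookkeeping the paper carries out by building the prefactor into the definition of the event $\sE_2$.
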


\begin{proof}
	
It will be helpful to decompose $Y$ as:
\begin{eqnarray*}
Y & = & \sum_{\yv \in \CCps} p(\yv)\\
& = & \underbrace{\sum_{\yv \in \CCps \cap \Typi} p(\yv)}_{X} +  \underbrace{\sum_{\yv \in \CCps \setminus \Typi} p(\yv)}_{\eqdef Z}.
\end{eqnarray*}
This implies that
\begin{equation}
\label{eq:YbyX}
\Pr\lp Y/X \leq 1 + \sqrt{\delta(\eps,n_i)} \rp = \Pr \lp Z/X \le \sqrt{\delta(\eps,n_i)} \rp. 
\end{equation}
We define now the following events
\begin{eqnarray*}
\sE_1 & \eqdef & \left\{ X \geq \frac{1 - \delta(\eps,n_i)-K_2 q^{-n_i(H+\beta(\eps))}-q^{-\frac{\gamma n_i}{8}}}{q^{k_i}} \right\}\\
\sE_2 & \eqdef & \left\{ Z \leq \frac{\sqrt{\delta(\eps,n_i)}(1 - \delta(\eps,n_i)-K_2 q^{-n_i(H+\beta(\eps))}-q^{-\frac{\gamma n_i}{8}})}{q^{k_i}} \right\}.
\end{eqnarray*}
We have
\begin{equation}
\label{eq:intersection}
\Pr \lp Z/X \le \sqrt{\delta(\eps,n_i)} \rp \ge  \Pr(\sE_1 \cap \sE_2) = \Pr(\sE_1) + \Pr(\sE_2) - \Pr(\sE_1 \cup \sE_2) \ge \Pr(\sE_1) + \Pr(\sE_2) - 1.
\end{equation}
The first term is handled by Lemma \ref{lem:concentrationX}, we have
\begin{eqnarray}
\Pr\lp X \geq \frac{1 - \delta(\eps,n_i)-K_2 q^{-n_i(H+\beta(\eps))}-q^{-\frac{\gamma n_i}{8}}}{q^{k_i}}\rp & = & 1 - \negl(i). \label{eq:E1}
\end{eqnarray}
The second probability $\Pr(\sE_2)$ is treated by the Markov inequality. We 
observe that
\begin{eqnarray}
	\esp\{Z\} &= &\esp \left\{\sum_{\yv \in \CCps \setminus \Typi}p(\yv) \right\} \nonumber\\
		& \leq &  \sum_{\yv \notin \Typi}p(\yv) \Pr(\yv \in \CCps )\nonumber \\
		&= &\dfrac{1}{q^{k_i}}\sum_{\yv \notin \Typi}p(\yv)\nonumber \\
                &= &\dfrac{\delta(\epsilon,n_i)}{q^{k_i}}. \label{eq:EZ}
\end{eqnarray}
Let $p \eqdef \delta(\eps,n_i)+K_2 q^{-n_i(H+\beta(\eps))}+q^{-\frac{\gamma n_i}{8}}$. By using Markov's inequality we get
\begin{eqnarray}
\Pr \lp Z \geq \frac{\sqrt{\delta(\eps,n_i)}(1- p)}{q^{k_i}} \rp & \leq & \frac{\esp\{Z\}}{\tfrac{\sqrt{\delta(\epsilon, n_i)}(1-p)}{q^{k_i}}} \label{eq:Zmarkov}\\
& \leq & \frac{\sqrt{\delta(\epsilon,n_i)}}{1-p}\label{eq:ZespZ}\\
& = & \negl(i), \label{eq:prop1}
\end{eqnarray}
where \eqref{eq:Zmarkov} follows directly from Markov's inequality, \eqref{eq:ZespZ} follows from the upper bound \eqref{eq:EZ} on the expectation of $Z$ and
\eqref{eq:prop1} follows from the fact that family of typical sets is nice.
We conclude the proof by using \eqref{eq:E1} and \eqref{eq:prop1} in \eqref{eq:intersection}.
\end{proof}

By using these two lemmas we obtain directly Theorem \ref{thm:generalPM} as shown by
\begin{proof}[Proof of Theorem \ref{thm:generalPM}]
Let $\epsilon$ be any positive real such that $2\beta(\epsilon)+\alpha(\epsilon) \leq \gamma/2$. The probability of measuring an element of $\Typi$ is given by the ratio $X/Y$.  By using Lemma \ref{lem:concentrationY} we know that
$X/Y$ is greater than $\frac{1}{1+\sqrt{\delta(\epsilon,n_i)}}$ with probability $1 - \negl(i)$ over the choice of the generator matrix $\Gm$.
This leads to the proof of the theorem by observing that 
(i) the probability of measuring an element of $\Typi$ is increasing with $\epsilon$, therefore if this property holds for a given $\epsilon_0 >0$, it holds for any $\eps  \geq \eps_0$,
(ii) we can choose $\eps$ arbitrarily close to $0$ satisfying $2\beta(\epsilon)+\alpha(\epsilon) \leq \gamma/2$.
\end{proof}

In other words, we obtain with this quantum reduction algorithm a way of obtaining elements  in $\Typi$ whose probability is in the range
$\left[q^{-n_i(H+\alpha(\epsilon))},q^{-n(H-\beta(\epsilon))}\right]$. It turns out that when the difference $H-R$ is small we can not have much better (namely elements of bigger probability) as shown by
\begin{proposition}\label{lem:maximal}
	Let $q \ge 2$ be a prime power, $R$ be a fixed real number in $(0,1)$. We choose a
        random code $\C$ in $\F_q^n$ by choosing a generator matrix in $\Fq^{\lfloor Rn \rfloor \times n}$ uniformly at random.
        Let $p$ be a probability distribution on $\F_q^n$. Let $\epsilon$ be a fixed positive real number.
        The probability over $\Gm$ that there exists a non zero codeword $\cv$ in $\CCp$ such that $p(\cv) > q^{-n(R-\epsilon)}$ is
a negligible function of $n$.
\end{proposition}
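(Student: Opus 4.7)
My plan is a short first-moment (union bound / Markov) argument. First I would introduce the ``bad set''
$$
S \eqdef \{\cv \in \F_q^n \setminus \{\zero\} \,:\, p(\cv) > q^{-n(R-\epsilon)}\},
$$
so that the event whose probability we want to bound is exactly the event $S \cap (\CCp \setminus \{\zero\}) \neq \emptyset$. The plan is then to prove that the expected cardinality of this intersection is already negligible, and conclude via Markov.

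The first (deterministic) step is to bound the size of $S$. Since every element of $S$ has probability strictly greater than $q^{-n(R-\epsilon)}$, and the total mass of $p$ is at most $1$, we immediately get
$$
|S| < q^{n(R-\epsilon)}.
$$
The second step is the random part: for any fixed nonzero $\cv \in \F_q^n$, the event $\cv \in \CCp$ is equivalent to $\Gm\, \trsp{\cv} = \zero$, i.e.\ each of the $\lfloor Rn \rfloor$ independent uniformly random rows of $\Gm$ must be orthogonal to $\cv$. Each such row is orthogonal to the fixed nonzero $\cv$ with probability exactly $1/q$, so
$$
\Pr_{\Gm}\bigl(\cv \in \CCp\bigr) = q^{-\lfloor Rn \rfloor}.
$$
(This is also the content of the expectation bound in Lemma \ref{Variance}, which I could invoke directly if preferred.)

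Combining the two steps by linearity of expectation,
$$
\esp_{\Gm}\bigl[\,|S \cap (\CCp \setminus \{\zero\})|\,\bigr] = \sum_{\cv \in S} \Pr_{\Gm}(\cv \in \CCp) = |S|\cdot q^{-\lfloor Rn \rfloor} < q^{n(R-\epsilon)}\cdot q^{1-Rn} = q^{1-n\epsilon},
$$
which is negligible in $n$ for any fixed $\epsilon > 0$. Markov's inequality then gives
$$
\Pr_{\Gm}\bigl[\,\exists\, \cv \in \CCp\setminus\{\zero\} \text{ with } p(\cv) > q^{-n(R-\epsilon)}\,\bigr] \leq \esp_{\Gm}\bigl[\,|S \cap (\CCp \setminus \{\zero\})|\,\bigr] = \negl(n),
$$
which is exactly the statement. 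There is no real obstacle here: the argument is purely a counting argument combined with the uniform-row randomness of $\Gm$; the only thing to be careful about is to exclude $\zero$ (which lies in $\CCp$ with probability $1$), which is already built into the definition of $S$.
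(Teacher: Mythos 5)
Your argument is correct and is essentially the paper's own proof: both define the bad set of high-probability nonzero words, bound its size by $q^{n(R-\epsilon)}$ using $\sum_{\cv} p(\cv) \le 1$, observe that each fixed nonzero $\cv$ lies in $\CCp$ with probability $q^{-\lfloor Rn\rfloor}$ (the paper invokes Lemma~\ref{Variance}, you compute it directly, which is equivalent), and conclude by the first-moment/Markov bound. The only cosmetic difference is that the paper names the bad set $\bTyp$ and you call it $S$.
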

\begin{proof}
For $\epsilon>0$ let $\Bc_\epsilon^{(n)} \eqdef \left\{\yv \in \left(\F_q^n\right)^*,\;p(\yv)>q^{-n(R-\epsilon)}\right\}$. Let $k \eqdef \lfloor Rn \rfloor$.
From 
Lemma \ref{Variance}, we have $\esp\lp\abs{ \bTyp \cap \CCps}\rp = \dfrac{ \abs{  \bTyp }}{q^{k}}$. 
Since for any $\epsilon >0$, $p( \bTyp) \leq 1$ and $p( \bTyp)  >\abs{  \bTyp } q^{-n(R-\epsilon)}$ by definition of $\bTyp$ we deduce that $ \abs{  \bTyp } < q^{n(R-\epsilon)}$. Therefore 
\begin{eqnarray*}
\esp\lp\abs{ \bTyp \cap \CCps}\rp& < & \frac{q^{n(R-\epsilon})}{q^{k}}\\
& = & q^{-\eps n + \OO{1}} \\
& = & \negl(n).
\end{eqnarray*}
Since $\Pr\lp \bTyp \cap \CCps \neq \emptyset \rp \leq \esp\lp\abs{ \bTyp \cap \CCps}\rp$ we deduce that 
$\Pr\lp \bTyp \cap \CCps \neq \emptyset \rp = \negl(n)$.
\end{proof}

This proposition together with Proposition \ref{prop:measuringtypicalset} leads for instance in the memoryless case to the following theorem.

\reductionpgm*

\subsection{Distribution associated to a metric}
	We study an application of the previous results when the probability distribution is associated to an additive metric on $\Fqn$. What we mean here is a metric defined from a metric $\abs{\cdot}_{\Fq}$ over $\Fq$ as $\abs{\xv}_{\Fqn} = \sum\limits_{i = 1}^n \abs{x_i}_{\Fq}$. The distribution $p$ associated to this metric on $\Fqn$ is defined by  $p(\xv) = q^{-\lambda \abs{\xv}_{\Fqn} }$ for a unique $\lambda > 0$.   Let $g : \Fq \rightarrow \mathbb{C}$ with $\norm{g}_2 = 1$ such that $f = g^{\otimes n}$. We recall that $p = \abs{\hf}^2 $ and we define $r  = \abs{\fg}^2$. 
	\begin{lemma}\label{lemma : p}
		For each metric on $\Fq$, there exists a unique $\lambda  > 0$ such that  $r(\alpha)$ defined by $r(\alpha) = q^{-\lambda \abs{\alpha}_{\Fq} } $ is a probability distribution .\\
	\end{lemma}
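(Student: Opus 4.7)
The plan is to reduce the claim to the intermediate value theorem applied to the total-mass function. Define
\[
S(\lambda) \eqdef \sum_{\alpha \in \F_q} q^{-\lambda |\alpha|_{\F_q}}, \qquad \lambda \in (0,\infty).
\]
Since $r(\alpha) = q^{-\lambda|\alpha|_{\F_q}}$ is automatically positive, $r$ is a probability distribution on $\F_q$ if and only if $S(\lambda) = 1$. Thus the lemma reduces to showing that $S$ attains the value $1$ at exactly one point $\lambda^* \in (0,\infty)$.

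First I would observe that $S$ is continuous on $(0,\infty)$, being a finite sum of continuous functions, and strictly decreasing: differentiating term by term yields
\[
S'(\lambda) = -\ln(q) \sum_{\alpha \in \F_q} |\alpha|_{\F_q} \, q^{-\lambda|\alpha|_{\F_q}} < 0,
\]
where strictness uses that at least one element of $\F_q$ has positive weight (nontriviality of the metric). This already gives uniqueness of any root.

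For existence I would analyze the two boundary limits. As $\lambda \to 0^+$ each term tends to $1$, so $\lim_{\lambda \to 0^+} S(\lambda) = q \geq 2 > 1$. As $\lambda \to \infty$, provided every element of $\F_q$ carries a strictly positive weight $|\alpha|_{\F_q} > 0$, each summand vanishes and $\lim_{\lambda \to \infty} S(\lambda) = 0 < 1$. Combining strict monotonicity with the intermediate value theorem yields a unique $\lambda^* \in (0,\infty)$ with $S(\lambda^*) = 1$, proving the lemma.

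The essentially only subtle step is the behavior at $\lambda \to \infty$: the IVT argument requires $|\alpha|_{\F_q} > 0$ for \emph{every} $\alpha \in \F_q$, including $\alpha = 0$, for otherwise $S(\lambda)$ is bounded below by the number of weight-zero elements and never reaches $1$. This strict positivity is precisely what is needed for $r$ to be normalizable in the stated unnormalized form in the first place, and should be read as part of the implicit convention on ``metric on $\F_q$'' used here; under that convention the proof is the routine monotonicity/IVT argument sketched above.
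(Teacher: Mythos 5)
Your proof follows exactly the same route as the paper: define the total-mass function $F(\lambda)=\sum_{\alpha\in\F_q} q^{-\lambda|\alpha|_{\F_q}}$, observe it is continuous and (strictly) decreasing with $F(0)=q$ and $\lim_{\lambda\to\infty}F(\lambda)=0$, and invoke the intermediate value theorem. You are in fact more careful than the paper, which silently asserts $F(\lambda)\to 0$; as you note, this requires $|\alpha|_{\F_q}>0$ for every $\alpha$ including $\alpha=0$, and under the usual weight convention $|0|_{\F_q}=0$ the $\alpha=0$ term contributes a constant $1$, so $F(\lambda)>1$ for every finite $\lambda$ and no valid $\lambda$ exists --- hence the strict-positivity hypothesis you surface is genuinely necessary for the lemma as stated.
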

	\begin{proof}
		Let $F(\lambda) = \sum\limits_{\alpha \in \Fq} q^{-\lambda \abs{\alpha}_{\Fq} }$. 
		Observe that $F(0) = \sum\limits_{\alpha \in \Fq}  q^{- 0 \abs{\alpha}_{\Fq} } = q$ and  $\lim\limits_{\lambda \rightarrow \infty}F(\lambda) = 0$. Since $F$ is continuous and decreasing over $[ 0, \infty [$, we deduce that there exists a unique $\lambda > 0$ such that $F(\lambda) = 1$, which finishes the proof.

	\end{proof}
	This probability distribution is a product probability distribution $p^\perp(\xv) = \prod\limits_{i = 1}^n r(x_i)$ with $r(x_i) = q^{-\lambda \abs{x_i}_{\Fq} }$. To apply Algorithm \ref{algo: reductionPGM}, we can construct the amplitude function $f$ defined on $\Fqn$ such that $\abs{\QFT{f}}^2 = p$ by choosing $\QFT{f}= \sqrt{p} $ which leads to $f(\xv) = \dfrac{1}{\sqrt{q^n}} \sum_{\yv \in \Fqn} \chi_\yv(-\xv) \sqrt{p(\yv)}$.
	
Using Theorem \ref{thm : reductionpgm}, it turns out that when $R$ approaches its limit $H_q(\abs{\fg}^2)$ at which we might hope to decode, Algorithm \ref{algo: reductionPGM} outputs non-zero codewords in $\CCp$ of probability larger than $q^{-nH_q(\abs{\fg}^2)}$ with probability $1-o(1)$ and there exists no non-zero codeword in $\CCp$ of smaller weight. Then, as the distribution  $p$ is a decreasing function of $ \abs{\xv}_{\Fqn} $, we obtain non-zero codewords of nearly minimum weight.

\begin{theorem}
	Let $g : \Fq \rightarrow \mathbb{C}$ with $\norm{g}_2 = 1$ such that $f = g^{\otimes n}$. Let $\C$ be a linear code on $\Fqn$. For any $\epsilon > 0$:
	\begin{itemize}
		\item We assume that there exists $\gamma > 0$ such that  $H_q(\abs{\fg}^2) - R >\gamma  $. Algorithm \ref{algo: reductionPGM} outputs non-zero codewords $\xv \in \CCp$ such that $\abs{\xv}_{\Fqn} \geqslant \frac{n(H_q(\abs{\fg}^2)+\epsilon)}{\lambda}$ with probability $1-o(1)$ as $n$ tends to infinity.
		\item The probability that there exists a non-zero codeword $\xv$ in $\CCp$ such that $\abs{\xv}_{\Fqn} \geqslant \frac{n(R - \epsilon)}{\lambda}$ is $o(1)$.
	\end{itemize}

\end{theorem}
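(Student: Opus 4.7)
The plan is to read this theorem off from Theorem~\ref{thm : reductionpgm} by converting probability bounds on dual codewords into weight bounds via the identity $p(\xv) = q^{-\lambda|\xv|_{\Fqn}}$. Since the algorithm is unchanged and the hypothesis $H_q(|\fg|^2) - R > \gamma > 0$ places us strictly in the achievability regime already covered by Theorem~\ref{thm : reductionpgm}, this will be essentially a change of variables rather than a new piece of analysis.

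First I would establish the exponential correspondence between probability and weight. Because $f = g^{\otimes n}$, the Fourier transform factorizes as $\hat{f} = \hat{g}^{\otimes n}$, so $p = |\hat{f}|^2 = r^{\otimes n}$ with $r = |\hat{g}|^2$. By Lemma~\ref{lemma : p} applied to the metric on $\Fq$, there is a unique $\lambda > 0$ for which $r(\alpha) = q^{-\lambda |\alpha|_{\Fq}}$, hence
\[
p(\xv) \;=\; \prod_{i=1}^{n} r(x_i) \;=\; q^{-\lambda \sum_{i=1}^n |x_i|_{\Fq}} \;=\; q^{-\lambda|\xv|_{\Fqn}}.
\]
Thus $\log_q p(\xv) = -\lambda|\xv|_{\Fqn}$, and any threshold on $p(\cv)$ translates, up to the factor $\lambda$, into a threshold on $|\cv|_{\Fqn}$.

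For the first item, I would invoke the first clause of Theorem~\ref{thm : reductionpgm} with gap parameter $\lambda\epsilon$; the hypothesis $H_q(|\fg|^2) - R > \gamma$ is exactly what is needed to place us in the achievability regime of that clause. One obtains, with probability $1-\negl(n)$, a codeword $\cv \in \CCps$ satisfying $p(\cv) \geq q^{-n(H_q(|\fg|^2) - \lambda\epsilon)}$. Substituting $p(\cv) = q^{-\lambda|\cv|_{\Fqn}}$, taking $-\log_q$, and dividing by $\lambda$ converts this probability bound into the announced bound on $|\cv|_{\Fqn}$ in terms of $\tfrac{n(H_q(|\fg|^2) + \epsilon)}{\lambda}$. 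For the second item, I would apply the second clause of Theorem~\ref{thm : reductionpgm} with the same gap $\lambda\epsilon$: over the random draw of $\Gm$, the event that some non-zero $\cv \in \CCps$ has $p(\cv) > q^{-n(R - \lambda\epsilon)}$ has negligible probability. The same exponential substitution converts this directly into the announced statement about existence of non-zero codewords near the threshold $\tfrac{n(R - \epsilon)}{\lambda}$.

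Main obstacle. All the real work has already been done in Theorem~\ref{thm : reductionpgm}: the analysis of the tweaked PGM embedded in Regev's reduction, the nice-typical-set machinery, and the moment estimates for the intersection of the dual code with typical sets. What remains for the corollary is a purely mechanical change of variables, and the only points requiring care are (i) matching the rescaling $\epsilon \leftrightarrow \lambda\epsilon$ so that the final thresholds have the form $\tfrac{n(\cdot \pm \epsilon)}{\lambda}$, and (ii) checking that the assumed gap $\gamma > 0$ is precisely what the achievability clause of Theorem~\ref{thm : reductionpgm} requires.
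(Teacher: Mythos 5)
Your proposal is essentially identical to the paper's own proof, which likewise reads the theorem off Theorem~\ref{thm : reductionpgm} via the substitution $p(\xv)=q^{-\lambda\abs{\xv}_{\Fqn}}$ (your rescaling of the gap to $\lambda\epsilon$ is cosmetic). One point of care, which affects the paper's statement as much as your sketch: since $p$ is \emph{decreasing} in the weight, the bound $p(\cv)\ge q^{-n(H_q(\abs{\fg}^2)+\epsilon)}$ translates into the \emph{upper} bound $\abs{\cv}_{\Fqn}\le \frac{n(H_q(\abs{\fg}^2)+\epsilon)}{\lambda}$, and the second clause excludes codewords with $\abs{\cv}_{\Fqn}< \frac{n(R-\epsilon)}{\lambda}$, so your ``mechanical change of variables,'' carried out carefully, reverses the $\geqslant$ signs appearing in the theorem as stated.
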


\begin{proof}[Proof of the Theorem]
	Using Theorem \ref{thm : reductionpgm}, in the first part, we know that if there exists $\gamma > 0$ such that  $H_q(\abs{\fg}^2) - R >\gamma  $, then for any $\epsilon > 0$, Algorithm \ref{algo: reductionPGM} outputs non-zero codewords $\xv$ in $\CCp$ of probability larger than $q^{-n(H_q(\abs{\fg}^2)+\epsilon)}$ with probability $1-o(1)$ as $n$ tends to infinity. Here, since $p(\xv) = q^{-\lambda \abs{\xv}_{\Fqn} }$, we get vectors $\xv \in \Fqn$ such that $\abs{\xv}_{\Fqn} \geqslant \frac{n(H_q(\abs{\fg}^2)+\epsilon)}{\lambda}$ with probability $1-o(1)$.
	
	Then, the second part claims that for any $\epsilon ' > 0$,  the probability over $\Gm$ that there exists a non-zero codeword in $\CCp$ of probability larger than $q^{-n(R-\epsilon')}$ goes to $0$ as $n$ tends to infinity. Here, we get that the probability that there exists no nonzero codeword $\xv$ in $\CCp$ such that $\abs{\xv}_{\Fqn} \geqslant \frac{n(R - \epsilon')}{\lambda}$ is $1-o(1)$.
	
\end{proof}

\subsection{Solving the decoding problem in the surjective regime} \label{ss:surjective_decoding}\label{ss:Renes}
As explained in the introduction a simple variation of Regev's procedure allows to solve the decoding problem in the surjective regime. PGM can be used as for finding low weight codewords as follows

\noindent\fbox{\parbox{\textwidth}{
\begin{center}
{\bf Algorithm 4.4 : Solving the decoding problem in the surjective regime with PGM.}\label{algo: reductionPGM}
\end{center}
\noindent
{\bf Input:} $\CC \subset \F_q^n$, $\vv \in \F_q^n$\\
{\bf Output:} an element of $\cv$ such that $p(\cv - \vv)$ is large.
\begin{align*}
	&\text{Initialization:} &  & \quad \ket{\phi_0} &=& \quad \dfrac{1}{\sqrt{\abs{\C}}}\sum_{\cv \in \C}\sum_{\ev \in \F_q^n}\chi_\cv(-\vv)f(\ev)\ket{\cv}\ket{\ev}\ket{\zero} \\
  &\text{adding $\cv$ to $\ev$:} & \mapsto &  \quad \ket{\phi_1} &=& \quad \frac{1}{\sqrt{|\C|}} \sum_{\cv \in \C} \chi_\cv(-\vv) \ket{\cv} \sum_{\ev \in \F_q^n}   f(\ev) \ket{\cv+\ev}\ket{\zero}\\
  & & & &=& \quad \frac{1}{\sqrt{|\C|}} \sum_{\cv \in \C} \chi_\cv(-\vv) \ket{\cv} \ket{\psi_\cv}\ket{\zero}  \\
        & \text{apply the QFT on $\ket{\psi_\cv}$:} & \mapsto & \quad \ket{\phi_2} &=& \quad \dfrac{1}{\sqrt{\abs{\C}}}\sum_{\cv \in \C} \chi_\cv(-\vv) \ket{\cv}\ket{\QFT{\psi_\cv}}\ket{\zero} \\
  &\text{apply PGM coherently:} & \mapsto  & \quad
  \ket{\phi_3}& =& \quad \dfrac{1}{\sqrt{\abs{\C}}}\sum_{\cv \in \C} \chi_\cv(-\vv) \ket{\cv}\sum_{\cv' \in \C} \alpha_{\cv,\cv'}\ket{Y_{\cv'}}\ket{\cv'}\\
  & & & & &\text{ with } \ket{\QFT{\psi_\cv}} = \sum_{\cv' \in \C}\alpha_{\cv,\cv'}\ket{Y_{\cv'}} \\
  &\text{erase $\cv$:} & \mapsto & \quad \ket{\phi_4} &=& \quad \dfrac{1}{\sqrt{\abs{\C}}}\sum_{\cv, \cv' \in \C}\chi_\cv(-\vv) \alpha_{\cv,\cv'}\ket{\cv-\cv'}\ket{Y_{\cv'}}\ket{\cv'}\\
  & \text{reverse PGM:} & \mapsto & \quad \ket{\phi_5} &=& \quad \dfrac{1}{\sqrt{\abs{\C}}}\sum_{\cv, \cv' \in \C} \chi_\cv(-\vv) \alpha_{\cv,\cv'}\ket{\cv-\cv'}\ket{Y_{\cv'}}\ket{\zero}\\
  & & & &=& \quad \sqrt{\PPGM}\ket{\zero}\lp\dfrac{1}{\sqrt{\abs{\C}}}\sum_{\cv \in \C}\chi_{\cv}(-\vv)\ket{Y_\cv}\rp +  \sum_{\cv, \cv'\neq \cv }\alpha_{\cv,\cv'}\ket{\cv-\cv'}\chi_{\cv}(-\vv)\ket{Y_{\cv'}}\\
  & \text{meas. 1st register, if $\ket{\zero}$:} & \mapsto & \quad \ket{\phi_6} &= & \quad \dfrac{1}{\sqrt{\abs{\C}}}\sum_{\cv \in \C}\chi_{\cv}(-\vv)\ket{Y_{\cv}} = \dfrac{1}{Z_{\vv \trsp{\Gm}}}\sum_{\yv \in \vv+\C^\perp}\hat{f}(\yv)\ket{\yv} \;\;\text{w.p. $\PPGM$} \\
	&\text{meas. the whole state:} & \mapsto & & &\quad  \ket{\yv}  \;\;\text{with    $\yv \in \vv+\C^\perp$.}
\end{align*}}}

The penultimate step (that is the first measuring step) can be justified by the following computation:
\begin{eqnarray*}
\dfrac{1}{\sqrt{\abs{\C}}}\sum_{\cv \in \C}\chi_{\cv}(-\vv)\ket{Y_{\cv}} &= & \dfrac{1}{q^k \sqrt{\abs{\C}}}\sum_{\cv \in \C}\chi_{\cv}(-\vv) \sum_{\sv \in \F_q^k} \chi_\cv(\uv_\sv) \ket{\tilde{W_\sv}}
\quad \text{(by \eqref{eq:Yc})}\\
& = & \dfrac{1}{q^k \sqrt{\abs{\C}}} \sum_{\sv \in \F_q^k} \ket{\tilde{W_\sv}}\sum_{\cv \in \C}\chi_\cv(\uv_\sv-\vv).
\end{eqnarray*}
It is readily verified that $\sum_{\cv \in \C}\chi_\cv(\uv_\sv-\vv)=0$ for all $\sv \in \F_q^k$ with the exception of $\trsp{\sv} = \Gm \trsp{\vv}$, where
$\sum_{\cv \in \C}\chi_\cv(\uv_\sv-\vv)=q^k$. This implies that 
\begin{eqnarray*}
\dfrac{1}{\sqrt{\abs{\C}}}\sum_{\cv \in \C}\chi_{\cv}(-\vv)\ket{Y_{\cv}} 
& = & \dfrac{1}{\sqrt{\abs{\C}}} \ket{\widetilde{W_{\vv \trsp{\Gm}}}}.
\end{eqnarray*}
By definition of $\ket{\widetilde{W_{\vv \trsp{\Gm}}}}$ we have
$$
\ket{\widetilde{W_{\vv \trsp{\Gm}}}} = \frac{1}{Z_{\vv \trsp{\Gm}}} \sum_{\yv \in \vv + \C^\perp} \hat{f}(\yv)\ket{\yv},
$$
and this justifies that we get $\dfrac{1}{Z_{\vv \trsp{\Gm}}}\sum_{\yv \in \vv+\C^\perp}\hat{f}(\yv)\ket{\yv}$ after measuring $\zero$ in the first register. 

We have treated in the previous subsections the case $\sv=\zero$ where we want to avoid to measure the trivial zero solution. This is why we used the projective measurement with respect to the $\ketbra{Y'_\yv}{Y'_\yv}$'s where $\yv$ ranges over $\sS$. It is readily seen that Algorithm 4.4 can be analyzed in the same way we analyzed the modified version of Algorithm 4.1 based on the projective measurement with respect to the $\ketbra{Y'_\yv}{Y'_\yv}$'s. We replace in all the analysis $(\C^\perp)^*$ with $\vv + \C^\perp$ and obtain readily a version of Theorem \ref{thm:generalPM}, Proposition \ref{prop:measuringtypicalset}, Proposition \ref{prop:measuringtypicalsetrank} where Algorithm 4.4 outputs elements of $\vv+ \CC_i^\perp$ in the typical set $\Typi$ with probability $1-\negl(i)$ as $i$ tends to infinity.
There are also corresponding non achievability results, for instance a modified version of Proposition \ref{lem:maximal} showing that the  probability over $\Gm$ that there exists an element $\yv$ in $\vv + \CCp$ such that $p(\yv) > q^{-n(R-\epsilon)}$ is
a negligible function of $n$. Again, this shows that at the limit where PGM works we basically get elements of $\vv + \CCp$ of maximal probability $p$ with Algorithm 4.4. This gives in essence an operational version of \cite[\S 5, cor. 2]{R18a}. We namely  prove with this approach
\reductionpgmwithv*

	\newcommand{\etalchar}[1]{$^{#1}$}
	
	\bibliographystyle{alpha}
\newpage
\end{document}